 \date{}
\newtheorem{proof}{Preuve}
\newtheorem{definition}{Définition}
\newtheorem{example}{Exemple}
\newtheorem{proposition}{Proposition}
\renewcommand{\chaptermark}[1]{\markboth{#1}{}}
\begin{document}
\begin{titlepage}
\center\footnotesize RÉPUBLIQUE TUNISIENNE\\
\vspace{0.1em}\footnotesize MINISTÈRE DE L'ENSEIGNEMENT SUPÉRIEUR ET DE LA RECHERCHE SCIENTIFIQUE\\
\vspace{0.1em}\footnotesize UNIVERSITÉ DE TUNIS EL MANAR\\
\vspace{0.1em}\footnotesize FACULTÉ DES SCIENCES DE TUNIS\vspace{0.1em}

\begin{figure}[!ht]
\centering
\includegraphics[width=0.2\textwidth]{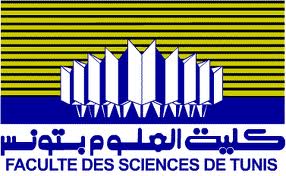}
\end{figure}

\vspace{2em}
{\bfseries\huge Thèse de Doctorat} \\
\vspace{0.25em}
{\slshape\normalsize Présenté en vue de l’obtention du diplôme de} \\
\vspace{0.25em}
{\textrm{\normalsize\textbf{Doctorat en Sciences Informatique}}} \\ 
\vspace{1em}
{\slshape\normalsize Par } \\
\vspace{1em}
{\bfseries\large Seif Allah \textsc{Ben Chaabene}} \\
\vspace{3em}\hrule
\vspace{1em}
{\textrm{\Huge\textbf{Proposition d’approches de déploiement des unités de bord de route  dans les réseaux véhiculaires}}}
\vspace{1em}
\hrule

\vspace{4em}
\textbf{Soutenu le 23/08/2021 devant le jury composé de:} \\ \vspace{1em}

%\begin{center}
\begin{table}[!ht]
\centering
\resizebox{0.9\linewidth}{!}{
\begin{tabular}{p{6cm}p{6.5cm}l}
%\textbf{{}}&  & (Présidente)\vspace{2px}\\
%\textbf{M.}}&  & (Rapporteur)\vspace{2px}\\
\textbf{M. Belhassen \textsc{ZOUARI}} &  & (Président)\vspace{2px}\\
\textbf{M. Walid \textsc{BARHOUMI}} &  & (Rapporteur)\vspace{2px}\\
\textbf{M. Djamal \textsc{BENSLIMANE}} &  & (Rapporteur)\vspace{2px}\\
\textbf{Mme. Hella \textsc{KAFFEL}} &  & (Examinateur)\vspace{2px}\\
\textbf{M. Sadok \textsc{BEN YAHIA}} &  & (Directeur de thèse)\vspace{2px}
\end{tabular}
}
\end{table}
%\end{center}

\vspace{5em}
\textbf{Au sein du laboratoire: LIPAH}

%\vspace{3.5em}

\center\large{2020-2021}
\end{titlepage}

\dominitoc
\pagenumbering{roman}
\chapter*{Dédicace}
\bigskip \bigskip \bigskip
\bigskip \bigskip \bigskip
\bigskip \bigskip \bigskip
\bigskip \bigskip \bigskip
\begin{center}
Je dédie ce travail à mes parents pour l'amour qu'ils m'ont toujours donné, leurs sacrifices et leurs tendresses. \\
Je le dédie également à mes frères, ma sœur, ma femme soumaya  pour son soutien et son encouragement et la petite Mayar "bayo".\\

Qu'ils trouvent ici le témoignage de ma gratitude et de mon respect.
\end{center}
\bigskip
\begin{flushright}\begin{small}\textit{Seif Allah Ben Chaabene}.\end{small}\end{flushright}
\newpage
\chapter*{Remerciements}
En premier lieu, je remercie Dieu le tout puissant pour ses faveurs et sa gratitude. Ensuite, je tiens à exprimer mes remerciements à Monsieur Belhssen ZOUARI, pour avoir accepté de présider le jury de la soutenance.\\

Je tiens à exprimer ma gratitude à Monsieur Walid BARHOUMI et Monsieur Djamal BENSLIMANE, pour avoir accepté de rapporter ce travail, ainsi qu'à Madame Hella KAFFEL de me faire l'honneur de faire partie du jury de cette thèse.\\

Je tiens à exprimer mon respect, ma gratitude et mes chaleureux remerciements à Monsieur Sadok BEN YAHIA, mon directeur de recherche, pour m'avoir conseillé, encouragé et soutenu tout au long de la thèse.\\

J'adresse mes remerciements les plus chaleureux à mon co-directeur de thèse Monsieur Taoufik YEFERNY, pour tous les conseils techniques et scientifiques qu'il m'a apportés et qui m'ont beaucoup aidé à accomplir mes travaux.\\

Enfin, j'adresse un merci spécial à tous les membres du département des Sciences de l'informatique de la Faculté des Sciences de Tunis. J'adresse aussi un remerciement particulier à tous mes collègues et tous ceux qui m'ont aidé de près ou de loin pour finir ce travail.

%\pagenumbering{roman}
\tableofcontents
\listoffigures
\listoftables
\newpage
\listofalgorithms
\newpage
\thispagestyle{empty}
%\vspace{17cm}
\begin{center} \LARGE{\textbf{Notations}} \end{center}
%\vfill
$\mathcal{MF} :$ Ensemble des motifs fréquents.\\
$\mathcal{MR} :$ Ensemble des motifs rares.\\
$\mathcal{MF}e :$ Ensemble des motifs fermés.\\
$\mathcal{MFF} :$ Ensemble des motifs fermés fréquents.\\
$\mathcal{MFR} :$ Ensemble des motifs fermés rares.\\
$\mathcal{GM} :$ Ensemble des motifs générateurs minimaux.\\
$\mathcal{GMR} :$ Ensemble des motifs générateur minimaux rares.\\
$\mathcal{GMF} :$ Ensemble des motifs générateur minimaux fréquents.\\
$\mathcal{MFM} :$ Ensemble des motifs fréquents maximaux.\\
$\mathcal{MRM} :$ Ensemble des motifs rares minimaux.\\
$\mathcal{MZ} :$ Ensemble des motifs zéros.\\
%\vfill

\newpage
\pagenumbering{arabic}
\chapter{Introduction générale}
\minitoc
\markboth{Introduction générale}{Introduction générale}
%\addcontentsline{toc}{chapter}{Introduction Générale}
\section{Contexte général}

Les réseaux sans fil, en particulier les réseaux adhoc, sont de plus en plus populaires aujourd'hui en raison de leurs avantages tels que la mobilité des terminaux qui permet la communication n'importe où, n'importe quand, avec n'importe qui, etc. Les réseaux adhoc mobiles sont caractérisés par la nature dynamique de la topologie du réseau, du support de communication radio, de la coopération distribuée, de l'absence d'une infrastructure préexistante ou d'un point de gestion, des ressources limitées, de la connectivité sporadique, de l'absence d'autorité de certification et de la vulnérabilité physique du nœud. L'évolution rapide de la communication des données sans fil a conduit les chercheurs à explorer leurs applications et leurs travaux dans le réseau véhiculaire nommé VANET ('Vehicular Adhoc NETwork'). VANET est une forme d'un réseau mobile adhoc permettant l'établissement d’une communication au sein d'un groupe de véhicules les uns à portée des autres et entre les véhicules et les équipements fixes à portée, usuellement appelés \textit{RSU} ('Road Side Unit'). En effet, VANET est utilisé pour soutenir le développement des Systèmes de Transport Intelligent STI (Intelligent Transportation System ITS). Les motivations pour lesquelles beaucoup de travaux de recherches s'y intéressent sont : la sécurité routière, la mobilité, la productivité et la protection de l'environnement. L'idée qu'un véhicule partage des informations utiles avec d'autres objets pour assurer la sécurité et la vie humaine sur la route est vraiment convaincante. Les statistiques de l'organisation de coopération et de développement économiques (OCDE) en 2016, et pour la France uniquement, le nombre des accidents impliquant des victimes est $57 522$, le nombre de blessures est $72 645$, et le nombre de décès causés par des accidents de la route est de $3477$. Mis à part le problème de la sécurité, toutes les villes du monde connaissent des problèmes d'embouteillages et de congestion ce qui engendre un gaspillage de carburant, perte de temps et pollution de l'environnement. Basé sur le rapport annuel 2019 d'INRIX, les heures dans la congestion à Paris sont au 7ème rang mondial avec une perte de 165 heures par conducteur par contre le conducteur à New York passe $140$ heures en congestion par an. Ce problème conduit également à un problème d'environnement où beaucoup de combustibles fossiles sont gaspillés.\\

Au cours des dernières années, les chercheurs, le gouvernement et l'industrie automobile se sont intéressés aux VANET, où plusieurs applications de STI ont vu le jour non seulement pour des applications de sécurité, mais aussi pour des applications de divertissement. Par conséquent, de nombreuses applications sont proposées pour les VANET tels que la prévention des accidents, la réduction de la congestion, la prévention des embouteillages et l'accès à Internet. La conception et la mise en œuvre de protocoles, d'applications et de systèmes pour les VANET nécessitent de considérer ses caractéristiques distinctives, en particulier la grande mobilité des véhicules, le changement rapide de la topologie et le chemin prévu. En outre, il doit également tenir compte de plusieurs facteurs, telle l'exigence de qualité de service (QoS).
Dans ce contexte, plusieurs projets de recherche et développement sur VANET ont été lancés :

\begin{itemize}
\item Cooperative Vehicle-Infrastructure System, CVIS (2006-2010) \cite{ding2014}
\item SAFESPOT (2006-2010) \footnote{http://www.safespot-eu.org/}
\item Safety-related Cooperating Cyber-Physical Sytems, SafeCOP (2016-2019) \cite{KOS06} \footnote{https://cordis.europa.eu/project/id/692529/fr}
\item Vehicle Safety Communication, VSC-2 (2005-2009) \footnote{https://www.nhtsa.gov/sites/nhtsa.gov/files/811492d.pdf}
\item Vehicle Infrastructure, VII (2005-2009) \footnote{http://ral.ucar.edu/projects/vii.old/vii/docs/VIIArchandFuncRequirements.pdf}
\item CarCoDe (2013-2015) \footnote{http://www.cister.isep.ipp.pt/itea2-carcode/theproject.php}
\end{itemize}

\section{Motivations et contributions}
D'après les chiffres énoncés par les organisations mondiales de la santé et de la lutte contre la pollution concernant respectivement ($i$) le nombre des décès à cause des accidents routiers et ($ii$) le taux de pollution mondiale à cause de dégagement de CO2 à partir des échappements des véhicules nous remarquons que l'augmentation suit une suite géométrique durant cette dernière décennie.\\
Tout d'abord, concernant le nombre de mortalité et le nombre des blessées, les chiffres annoncés par l'OMS (Organisation Mondiale de la Santé), dans le sommet mondial à Genève, en 2019 parlent de 1,35 million de morts chaque année et $50$ millions personnes gravement blessés, $90$\% de ces chiffres concerne les pays en développement. En outre, la gravité de ces chiffres dépasse le plan social pour atteindre le plan économique comme annoncé par l'étude de la Banque mondiale en Janvier 2018 \footnote{www.banquemondiale.org/fr/news/press-release/2018/01/09/road-deaths-and-injuries-hold-back-economic-growth-in-developing-countries}: "Pour les pays qui n’investissent pas dans la prévention des accidents de la route, le manque à gagner s’élèverait entre 7 et 22 \% du PIB par habitant sur une période de 24 ans".
\begin{figure}[!ht]
\centering
\includegraphics[width=1\textwidth]{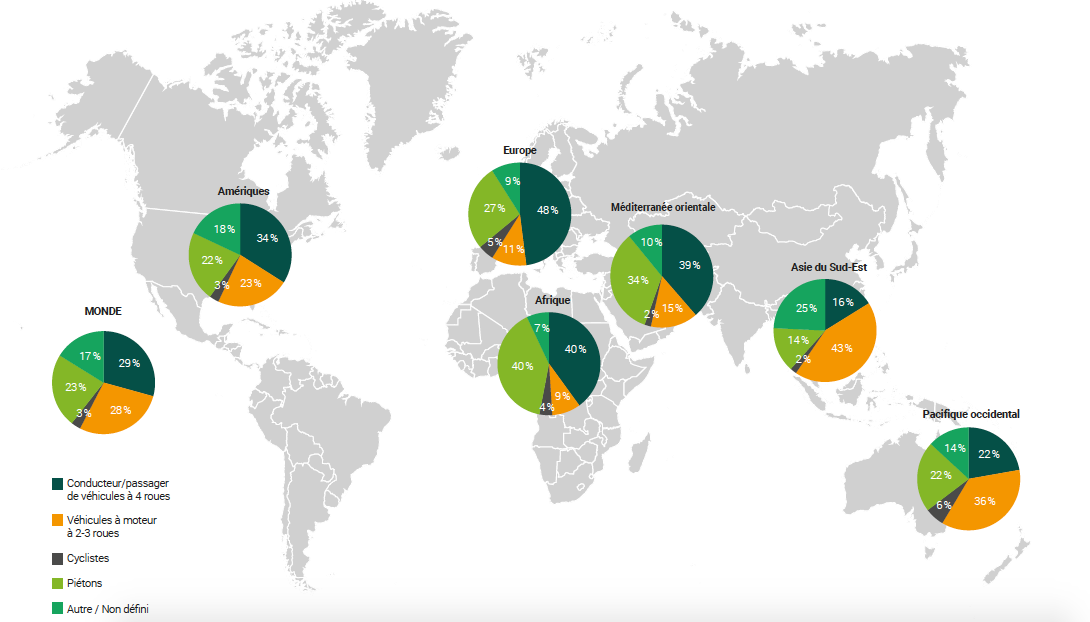}
\caption{Nombre des véhicules et des piétons dans les différents continents}
\end{figure}\\
Ensuite, en ce qui concerne la pollution causée par le dégagement des gaz d'échappement des véhicules, cette augmentation peut être expliquée par trois principaux facteurs : 
($i$) la multiplication du nombre des constructeurs automobiles ce qui implique une augmentation du nombre de véhicules vendu chaque année, ($ii$) l'évolution démographique de la population mondiale (augmentation de 1 milliard entre 2010 et 2020) et l'espérance de vie à la naissance est un facteur favorisant l'apparition des nouvelles zones urbaines et ($iii$) la hausse de l'indice de pouvoir d'achat, couramment nommé PIB par habitant en sciences économique, dans la plupart des pays du monde.\\
Les statisticiens affirment que cette courbe de pollution terrestre suivra la même pente dans les nouvelles décennies ce qui peut engendrer un phénomène destructif pour la terre, une part très importante de la pollution vient des impacts sociaux et économiques, ces causes ont poussé des gouvernements, des organisations, des constructeurs et même des banques à investir dans des domaines d'intelligence routières afin de comprendre en premier lieu les causes déclenchantes de ces augmentations et dans un deuxième lieu de trouver les solutions nécessaires pour réduire ces dégâts.\\

Dans cette thèse, nous nous sommes intéressés au problème de couverture des réseaux véhiculaires VANET dans des applications de gestion de trafic et des applications de divertissement. Pour faire face à ce problème et dans le cadre de 3 contributions de cette thèse, nous répondons aux questions suivantes : \textit{"Combien faut-il de Road Side Unit pour couvrir une carte ? Comment minimiser le coût de déploiement des \textit{RSU}s et maximiser la couverture ? Où placer les Road Side Unit pour mieux augmenter la couverture de la carte ? Comment prévoir les mouvements futurs des véhicules afin de mieux placer les Road Side Unit ?"}.\\

Notre première contribution,  \textrm{\emph{SPaCov/SPaCov+}}, consiste, en premier lieu, à adapter une méthode d'analyse de données séquentielles pour l'utiliser dans un jeu de données décrivant les mouvements véhiculaires, la première phase consiste à extraire les séquences \underline{les plus visitées} (séquences fréquentes) et  \underline{les moins visitées} (séquence rares) par les véhicules afin de les couvrir. En deuxième lieu, le problème est de trouver \textbf{le nombre} et \textbf{l'emplacement} des \textit{RSU} à déployer  pour couvrir les séquences extraites dans la première phase. La deuxième phase de cette méthode, nommée CoC, consiste à chercher les emplacements des \textit{RSU}s afin de maximiser la couverture des noeuds trouvés dans la première phase. Pour résoudre ce problème, nous avons fait recourt à une notion mathématique qui suppose l'ensemble des noeuds sous forme d'un hypergraphe et pour le couvrir nous avons besoin d'extraire les traverses minimales à partir de ce dernier. Cette notion mathématique qui décrit formellement une traverse minimale d'un hypergraphe nous garantit que le nombre des \textit{RSU}s sera minimal (traduit par un coût de déploiement minimal) avec une couverture maximale de l'hypergraphe.\\

Notre deuxième contribution, nommée dans le reste de ce manuscrit  \textrm{\emph{HeSPiC}}, est une méthode supervisée basée sur une heuristique pour le déploiement des \textit{RSU}s.  \textrm{\emph{HeSPiC}} permet de placer $k$ \textit{RSU}s, $k$ est un nombre choisi par l'utilisateur selon ses contraintes budgétaire, dans les croisements optimaux afin de maximiser la couverture dans le réseau VANET. En effet, notre contribution utilise un module heuristique pour classer les croisements afin de choisir les emplacements dans lesquels on placera des \textit{RSU}s. Le choix des emplacements des \textit{RSU}s est assuré par un module nommé HBCC (Heuristic Budget-Constrained Coverage), qui adapte trois fonctions mathématique pour classer les croisements candidats : ($i$) poids de chaque croisement, ($ii$) matrice de distance entre les croisements et ($iii$) une probabilité qui suit la loi de poisson pour prévoir les mouvements des véhicules au futur.  Le module HBCC tiens compte de trois principaux facteurs avec l'utilisation des trois fonctions mathématiques citées auparavant à savoir ($i$) l'aspect spatial avec l'utilisation d'une matrice de distance entre les croisements, ($ii$) l'aspect temporel (historique du mouvement des véhicules : du passé jusqu'au présent) avec l'utilisation de la fonction mathématique adaptée à notre problématique Poids, nommée $\mathcal{W}$ dans le reste de ce manuscrit,  et ($iii$) l'aspect temporel mais avec une projection  sur les mouvements futurs  des véhicules ce qui permet de bien classer les croisements.\\

Notre dernière contribution, nommée  \textrm{\emph{MIP}} (Most Important Patterns), consiste à réduire la taille gigantesque des données enregistrées lors de mouvements des véhicules en appliquant des fonctions mathématiques pour garder seulement des trajectoires nommées \textit{utiles} et \textit{bénéfiques}. La méthode est composée de deux grandes parties:
\begin{enumerate}
    \item Consiste à trouver les patterns utiles et bénéfiques afin de garder uniquement les trajectoires dite utiles.
\item C'est le même module CoC utilisé dans la première contribution  \textrm{\emph{SPaCov+}} qui permet de couvrir l'ensemble des trajectoires utiles avec l'application d'une fonction mathématique qui permet de trouver les traverses minimales à partir d'un hypergraphe.
\end{enumerate}
\section{Organisation du manuscrit}
%Les résultats de nos travaux de recherche sont synthétisés dans %ce manuscrit en six chapitres comme suit :\\

Le premier chapitre dénonce le contexte général, les motivations et un survol général sur nos contributions.\\

Le deuxième chapitre détaille les travaux de recherche qui se focalisent sur les systèmes de transport intelligents (STI). Dans ce chapitre nous présentons ($i$) la différence entre l'internet of things et l'internet of everything, ($ii$) une comparaison entre les réseaux mobiles MANET et VANET et une description de plusieurs pistes de recherche à savoir : les travaux qui s'intéressent à la dissémination et au routage dans les réseaux VANET, la sécurité d'échange d'informations dans les réseaux VANET et à la fin nous détaillons davantage les travaux pointant sur le problème de déploiement des \textit{RSU}s dans les réseaux VANET puisque nos approches font partie de ce type de travaux.\\

Le troisième, quatrième et cinquième chapitre présentent respectivement les trois nouvelles approches  \textrm{\emph{SPaCov/SPaCov+}},  \textrm{\emph{HeSPiC}} et  \textrm{\emph{MIP}}. Une description détaillée de chaque approche avec les fondements mathématiques ainsi qu’un exemple simulant le mouvement des véhicules sont présentés dans la première partie de chaque chapitre.  La présentation des algorithmes ainsi qu'une étude théorique de la complexité avec les preuves de complétude de ces derniers fait l'objet de la deuxième partie de ces trois chapitres.\\

Le sixième chapitre présente une étude expérimentale qui évalue les performances respectives des approches  \textrm{\emph{SPaCov/SPaCov+}},  \textrm{\emph{HeSPiC}} et  \textrm{\emph{MIP}} via le simulateur OMNET++. Nous évaluons les performances des deux premières approches ( \textrm{\emph{SPaCov/SPaCov+}} et  \textrm{\emph{MIP}}  avec les résultats  des méthodes pionnières de la littérature. Cependant,  l'évaluation de la troisième approche, nommé  \textrm{\emph{HeSPiC}}, sera effectuée en changeant différents paramètres tels que le nombre de véhicules, la zone de couverture, la plage de communication et le budget en terme du nombre des \textit{RSU}s.

Ce manuscrit se termine par une conclusion, qui résume l'ensemble de nos travaux ainsi qu'une ouverture sur quelques perspectives futures de recherche.
\newpage
\thispagestyle{empty}
\mbox{}
\newpage
\chapter{État de l'art}{}
\chaptermark{État de l'art}
\minitoc
\section{Introduction}
Ces dernières années, les systèmes de transport intelligents (STI) sont devenus un nouveau moyen pour résoudre divers problèmes de circulation routière, tels que les accidents de la route, la congestion, etc. en reliant les personnes, les routes et les véhicules dans un réseau d'information et de communication via des technologies de pointe. À cet égard, les réseaux adhoc des véhicules (VANET) sont apparus comme un moyen intéressant pour déployer des services STI y compris des applications de sécurité et autres. Les VANET sont principalement composés de véhicules échangeant des informations entre eux et avec une infrastructure, par exemple, des unités routières (\textit{RSU}), installées le long des routes. En effet, les VANET prennent en charge trois types de communication à savoir Vehicle-to-Vehicle (V2V), Vehicle to Infrastructure ou \textit{RSU} (V2I) et Infrastructure-to-Infrastructure (I2I). La collecte d'information des données véhiculaires circulant dans une grande ville avec une topologie complexe et leurs diffusions nécessitent le déploiement d’un grand nombre de \textit{RSU} coûteux à différents endroits. À cet égard, plusieurs études de recherche se sont concentrées sur le calcul des meilleurs emplacements des \textit{RSU}s et le nombre adéquat de ces derniers. 
\section{De l'Internet of Things vers l'Internet of Everything }
L'internet des objets couramment noté IoT et l'internet de Tout couramment noté IoE sont deux concepts qui se ressemblent beaucoup, mais ils ne sont pas identiques. Il existe à la fois des caractéristiques communes et distinctes de l'Internet de tout (IoE) par rapport à l'Internet des objets (IoT). Les solutions Internet des objets étant de plus en plus courantes, il est important de comprendre comment ces deux termes sont interdépendants.\\
Tout d'abord, nous commençons par expliquer les deux notions.\\
\subsection{Internet des objets - IoT }
Historiquement le terme IoT a été inventé par Kevin Asthon, un technologue au sein de la société Procter \& Gamble, il décrit la technologie comme étant la connexion de plusieurs appareils à l'aide d'étiquette RFID pour la gestion de la chaîne d'approvisionnement. Cependant, son idée sur la connectivité d'objets par le réseau RFID n'est plus valable puisque la connexion des appareils de nos jours dépasse les limites du réseau RFID pour atteindre le réseau mondial internet en se basant sur le protocole IP. 
Au début du 21 ème siècle, le terme IoT devient très utilisé par les médias et par les journaux ce qui conduit à une première conférence mondiale sur ce nouveau concept d'internet des objets qui s'est tenu en suisse en 2008. Dans cette dernière conférence 23 participants de pays différents ont discuté les communications sans fil à courte portée et les réseaux de capteurs. À partir de l'année 2010 l'IoT devient un domaine incontournable dans la production technologique, plusieurs définitions ont vu le jour dans des revues scientifiques comme étant une technologie qui permet l'interconnexion entre l'Internet et les objets, les lieux et les environnements physiques. À titre d'exemple, en 2020 nous dépassons les 30 milliards objets connectés dans le monde (voir Figure \ref{nb-iot}), un chiffre représentant 4 fois la population mondiale, ce qui implique que dans les prochaines décennies avec l'augmentation du flux de données généré par ces objets, le trafic d'information sur Internet ne sera plus contrôlé par des êtres humains mais sera dans une grande partie supervisé et traité par des machines voir des objets.
\begin{figure}[!ht]
\centering
\includegraphics[width=0.9\textwidth]{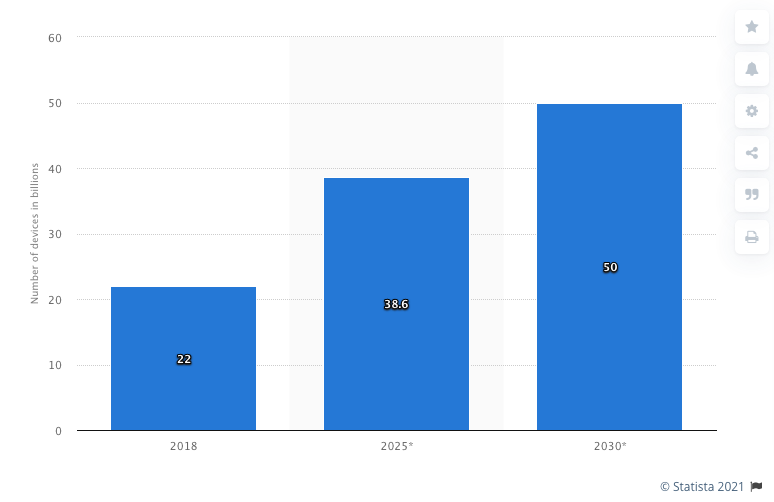}
\caption{Évolution du nombre d'objet connecté entre 2018 et 2030 }
\end{figure}
\label{nb-iot}     

\subsection{Internet de tout - IoE }
Par définition, c'est l’ensemble du monde physique qui sera interconnecté. Aussi bien les personnes, via les réseaux sociaux, que les objets avec le développement de l’IoT, les processus et les données avec le basculement vers le cloud, etc. Ce futur "Internet du tout-connecté" se traduira par la fusion concrète d’Internet, de l’Internet des objets, et du big data, une définition popularisée par la société Cisco. L’IoE sera un formidable levier de croissance en permettant aux entreprises, aux collectivités et aux particuliers de faire d’importantes économies, et en créant une panoplie de nouveaux emplois. L’Internet of Everything devrait permettre d’accroître la productivité et de réduire les coûts des entreprises, ainsi que d’optimiser les performances des États et des administrations publiques. Ce qui, bien sûr, profitera aux citoyens par ricochet.\\
L’Internet de tout pourrait soutenir de nombreux développements et améliorations comme par exemple l’habitat intelligent, le contrôle optimisé des consommations énergétiques et naturelles, le stationnement intelligent, les réseaux véhiculaires intelligent, une tarification routière plus adaptée, etc.\\
Il permettrait aussi d’accroître les performances administratives, en améliorant la productivité des agents et en réduisant les coûts d’exploitation.\\
En outre, et afin de faire connecter ce nombre d’objet, qui grimpe d’une manière exponentielle, avec les bases de données qui contiennent des quantités d'informations importante et qui augmentent avec le temps, les chercheurs ont inventé des infrastructures réseaux afin de pouvoir assurer un ratio raisonnable entre la connectivité et le débit selon le secteur de fonctionnement de chaque domaine. \\
Dans ce qui suit, nous présentons les technologies actuelles d'infrastructures réseaux IoT. Chaque technologie, afin de s'imposer, doit se développer au niveau mondial et s'associer à diverses sociétés (constructeur informatique, exploitant de réseaux de téléphonie mobile et fixe Internet, les moteurs de recherche, les sociétés informatiques). Le réseau doit permettre d'envoyer (mais aussi recevoir dans certains cas) de très petits messages sur de longues portées sans passer par des systèmes coûteux de réseaux mobiles et en consommant peu d'énergie. Les objets qui communiquent à l'intérieur d'un immeuble à l'instar de WiFi, bluetooth, Z-wave, ZigBee, etc. ou sur des longues distances à savoir GSM, 3G, GPRS, CDMA, 4G et 5G ne font pas partie de l'Internet des objets. Cependant, les réseaux SigFox, Lora, MQTT, NB IoT, Qowisio, etc s'imposent comme étant des infrastructures réseaux IoT.
\begin{itemize}
\item SigFox : Le réseau Sigfox est une offre de service basée sur les protocoles réseaux sans fil bidirectionnel. L'offre commerciale actuelle en France permet, par jour et par périphérique (capteur), le rapatriement de 140 messages de 12 octets et l'envoi de 4 messages aux détecteurs. Le réseau utilise la bande de $868$ MHz avec un débit de $100$ bit/s. La portée est de $30$ à $40$ km grâce à des antennes de $2$ m de hauteur.
\item Lora : Le réseau dit Lora est un réseau dont le protocole LoRaWAN s'appuie sur la technique de modulation (couche physique) LoRa. LoRa utilise les bandes de fréquences ISM (868 MHz en Europe et 915 MHz aux USA), une portée comprise entre 15 et 20 km dans les zones rurales et entre 3 et 8 km dans les zones urbaines et fournit un débit faible entre 300bps et 50Kbps selon le facteur d’étalement (spreading factor)
\item MQTT : est un protocole de messagerie publish-subscribe basé sur le protocole TCP/IP, c'est l'acronyme de Message Queuing Telemetry Transport. Dans sa première version, le projet est écrit par la société IBM en 1999 l'objectif était d'avoir un protocole efficace en bande passante, léger et utilisant peu d'énergie de batterie. A ce jour 5 versions ont été apparues et plusieurs agents MQTT sont disponibles à savoir ActiveMQ, ejabberd, JoramMQ et Mosquitto.
\item NB IoT : C'est la technologie supporté par la Chine pour le développement du secteur, après avoir été identifié et soutenu en amont par les autorités, le NB-IoT a désormais été adopté par les grands groupes technologiques chinois qui se positionnent en leaders mondiaux du déploiement de réseaux et de la production d’équipements. En Chine, le développement d’applications IoT est favorisé par de nombreuses plateformes et initiatives locales, notamment pour la STI, smart city et l’internet industriel. À l’international, les acteurs chinois coordonnent leurs efforts et on notamment réussi à faire inclure le NB-IoT dans les standards 5G dans le cadre du 3rd Generation Partnership Project (3GPP). Néanmoins, bien que de nombreux analystes attendent une forte pénétration du NB-IoT au niveau mondial, le succès de la technologie dépendra aussi fortement des caractéristiques propres à chaque marché : infrastructures existantes, usages, intérêts industriels ou encore enjeux de cyber sécurité.
\end{itemize}
\begin{landscape}
\subsection{Comparaison entre les infrastructures réseaux}
\vspace{3cm}
\begin{table}[h!]
 \centering
 \begin{center}

\label{rx-iot}

\begin{tabular}{|c||c|c|c|c|}
\hline
&MQTT&LoRa&SigFox&NB-IoT\\
\hline
\hline
Débit&Selon l'implémentation &50Kbit/s& 50Kbit/s& 250Kbit/s\\
\hline
Mémoire& OUI &NON &NON &NON\\
\hline
Protocole de communication  &TCP/IP& LoraWan& LPWAN& NB-LTE\\
\hline
Max Données& Non limité& Non limité& 140Msg/Jour& Non limité\\
\hline
Bande de fréquence& N.D & 868 Mhz/902 Mhz&868 Mhz/ 902Mhz&200 Khz\\
\hline
Serveur dédié &Broker Système& Serveur LoRaWan& Pas de serveur& Pas de serveurto\\
\hline
\end{tabular}

\caption{Comparaison entre les infrastructures réseaux IoT.}
\end{center}
\end{table}
\end{landscape} 

\section{Les réseaux mobiles MANET et VANET}
Nous présentons en premier lieu les réseaux mobiles adhoc MANET (Mobile Adhoc Network) ainsi que les nouvelles contraintes de déploiement des systèmes P2P sur ces réseaux. En second lieu, nous présentons le réseau des véhicules qui forme une spécification du réseau MANET couramment nommé VANET (Vehicular Adhoc Network).\\
\subsection{Réseau mobile MANET}
Dans un premier temps, le groupe de travail s'est attaché aux questions de performances dans les réseaux Adhoc et au développement d'une série de protocoles de routage alors expérimentaux, tant dans la famille des réactifs (AODV, DSR) que des proactifs (OLSR, TBRPF), ou bien encore des hybrides (ZRP).
De nos jours, un réseau adhoc mobile (MANET) est principalement 
Constitué par des dispositifs mobiles (smartphone, téléphone cellulaire, PDA, etc.) qui communiquent entre eux en utilisant des technologies de communication sans fil à l'instar de Bluetooth, wifi, etc. La communication dans le réseau MANET se fait d'une manière décentralisée sans avoir une infrastructure existante et fixe à l'avance. En effet, la communication des nœuds de réseau MANET se fait par deux manières :
\begin{itemize}
    \item Une communication directe (dans la zone couverte) entre deux nœuds voisins.
    \item Une communication indirecte (hors zone couverte) entre deux nœuds non voisins qui nécessite l'intervention d'un ou plusieurs nœuds adjacents (nommé aussi intermédiaires) pour acheminer le message entre l'émetteur et le destinataire. 
\end{itemize}

\subsection{Réseau mobile VANET}
Vehicular adhoc Network ou réseau adhoc de véhicules noté VANET, est une forme de Mobile adhoc NETworks MANET, pour fournir des communications au sein d'un groupe de véhicules à portée les uns des autres et entre les véhicules et les équipements fixes à portée, usuellement appelés équipements de la route (Road Side Unit pour l'acronyme de RSI).\\
VANET peut être utilisé pour soutenir le développement des Systèmes de Transport Intelligent STI (Intelligent Transportation System ITS). Les motivations pour lesquelles beaucoup travaux de recherches s'intéressent sont : la sécurité, la mobilité, la productivité et  la protection de l'environnement. L'idée qu'un véhicule partageant des informations utiles avec d'autres objets pour assurer la sécurité de la vie humaine sur la route est vraiment convaincante, Les statistiques de l'Organisation de coopération et de développement économiques (OCDE) en 2016, en France seulement, le nombre des accidents impliquant des victimes est de 57522, le nombre de blessures est de 72645, et le nombre de décès causés par des accidents de la route est de $3477$. Mis à part le problème de la sécurité. Le but optimal est que les réseaux véhiculaires contribueront à des routes plus sûres et plus efficaces à l'avenir en fournissant des informations opportunes aux conducteurs et aux autorités intéressées. Dans son développement, VANET fournira une communication de données continue à grande vitesse non seulement entre les véhicules et les infrastructures routières, mais aussi avec les piétons, les appareils et les réseaux cellulaires. Ce nouveau système de communication véhiculaire appelé Vehicle-to-everything (V2X)  est composé de Vehicle-to-Vehicle (V2V), Vehicle-to-Infrastructure (V2I), Vehicle-to-Device (V2D) et Vehicle-to-Network (V2N) communication.\\
Ces dernières années, plusieurs organisations ont créé la standardisation de l'architecture VANET. L'un d'entre elles est l'IEEE qui propose l'architecture appelée WAVE (Wireless Access in Vehicular Environments). Il existe également d'autres normes proposées par l'ETSI, appelées architecture ETSI TC-ITS, et architecture CALM proposée par l'ISO.
\begin{figure}[!ht]
\centering
\includegraphics[width=0.9\textwidth]{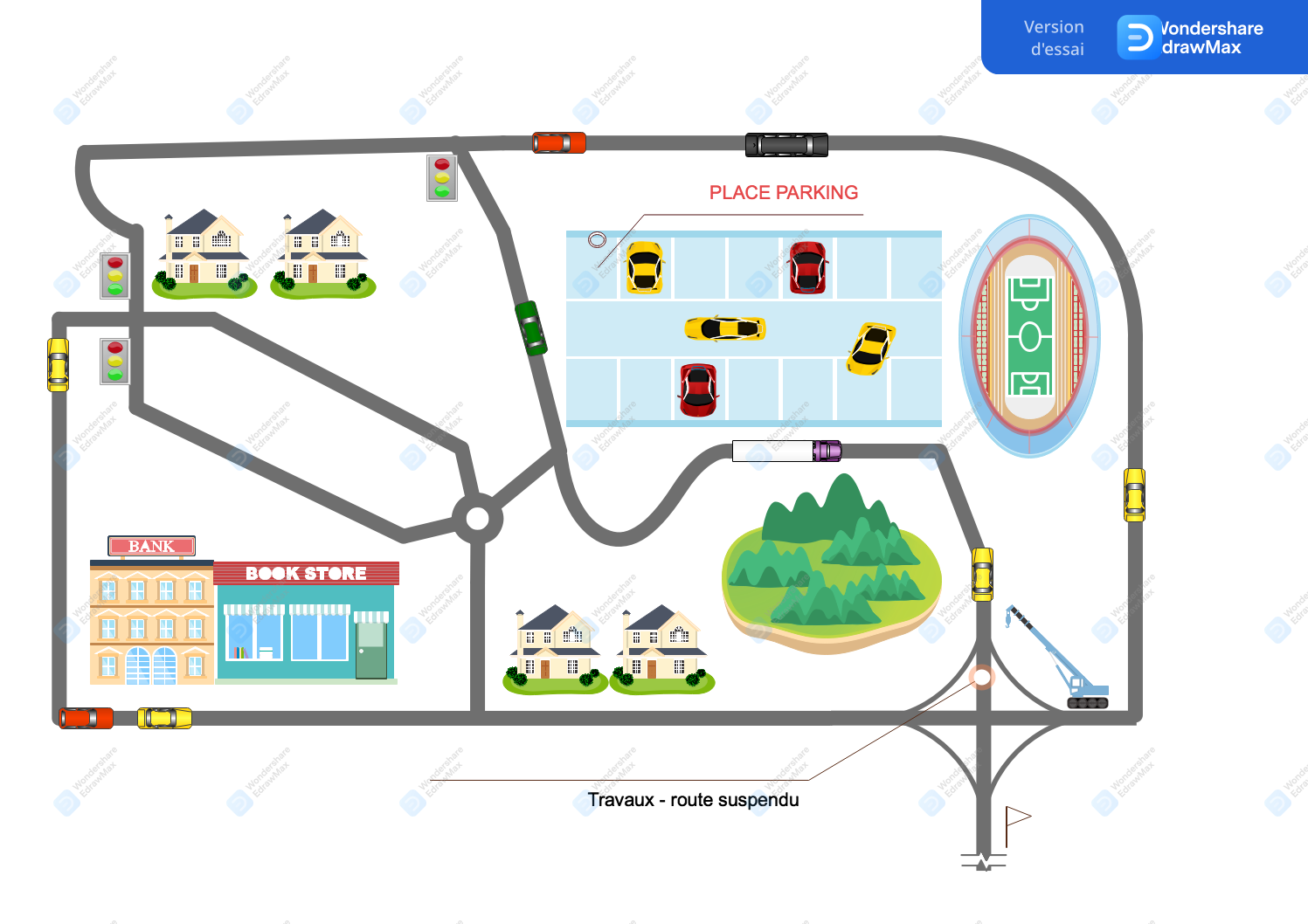}
\caption{Réseau routier d'une ville avec communication véhiculaire.}
\end{figure}\label{vanet}

\section{Classification et description des méthodes de déploiement des \textit{RSU}s}
Le défi florissant serait de ($i$) minimiser le nombre de \textit{RSU}s autant que possible afin de réduire le coût de déploiement et ($ii$) maximiser le taux de couverture, c'est-à-dire le nombre de véhicules qui font contact avec au moins un \textit{RSU} lors de leurs déplacements. Dans la littérature, les stratégies de déploiement de \textit{RSU} existantes, appelées stratégies de couverture, peuvent être classifiées en trois grandes catégories :
\begin{itemize}
\item Couverture spatiale \cite{K08} \cite{K10} \cite{K14}
\item Couverture temporelle \cite{H09} \cite{H10} \cite{H11}
\item Couverture Spatio-temporelle \cite{B02} \cite{B12} \cite{B13} \cite{B1}
\end{itemize}

\begin{figure}[!ht]
\centering
\includegraphics[width=0.9\textwidth]{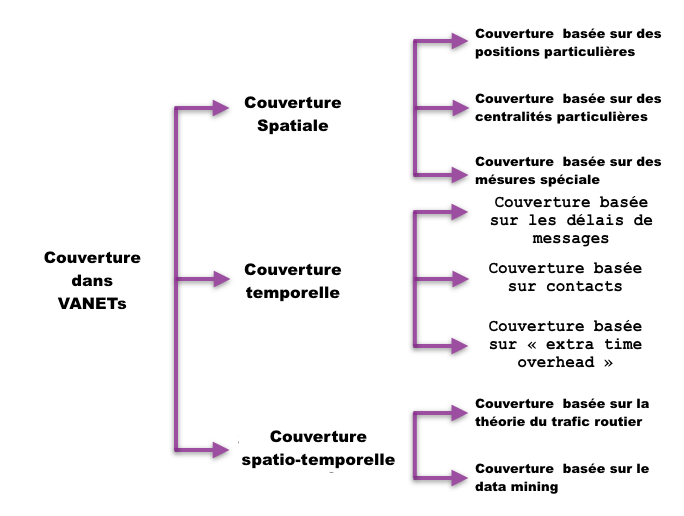}
\caption{Classification des stratégies de déploiement des \textit{RSU}s.}
\end{figure}\label{cov}

Tout d'abord, nous décrivons la première famille des méthodes qui s'appuie sur les attributs spatiaux de la topologie de la carte routière pour calculer les emplacements en maximisant le nombre de véhicules pouvant être contactés, par exemple, les intersections les mieux centralisées.\\
Néanmoins, la couverture temporelle se concentre sur la couverture de la communication entre les véhicules à déplacement rapide et les \textit{RSU} fixes.\\
Enfin, la couverture spatio-temporelle repose sur les patterns de mobilité des véhicules.  En effet, l'idée derrière cette famille d'approche est de calculer l'emplacement des \textit{RSU}s couvrant tous les modèles de mobilité des véhicules circulant dans une ville donnée. En outre, les méthodes de couverture spatio-temporelle existantes supposent que les véhicules se déplacent dans une ville donnée selon des modèles de mobilité qui pourraient être déduits à partir de leurs trajectoires. En outre, couvrir ces trajectoires qui forment les modèles de mobilité des véhicules, revient à maximiser le taux de couverture totale.\\
La couverture basée sur l'aspect spatial et celle basée sur l'aspect temporel ne représente pas la réalité pour décrire la mobilité du véhicule. Cependant, la mobilité des véhicules sera plus proche de la réalité si nous utilisons les deux aspects ensemble. Plusieurs travaux fondent leurs approches sur la segmentation des cartes en régions et d'étudier la mobilité des véhicules sur ces régions, alors que d'autres travaux plus réalistes étudient le mouvement des véhicules pour avoir plus de détail sur le modèle de mouvement. À notre connaissance, aucune approche précédente n'utilise l'ordre chronologique de passage des véhicules par les croisements afin de représenter avec précision le modèle de mouvement.\\

Les travaux de la littérature  sur le déploiement des \textit{RSU}s peuvent être divisés en trois catégories, comme illustré dans la Figure \ref{cov}

\begin{enumerate}
    \item \textbf{Approches basées sur la couverture spatiale} \cite{K08} \cite{K10} \cite{K14} : Cette catégorie est basée sur l'analyse de l'attribut spatial d'une topologie routière. La couverture spatiale peut choisir de déployer des \textit{RSU}s dans une position spéciale comme : l'intersection des jonctions, le point médian de la zone (différentes formes géographiques de la zone ont été proposées), ou les intersections avec une centralité élevée.
    \item \textbf{Approches basées sur la couverture temporelle} \cite{H09} \cite{H10} \cite{H11}: Cette catégorie est basée sur l'aspect temporel de la communication entre les véhicules en mouvement et les \textit{RSU}s fixes. Les contraintes utilisées dans la couverture temporelle sont plus proches des facteurs temporels (vitesse de transmission, temps de conduite, etc.).
    \item \textbf{Approches basées sur la couverture Spatio-temporelle} \cite{B02} \cite{B12} \cite{B13} \cite{B1}: Cette catégorie combine à la fois les aspects spatiaux et temporels pour décrire le mouvement des véhicules dans le temps et dans l'espace ou exploiter leurs propres modèles de mobilité à partir de l'historique des traces de véhicules.
\end{enumerate}
\subsection{Méthodes basées sur la couverture Spatiale}
Kafsi et al. Introduisent dans son article \cite{K08} une étude de la connectivité des réseaux VANET en modifiant la densité des véhicules, porté de la communication, etc. Le modèle du trafic est représenté par les auteurs sous forme d'une grille de hachage de $N$ routes verticales et de $N$ routes horizontales (d'où $N^{2}$ intersections). La distance entre deux lignes horizontales consécutives (de même pour les lignes verticales) est de $L$ mètres. Trois types d'intersections sont proposés dans l'article:
\begin{enumerate}
    \item Absence de feux de signalisation: les véhicules cèdent à tout véhicule arrivant de leur droite.
    \item  Présence de feux de signalisation synchronisé: Toutes les intersections sont synchronisées, de sorte que la lumière soit verte dans les lignes horizontales à toutes les intersections en même temps.
    \item Présence des feux de signalisation alternés:  ajustement de la durée de la phase des feux de signalisation pour simuler le vert effet de vague, afin que les conducteurs puissent passer par intersections dans une direction principale sans s'arrêter
\end{enumerate}
Les véhicules entrent dans le domaine de la simulation à des moments choisis uniformément pendant toute la durée de la simulation. Si le temps de cette dernière est suffisamment long, les arrivées de véhicules peuvent localement être vues comme un processus de Poisson d'intensité. Chaque véhicule entrant choisit sa destination (point de sortie) de deux manières différentes:
\begin{enumerate}
    \item Destination Anti-Diamétrique: Le véhicule source et la destination (point de sortie) se trouvent sur la même ligne ou colonne du réseau routier.
    \item Destination aléatoire: les véhicules sélectionnent uniformément leur destination parmi les $N$ points d'extrémité de route du côté opposé du réseau routier.
\end{enumerate}
Après avoir atteint leur destination, les véhicules quittent le domaine de simulation.\\

Kchiche et al. \cite{K14} fournit deux types de déploiement, le premier basé sur la centralité visant une optimisation globale de l'accès aux services et un deuxième déploiement basé sur des points centraux visant l'équité. Toutefois, l'article propose un modèle et une estimation de la moyenne des données envoyées dans les voyages considérés. Le papier considère deux objectifs:  ($i$) maximisation de la moyenne des données envoyées avec succès et ($ii$) la maximisation du minimum de données envoyées avec succès entre les différents voyages par souci d'équité. En effet, deux schémas de déploiement ont été proposés l'un basé sur la centralité et l'autre sur des points centraux qui visent l'équité entre les véhicules.\\
Dans l'article de Lee et al. \cite{K10}, l'auteur décrit son processus suivant les étapes suivantes: la méthode commence par une sélection initiale selon laquelle chaque intersection est un candidat. Pour chaque cercle entourant la position candidate avec le rayon égal à la plage de transmission, le nombre de rapports de véhicules à l'intérieur du cercle est compté. Après avoir ordonné les candidats par décompte, le schéma de placement fait survivre le candidat lorsqu'il est séparé de tous les autres candidats déjà sélectionnés par au moins les critères de distance. Les données d'historique de localisation, obtenues à partir d'un système de suivi de véhicule en temps réel, sont utilisées pour concevoir et évaluer un schéma de placement des \textit{RSU}s.\\
\subsection{Méthodes basées sur la couverture temporelle}
Kaur et al. \cite{H09} propose une méthode de déploiement des \textit{RSU}s basé sur la minimisation du temps parallèle nécessaire pour placer les unités en bordure de route dans une zone donnée et pour atteindre un taux de couverture élevé. Dans son papier, l'auteur décrit les principales étapes adoptées lors de la mise en oeuvre du processus:
\begin{itemize}
    \item étape 1: Définit le nombre des \textit{RSU}s, les dimensions de la zone à couvrir. Placer la première unité les autres vont être placées arbitrairement.
    \item étape 2: Divise l'ensemble d'entrée en deux parties.
    \item étape 3: Appel pour chaque sous-ensemble, divisé dans l'étape précédente, l'algorithme optimiste de déploiement des \textit{RSU}s.
    \item étape 4: Si le résultat n'est pas optimal, refaire l'étape numéro 2, jusqu'à ce que le déploiement optimiste soit atteint.
\end{itemize}
Dans son article, Sun et al. \cite{H10} posent le problème de déploiement des \textit{RSU}s dans un cadre spécial. Ainsi, l'auteur cherche à trouver le nombre et les emplacements des \textit{RSU}s pour mieux couvrir le réseau tout en essayant d'assurer la confidentialité des données des utilisateurs. En effet, l'article propose des contraintes temporelles à savoir la durée de conduire, durée de validation du certificat, etc.\\
Dans son papier Wu et al. \cite{H11} proposent une analyse sur les temps de transmissions des messages entre les véhicules E2E (end to end) et traitent ce problème sur des modèles de mouvement à sens unique et à double sens pour évaluer les temps d'attentes et les temps de latences dans chaque modèle de mouvement. En outre, les auteurs proposent un modèle de déploiement qui utilise les véhicules comme étant des points d'accès mobiles (\textit{RSU}s mobiles) dans certains cas et certaines conditions. Cependant, le papier décrit les modèles d'une manière formelle sans comparaison avec aucune autre méthode. Les expérimentations et les résultats sont déduits à partir des formules mathématiques énoncés dans ce même papier en changeant les quelques paramètres à savoir vitesse, temps d'envoi, distance de propagation, vitesse de propagation, etc.\\

Puisque notre cadre comprend trois méthodes de couverture spatio-temporelle, nous nous concentrons sur le reste de cette section uniquement sur les méthodes pionnières appartenant à cette catégorie. En effet, l'idée générale derrière les méthodes spatio-temporelles existantes est de trouver les modèles de mobilité des véhicules circulant dans une ville donnée, puis de calculer les emplacements des \textit{RSU}s appropriés couvrant tous les modèles de mobilité extraits. Par conséquent, trouver les modèles de mobilité pertinents est une étape primordiale. À cet égard, certaines méthodes s'appuyaient sur la théorie des flux de trafic \cite{B02} pour définir les modèles de mobilité des véhicules. Il convient de mentionner que la théorie des flux de trafic définit trois descriptions de mobilité \cite{B02}:
\begin{enumerate}
      \item \textbf{Description microscopique }: tous les véhicules sont identifiés individuellement. La position et la vitesse de chaque véhicule définissent l'état du système comme des variables dépendantes de la variable de temps ;
      \item \textbf{Description cinétique}: l'état du système est toujours identifié par la position et la vitesse des véhicules. Cependant, leur identification ne se réfère pas à chaque véhicule mais à une distribution de probabilité appropriée ;
      \item \textbf{Description macroscopique}: l'état est décrit par des quantités moyennées localement, c'est-à-dire la densité, la vitesse de masse et l'énergie, considérées comme des variables dépendantes du temps et de l'espace.
\end{enumerate}

\subsection{Méthodes basées sur la couverture Spatio-temporelle}
\subsubsection{Description de l'algorithme RoadGate} Dans son article \cite{B13}, les auteurs  étudient le problème de déploiement des \textit{RSU}s qui sert à fournir les performances de connectivité souhaitées tout en minimisant le nombre de \textit{RSU} déployées. L'idée est d'exploiter le modèle de mobilité stable dans le temps pour trouver les emplacements de déploiement optimaux d'une trace de véhicule réaliste, observons le modèle de mobilité et proposons un modèle de graphe pour le caractériser. Sur la base du modèle de graphe, le but est de transformer le problème de déploiement de la passerelle en un problème de sélection de vertex dans un graphe. L’article réduit le problème de couverture minimale des sommets et propose un algorithme heuristique nommé RoadGate (voir algorithme \ref{roadgate}) qui permet de faire une recherche de type glouton pour trouver les positions optimales.\\

\begin{algorithm}
\BlankLine
  \KwIn{$\mathcal{G}$: Mobility Graph,$\gamma$:longueur maximal du trajet,$\mathcal{P}_u$: probabilité de rencontre (fixé par l'utilisateur)}
  \KwOut{$\mathcal{\Bar{V}}$: Ensemble de résultat.}
  
  $\mathcal{M}$: Matrice transition de G.\\
  **calcul de 1-étape, 2-étape, ..., $\gamma$-étape de matrice transition $\mathcal{M}$.\\
  calcul de VPP $\pi^{\gamma}_{ij}$ entre tous les nœuds selon l'étape précédente (**).\\
  $\mathcal{\Bar{V}} = \emptyset$. \\
  $\mathcal{\theta} = \emptyset$. \\
 \BlankLine 
  \While{$|\mathcal{\theta}|$ < $|\mathcal{V}(G)|$}
        {
        \BlankLine
         $j $= $argMax_{x \in \mathcal{V} - \mathcal{\Bar{V}}}(|\{ k | k \in \mathcal{V} - \mathcal{\theta}, \lambda \geq P_u\}|) $\\
         InsertInV($j$, $\mathcal{\Bar{V}}$)\\
         $\mathcal{\theta} = \mathcal{\theta} \cup \{k | \lambda_{k}\mathcal{\Bar{V}} \geq P_u\}$
       
        }
    
  return ($\mathcal{V}$)

\caption{Algorithme RoadGate}
\label{roadgate}
\end{algorithm}
\subsubsection{Description de l'algorithme MPC } 
Dans cet article\cite{B1}, Yeferny et al., décrivent leurs méthode comme étant une approche spatio-temporelle pour les applications non sécuritaires. La méthode cherche à maximiser le taux de couverture tout en minimisant le nombre des \textit{RSU}s déployés afin de baisser le coût de déploiement. Les contributions énoncées par les auteurs sont:
\begin{itemize}
    \item L'utilisation d'une méthode d'extraction de clusters, à savoir l'utilisation de la théorie de l'analyse du concept formel FCA \cite{FCA} (Formal Concept Analysis), à partir des données de mobilités pertinentes.
    \item Calcul des emplacements des \textit{RSU}s pour couvrir les modèles de mobilité extrait dans la partie 1. L'auteur à utiliser la notion des traverses minimales des hypergraphes afin de couvrir les modèles de mobilité tout en minimisant le nombre des \textit{RSU}s déployés.
\end{itemize}
La méthode sera décrite en quatre étapes. Dans la suite, nous décrivons les étapes de l'algorithme MPC.
\begin{enumerate}
    \item Étape 1: consiste à lire les emplacements des véhicules à partir d'un fichier plat, la description des mouvements des véhicules sera représentée par les positions de chaque véhicule dans le map.
    \item Étape 2: consiste à construire le contexte d'extraction à partir des données traitées dans l'étape 1.
    \item Étape 3: consiste à calculer l'ensemble minimal $\mathcal{F}$ du concept formel en utilisant un algorithme nommé QualityCover.
    \item Étape 4: consiste à couvrir d'une manière minimale l'ensemble $\mathcal{F}$ tout en calculant les traverses minimales de ce dernier ensemble.
\end{enumerate}

\subsubsection{Description de l'algorithme ECSA} Dans son article \cite{B12}, les auteurs décrivent l'algorithme comme étant de type Glouton pour la recherche combinatoire exhaustive qui fournit une borne inférieur du rapport d'approximation. L'algorithme essaie de nombreux cycles dont chacun peut produire un schéma de déploiement et sélectionne le meilleur schéma de déploiement parmi tous les schémas possibles en termes de couverture spatio-temporelle. L'algorithme commence par une première itération avec un sous-ensemble $G$ de localisation candidates, et $n$ est le nombre d'éléments de $G$, $|G| = n$, à chaque itération, l'algorithme glouton insère des localisations candidates parmi les emplacements déployés selon une équation de déploiement. L'algorithme fournit une couverture de détection supérieur à $1-e^{-1}=63\%$ du résultat optimal lorsque la taille de l'ensemble initial est nulle, c'est-à-dire $n = 0$. Dans le cas de coûts arbitraires pour les emplacements candidats, un rapport d'approximation de $1-e^{-1}$ peut encore être obtenu lorsque la taille de l'ensemble initial est supérieure ou égale à trois $n \geq 3$.\\
%\LinesNumbered
\begin{algorithm}

\BlankLine
  \KwIn{$\mathcal{R}$,$w_0$,$\mathcal{C}$,$n$}
  \KwOut{$\mathcal{S}$: L'ensemble des emplacements de déploiement.}
  \BlankLine
 $\mathcal{R'}\leftarrow \mathcal{R}$, cost = 0
  
  \ForEach{$G$ of $\mathcal{R}$}
        {
        \BlankLine
         $\mathcal{R^{'}}\leftarrow \frac{\mathcal{R}}{G} $
         cost = w(G) \\
 // sélectionner l'emplacement qui maximise le ratio du gain marginal \\
 Max = 0, $\lambda$=null \\
 \ForEach{$\lambda \in \mathcal{R^{'}}$}
 {
 $MGR(\omega^{'}_1, \omega_1) = \frac{cp(\omega^{'}_1) - cp(\omega_1)}{C_\lambda}$ \\
 \If{Max < MGR($\lambda$)}
 {Max = MGR($\omega^{'}_1, \omega_1$)}
 
 {$\omega_1 \leftarrow \omega_1 \cup w$}

 }
      $Max = 0$, $\lambda = null$ 
        }
\ForEach{sous ensemble de $G$, $|G|< n$ $w(G)< w_0$}
        {
        \BlankLine
         $\omega_2$ $\leftarrow G$

        }

\caption{Algorithme ECSA}
\label{ecsa}
\end{algorithm}
\\
En outre, l'application de la théorie des flux de trafic est très difficile dans un cadre réaliste et elle ne parvient pas dans la plupart des cas à trouver des modèles de mobilité efficaces. Par conséquent, d'autres méthodes de couverture étudiées ont exploité les modèles de mobilité à partir des trajectoires des véhicules. Xiong et al. \cite{B13} introduit un nouveau modèle de graphe pour caractériser le flux de trafic. Le modèle de mobilité est détecté en divisant le système routier en zones uniformes (forme et surface) puis en calculant la probabilité de transition temporelle entre toutes les zones. Sur la base du modèle de graphe, ils ont transformé le problème de déploiement des \textit{RSU}s en un problème de sélection de sommets dans un graphe. En le réduisant au problème de couverture minimale des sommets, ils ont montré que le problème de déploiement des \textit{RSU}s est un problème NP-complet. Ensuite, ils ont proposé un algorithme heuristique appelé RoadGate qui cherche les positions optimales afin de mettre les \textit{RSU}s. Zhu et al. \cite{B12} a introduit une nouvelle méthode de détection des mouvements futurs de véhicules basés sur la chaîne de Markov. À cet égard, ils ont extrait les schémas de mobilité des véhicules à partir des fichiers journaux et ont, ensuite, prédit leurs déplacements futurs. Par ailleurs, les auteurs ont introduit un algorithme glouton  pour calculer les emplacements appropriés des \textit{RSU}s. Trullols et al. \cite{FB12} propose une approche de téléchargement collaboratif de données entre véhicules. Ils ont suggéré de placer des \textit{RSU}s dans les zones avec volume de passage important afin que les véhicules jouent le rôle de relais. À cette fin, ils ont introduit une nouvelle topologie de flux de trafic en modélisant la carte routière sous forme d'un graphe, où les routes et les intersections représentent respectivement des sommets et des arêtes. Yeferny et al \cite{B1} ont introduit une nouvelle méthode pour représenter les modèles de mobilité des véhicules à partir de leurs fichiers de trace basée sur la théorie de l'analyse de concept formel (FCA). Ensuite, ils ont calculé les emplacements de \textit{RSU}s adéquats pour couvrir les modèles de mobilité extraits par le nombre minimal de \textit{RSU}. Ils ont formulé ce dernier problème comme étant un problème d'extraction des traverses minimales d'un hypergraphe \cite{J16}. Dans leur article, les auteurs montrent que la stratégie proposée a réduit le coût de déploiement et obtenu de bons résultats en termes de taux de couverture, latence et overhead. Cependant, la technique FCA adaptée utilisée pour calculer les modèles de mobilité était basée sur une base de données transactionnelle représentant les relations entre les trajectoires des véhicules et les croisements traversés. Cette représentation considère la trajectoire d'un véhicule comme un ensemble non ordonné de jonctions croisées. Par conséquent, ils ignorent un facteur clé, qui est l'ordre qu'un véhicule traverse un croisement. En effet, un nombre supplémentaire de modèles de mobilité pourrait être extrait, conduisant à une augmentation du nombre des \textit{RSU}s utilisées. 
%Pour pallier à cette pénurie, nous nous appuyons dans cet article sur une %base de données de séquences constituée de séquences de jonctions ordonnées %traversées par chaque véhicule. En effet, nous introduisons ici un cadre de %déploiement spatio-temporel RSU comprenant trois méthodes à savoir SPaCov, %SPaCov + et HeSPic. Ces méthodes se déroulent en deux étapes : l'extraction %des modèles de mobilité, puis le calcul de la couverture. L'approche SPaCov %extrait les modèles de mobilité fréquents des véhicules en mouvement de %leurs trajectoires, puis calcule les meilleurs emplacements RSU qui %couvrent les modèles extraits. Néanmoins, SPaCov + extrait les modèles de %mobilité fréquents ainsi que les rares pour améliorer le taux de %couverture. HeSPiC est une méthode de couverture spatio-temporelle à budget %limité, qui vise à maximiser le taux de couverture soumis à une contrainte %budgétaire, qui est définie en termes de nombre de RSU $k$. Il exploite les %informations spatio-temporelles des mouvements des véhicules et les %attributs spatiaux de la zone couverte pour fournir les meilleurs %emplacements $k$ (jonctions) dans lesquels les RSU seront placées.\\

\subsection{Tableau comparatif des différentes approches}
Afin de mettre en lumière les avantages et les inconvénients des approches les plus utilisées de la littérature, nous surlignons dans le Tableau \ref{tab-comparatif} les différents critères qu'un système de déploiement des \textit{RSU}s doit satisfaire pour les différentes approches de la littérature. La comparaison est établie selon les propriétés suivantes: 
\begin{itemize}
    \item Type: indique la famille de la méthode et le fondement mathématique sur lequel elle est basée.
    \item Taux de Couverture: indique le taux de couverture fournit par la méthode dans des différents types de maps (réel ou simulation dense ou éparse).
    \item contrainte budgétaire : indique si la méthode utilise une contrainte budgétaire afin de fixer d'avance le nombre de \textit{RSU} maximum à utiliser.
    \item Adaptivité: indique si la méthode s'adapte aux évolutions des patterns de mobilités en temps réel.
    \item Scalabilité: indique si l'approche arrive à réussir le passage à l'échelle.
\end{itemize}
\subsection{Critiques et contributions}
Dans ce qui suit, nous présentons une étude comparative des méthodes de déploiement des \textit{RSU}s basée sur une couverture spatio-temporelle. Notre étude sera limitée aux trois approches MPC, RoadGate et ECSA et ce pour deux raisons: 
\begin{itemize}
 \item Ces trois approches peuvent être comparées avec nos approches puisqu'elles ont les mêmes jeux de données en entrée (les modèles de mouvement) et fournissent le même ensemble de sortie en termes de type (nombre de \textit{RSU} et leurs emplacements).
 \item Ces trois approches utilisent des concepts mathématiques différents ce qui les rendent un échantillon de test élargie. Dans la littérature, plusieurs méthodes peuvent appartenir à la famille des approches spatio-temporelle à savoir \cite{benr16} \cite{aslam12} \cite{Lochert08} cependant toutes ces approches sont basé sur les mêmes concepts mathématiques utilisé par les méthodes MPC, RoadGate et ECSA ce qui argumente notre choix pour les approches avec lesquelles nous faisons les comparaisons expérimentales. 
\end{itemize}

\begin{landscape} 
\begin{table}[ht]
 \centering
 \begin{center}
%\label{tab-comparatif}
%\newcolumntype{M}[1]{>{\raggedright}m{#1}}
\begin{tabular}{|p{5cm}||p{5cm}|p{5cm}|p{5cm}|}
\hline
&RoadGate&MPC&ECSA\\
\hline
\hline
Type de données &Graphe &Transactionnelle & Matricielle\\
\hline
Méthode utilisée & couverture par sommets& traverse minimale & Maximal delay constraint coverage\\
\hline 
Scalabilité (véhicule/$Km^2$) & $260 000 v/250Km^2$& $1000 véhicules /2.5 Km^2$&50-2000 véhicule / 2000 bus et taxi\\
\hline
jeu de donnée (réel ou simulation)& Simulation (MMTS) & Simulation (SUMO) & Shangai, chine \\
\hline
Domaine d'utilisation &Non safety application
& Non safety application
 & Non safety application\\

\hline
\end{tabular}
\caption{Comparaison des méthodes de couverture spatio-temporelle de la littérature.}\label{tab-comparatif}
\end{center}
\end{table}

\end{landscape} 
\section{Conclusion}
Dans ce chapitre, nous avons présenté le système de transport intelligents STI ainsi que les réseaux sans fil utilisés, les infrastructures réseaux déployés et les types de communications autorisés dans ce type de système. De surcroît, nous avons décrit les limites et les problèmes de déploiement d'une couverture réseau stable dans un système de transport non centralisé avec la classification des méthodes existantes dans la littérature et la présentation de chacune.\\
Dans le chapitre suivant, nous présenterons notre première méthode de déploiement des \textit{RSU}s basée sur des données séquentielles nommée  \textrm{\emph{SPaCov}} ainsi que sa version améliorée nommée  \textrm{\emph{SPaCov+}}.
\newpage
\thispagestyle{empty}
\mbox{}
\newpage

%%%%%%%%%%%%%%%%%%%%%%%%%%%%%%%%%%%%%%%%%%%%%%%%%%%%%%%%%%%%%
\chapter{SPaCov: fouiller pour mieux placer}
\minitoc
\chaptermark{SPaCov}

\section{Introduction}
Dans ce chapitre, nous présentons notre première approche spatio-temporelle qui permet d'exploiter les modèles de mobilité à partir des trajectoires du mouvements des véhicules. A cet égard, nous décrivons une nouvelle méthode de représentation des modèles de mobilité de véhicules basé sur la théorie des patterns séquentiels en data mining. Dans ce qui suit, nous rappelons et adaptons certaines notions mathématiques de patterns séquentiels.\\
La méthode  \textrm{\emph{SPaCov+}} est une amélioration de la méthode basique  \textrm{\emph{SPaCov}}. Dans le reste de ce chapitre, nous détaillons davantage ces deux méthodes.
\section{Fondements mathématiques}
\begin{definition}\textsc{\textbf{\textsc{(}Base de données séquentielle \textsc{)}}}\mbox{}\newline
Une base de données séquentielle est un ensemble ordonné de séquence. Dans notre cas, la séquence est une liste ordonnée de croisement (ou junction) traversé par un véhicule, noté $< j_{1} j_{2} \ldots j_{k} >$.
\end{definition}

\begin{table}[ht]
 \centering
 \begin{center}
\begin{tabular}{|p{2cm} |>{\centering\arraybackslash}p{4cm}c|}
\hline
$\mathcal{V}$&$\mathcal{S}$&\\
\hline
\hline
$v_{1}$&$j_{2}j_{6}j_{7}j_{1}$ &\\

$v_{2}$& $j_{3}j_{6}j_{4}$&\\

$v_{3}$&$j_{5}j_{3}j_{7}$&\\

$v_{4}$& $j_{6}j_{5}$&\\

$v_{5}$& $j_{2}j_{6}j_{7}$&\\

$v_{6}$&$j_{3}j_{6}$&\\

$v_{7}$&$j_{6}j_{5}j_{3}j_{7}$&\\

$v_{8}$&$j_{6}j_{3}$&\\

\hline
\end{tabular}
\caption{Base de données séquentielle $\mathcal{D}$}\label{cxt2}
\end{center}
\end{table}

\begin{example}
Le Tableau \ref{cxt2} illustre une base de données séquentielle $\mathcal{D}$, où $\mathcal{V} = \{v_{1}, v_{2}, v_{3}, v_{4}, v_{5}, v_{6},v_{7}, v_{8}\}$ est un ensemble de trajectoires des véhicules et $\mathcal{S}$  représente une séquence de croisements dont ces derniers véhicules les traversent.
\end{example}

\begin{definition}\textsc{\textbf{\textsc{(}support d'une séquence \textsc{)}}}\mbox{}\newline\label{defsupp}
Soit $s$  est une séquence de croisements. Le support de $s$, noté $supp(s)$, est une indication  de la fréquence à laquelle la séquence $s$ apparaît dans $\mathcal{D}$. Il est calculé comme suit:
\[ supp(s)=\frac{|\{s_i \in \mathcal{D} | s \lesssim s_i\}|}{|\mathcal{D}|}\]
Le symbole $\lesssim$ représente une inclusion ordonnée (ordre des éléments dans les séquences), par exemple $(j_{3}j_{4})$ $\lesssim$ $(j_{2}\mathbf{j_{3}}j_{5}j_{6}\mathbf{j_{4}})$ par contre $(j_{3}j_{4})$ $\not \lesssim$. $(j_{2}\mathbf{j_{4}}j_{5}j_{6}\mathbf{j_{3}})$.
\end{definition}

\begin{example}
\[ 
supp((j_{6}j_{7}))=\frac{|\{v_1,v_5,v_7\}|}{|\mathcal{D}|} = \frac{3}{8} = 42\%
\]
\end{example}

\begin{definition}\textsc{\textbf{\textsc{(} Séquence fréquente \textsc{)}}}\mbox{}\newline\label{SFr}
La séquence $s$ est dite fréquente si le support de $s$ dans $\mathcal{D}$ est supérieur ou égal à un \emph{seuil spécifié par l'utilisateur} noté  $minsup$. Formellement, nous définissons l'ensemble $\mathcal{FS}$ de séquences fréquentes dans $\mathcal{D}$ comme:
\[ \mathcal{FS} = \{s \in \mathcal{D} | supp(s) \geqslant minsup\}\]\label{FSDEF}.
\end{definition}

\begin{example}
Considérant un seuil $minsup=2/8$ et la base de données séquentielle $\mathcal{D}$ de l'exemple \ref{cxt2}, alors la séquence $s=<j_{2}j_{6}j_{7}>$ est fréquente  puisque son support $supp(s)=2/8  \geqslant minsup$.
\end{example}

\begin{definition}\label{MFSDEF}
\textsc{\textbf{\textsc{(}Séquence fréquente maximale \textsc{)}}}\mbox{}\newline
Soit l'ensemble des séquences fréquentes $\mathcal{FS}$. Formellement, nous définissons l'ensemble $\mathcal{MFS}$ des séquences fréquentes maximales dans $\mathcal{D}$ comme suit: 
\[ \mathcal{MFS} = \{s \in \mathcal{FS} | \nexists s^{'} \in \mathcal{FS}, s \lesssim s^{'} \} \]
\end{definition}

\begin{example}
Pour un seuil support $minsup=2/8$, la séquence $s=<j_{5}j_{3}j_{7}>$ est considéré comme une séquence fréquente maximale dans $\mathcal{D}$ puisque $supp(s)=2/8 \geqslant minsup$  \textit{et} tous ces sur-ensembles sont infréquents. Cependant, la séquence $s1=<j_{5}j_{3}>$ de support égal à $2/8$, elle est fréquente mais elle ne peut pas être considérée comme une séquence maximale fréquente puisqu'elle existe une séquence incluant $s1$ est fréquente qui est $s$.
\end{example}

\begin{definition}\textsc{\textbf{\textsc{(}Séquence infréquente ou Séquence rare\textsc{)}}}\mbox{}\newline\label{SRa}
La séquence $s$ est dite  infréquente (ou rare) si le support de $s$ dans $\mathcal{D} $ est strictement inférieur à $minsup$. Formellement, nous définissons l'ensemble $\mathcal{RS}$ des séquences  infréquentes dans $\mathcal{D}$ comme:
\[ \mathcal{RS} = \{s \in \mathcal{D} | supp(s) < minsup\} \]\label{RSDEF}.
\end{definition}
\begin{example}
Considérant un seuil $minsup = 2/8$ et la base de données séquentielle $\mathcal{D}$ de l'exemple \ref{cxt2}, la séquence $s=<j_{6}j_{5}j_{7}>$ est infréquente puisque son support $supp(s) = 1/8 < minsup$. 
\end{example}

\begin{definition}\textsc{\textbf{\textsc{(}Séquence rare minimale \textsc{)}}}\mbox{}\newline\label{MRSDEF}
Soit $\mathcal{RS}$ l'ensemble des séquences infréquentes dans $\mathcal{D}$. Formellement, nous définissons l'ensemble $\mathcal{MRS}$ des séquences rares minimales dans $\mathcal{D}$ comme:

\[ \mathcal{MRS} = \{s \in RS | \nexists s^{'} \in RS, s \gtrsim s^{'} \} \]
\end{definition}

\begin{example}
Pour un $minsup=2/8$, la séquence $s=<j_{6}j_{3}j_{7}>$ est considérée comme une séquence rare minimale (ou infréquente) dans $\mathcal{D}$ puisque: 
\begin{itemize}
    \item $supp(s)=1/8 < minsup$ \\
    \textit{et}
    \item toutes ses sous-séquences sont fréquentes, c'est à dire $<j_{6}j_{3}>$, $<j_{6}j_{7}>$ and $<j_{3}j_{7}>$.
\end{itemize}
     
\end{example}
Dans ce qui suit, nous commençons par décrire la première méthode que nous introduisons.

\section{Description de SPaCov}
L'objectif principal de l'approche \textrm{\emph{SPaCov}} est d'assurer une couverture élevée en utilisant le plus petit nombre possible de \textit{RSU}. Il prend en entrée une base de données séquentielle   $\mathcal {D} $ représentant un grand nombre de trajectoires de véhicules et fournit en sortie les emplacements des \textit{RSU}s appropriés. Cette méthode comporte deux modules, à savoir l'extraction de modèles de mobilité maximale (MMP) et le calcul de la couverture (CoC). Le premier module extrait un ensemble de modèles de mobilité maximale à partir d'une base de données séquentielle. Le deuxième module, calcule les emplacements des \textit{RSU}s appropriés qui couvrent tous les modèles de mobilité extraits. Il est à noter que, dans la littérature, un modèle de mobilité qui représente une séquence ordonnée de croisement traversée par des véhicules n'existe pas. Dans ce qui suit, nous détaillons les composants MMP et CoC.

\subsection{Les patterns de mobilité maximaux (MMP)}
Nous rappelons que le but de la méthode  \textrm{\emph{SPaCov}} est d'assurer une couverture élevée en utilisant le plus petit nombre possible de \textit{RSU}. Pour atteindre cet objectif, nous devons extraire le plus petit ensemble représentatif des modèles de mobilité fréquents tout en choisissant les meilleurs emplacements des \textit{RSU}s qui maximise le taux de couverture. En effet, nous recherchons les séquences fréquentes les plus longues. L'ensemble défini dans la définition \ref{MFSDEF}, comme l'ensemble des séquences fréquentes maximales. En plus, ils doivent être des modèles de mobilité maximale (c'est-à-dire chaque modèle de mobilité fréquente ne doit pas être inclus dans un autre modèle fréquent). À cette fin, nous nous appuyons tout d'abord sur l'algorithme prefixSpan \cite{prefixspan} pour extraire les modèles de mobilité fréquents (c'est-à-dire les séquences fréquentes des croisements) des trajectoires des véhicules. De plus, nous définissons l'algorithme getMaximalFrequent() (c.f. algorithme \ref {getMaximalFrequent}), qui ne sélectionne que les modèles maximaux parmi les plus fréquents.
%\LinesNumbered
\begin{algorithm}
\BlankLine
  \KwIn{$\mathcal{FS}$: sorted set of frequent sequences. $FS$ is sorted with respect to the size of sequences}
  \KwOut{$\mathcal{MFS}$: set of maximal frequent sequences.}
  \BlankLine
  \ForEach{$E$ of $\mathcal{FS}$}
        {
        \BlankLine
         deleteSubSequences($E$, $\mathcal{MFS}$)\\
         insert($E$, $\mathcal{MFS}$)
        }
  return ($\mathcal{MFS}$)

\caption{getMaximalFrequent() }
\label{getMaximalFrequent}
\end{algorithm}

\begin{example} \label{exp2}
Considérant un minsupp $\textit{minsup}= 2/8$ et la base de données séquentielle $\mathcal{D}$ donné dans l'exemple \ref{cxt2}. Pour extraire les modèles de mobilités fréquentes, le composant MMP procède comme suit:

\begin{enumerate}
  \item \textbf{Étape 1:} invoque l'algorithme prefixSpan, qui extrait l'ensemble ordonné des séquences fréquentes:
     \[
\begin{tabular}{l}
 $\mathcal{FS} =\{j_{2},j_{3},j_{5}, j_{6},j_{7}, <j_{2}j_{6}>,<j_{2}j_{7}>,<j_{3}j_{6}>,<j_{3}j_{7}>,<j_{5}j_{7}>,$\\
 $<j_{5}j_{3}>,<j_{6}j_{7}>,<j_{6}j_{5}>,<j_{6}j_{3}>,<j_{2}j_{6}j_{7}>,<j_{5}j_{3}j_{7}>\}$
\end{tabular}     
   \]
  \item \textbf{Étape 2:} invoque l'algorithme getMaximalFrequent, qui permet d'extraire l'ensemble des séquences fréquentes maximales à partir de l'ensemble des séquences fréquentes:
   \[
      \mathcal{MFS}=\{<j_{6}j_{5}>,<j_{3}j_{6}>,<j_{6}j_{3}>,<j_{2}j_{6}j_{7}>,<j_{5}j_{3}j_{7}>\}
   \]
\end{enumerate}
\end{example}

\subsection{Calcul de couverture (CoC)}
Le composant CoC vise à calculer un ensemble minimal de croisement, noté $ CR $, qui couvre l'ensemble $ \mathcal {MFS} $ de séquences fréquentes maximales qui représente les modèle de mobilité des véhicules. Par conséquent, le  placement des \textit{RSU}s, dans des croisements qui couvre ces modèles, garantit que les véhicules entreront en contact avec au moins un \textit{RSU} pendant leurs déplacements. En effet, calculer le $ CR $ revient à trouver une traverse minimale qui couvre où les sommets d'un hypergraphe peuvent être vu comme des croisements. Dans ce qui suit, nous rappelons brièvement quelques notions de base de la théorie des graphes et nous montrons comment nous adaptons cette théorie pour calculer l'ensemble $ CR $.

\begin{definition}\textsc{\textbf{\textsc{(}Hypergraph\textsc{)}}}\mbox{}\newline
Un hypergraphe $\mathcal{H}$ est un pair ordonné  $\mathcal{H} = (\mathcal{J} , \mathcal{E})$, où $\mathcal{J} =\{j_{1},\ldots,j_{n}\}$ est un ensemble fini d'éléments
et $\mathcal{E}=\{E_{1},\ldots,E_{m}\}$ est une famille des sous éléments de $\mathcal{J}$ tel que  \cite{T14}:
\begin{enumerate}
\item $E_{i} \neq \emptyset$ $(i=1,\ldots,m)$ et
\item $\bigcup_{i=m}^{i=1} E_{i} = \mathcal{J}$.
\end{enumerate}
Les éléments de $\mathcal{J}$ sont appelés sommets (vertices), tandis que les éléments de $\mathcal{E}$ sont appelés arrêtes (hyperedges) de l'hypergraphe $\mathcal{H}$\cite{T11} \label{hypergraph-def}.
\end{definition}
\begin{example}
En respectant la Définition \ref{hypergraph-def}, nous pouvons transformer l'ensemble $\mathcal{MFS}$ des séquences fréquentes maximales sous forme d'un hypergraphe $\mathcal{H} = (\mathcal{J} , \mathcal{E})$, tel que:
\begin{enumerate}
    \item $ \mathcal{J} = \bigcup_{S_{i} \in \mathcal{MFS}} S_{i}$
          
    \item $ \mathcal{E}=\mathcal{MFS}$
          
\end{enumerate}
La Figure \ref{hypergraph} illustre l'hypergraphe $\mathcal{H}$ associé à l'ensemble des séquences fréquentes maximales $\mathcal{MFS}$ de l'exemple \ref{exp2}. 
\begin{figure}[ht]
  \begin{center}
  \includegraphics[scale=0.5]{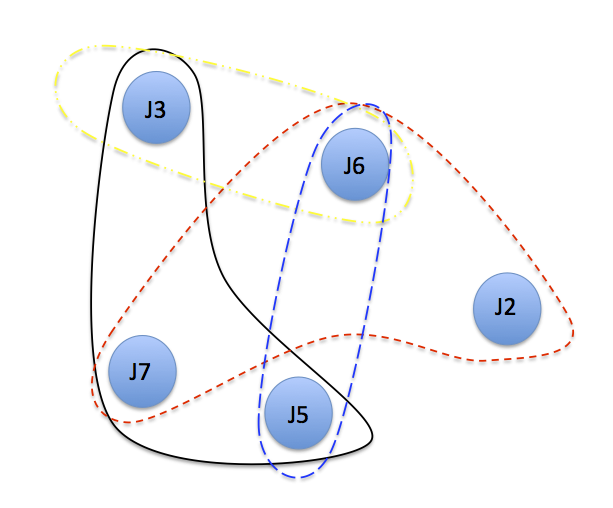}
  \caption{Hypergraphe représentant l'ensemble des séquences fréquentes maximales  de l'exemple \ref{exp2} } \label{hypergraph}
\end{center}
\end{figure}
\end{example}
\begin{definition}\textsc{\textbf{\textsc{(}Traverse minimale \textsc{)}}}\mbox{}\newline\label{trans}
Soit $\mathcal{H} = (\mathcal{J},\mathcal{E})$ est un hypergraphe. Un ensemble $T \subseteq \mathcal{J}$ est appelé une traverse (alias un ensemble de frappe) de $\mathcal{H}$ s'il croise tous ses arrêtes (hyperedges), c.à.d., $T \cap E_{i} \neq \emptyset $ $\forall E_{i} \in \mathcal{E}$. Une traverse T est dit minimale lorsque aucun sous-ensemble $T^{'}$ de $T$ n'est une traverse de $\mathcal{H}$. Dans le reste, $TrMin_{H}$ représente l'ensemble des traverses minimales de $\mathcal{H}$ \cite{T14}.
\end{definition}

\begin{example} \label{mt}
D'après l'hypergraphe de la Figure \ref{hypergraph} représentant l'ensemble $\mathcal{MFS}$ de l'exemple \ref{exp2}, nous extrayons l'ensemble des traverses minimales suivant: 
 \begin{center}
\begin{tabular}{c|c|c}
     (1) $TrMin_{1}=\{j_6, j_3 \}$ & (2) $TrMin_{2}=\{ j_6, j_5\}$  & (3)   $TrMin_{3}=\{ j_6, j_7\}$
     
\end{tabular}
\end{center}
\end{example}

Il est maintenant clair que le calcul de l'ensemble minimal de croisement $ CR $ qui couvre l'ensemble des séquences fréquents maximales $\mathcal{MFS} $ se résume au calcul d'une traverse minimale de l'hypergrahe $\mathcal{H}$ associée à l'ensemble $\mathcal{MFS}$. Dans la littérature, plusieurs algorithmes d'extraction des traverses minimales des hypergraphes ont été proposés. Dans notre cas, nous utilisons l'algorithme introduit par Murakami et al. \cite{T11} pour extraire l'ensemble des traverses minimales de l'hypergraphe associé à l'ensemble $\mathcal{MFS}$. A Chaque fois où l'algorithme fourni plus qu'une traverse minimale, nous choisissons le plus petit en terme de cardinalité comme $ CR $.
\begin{figure}[!ht]
\centering
\includegraphics[scale=0.6]{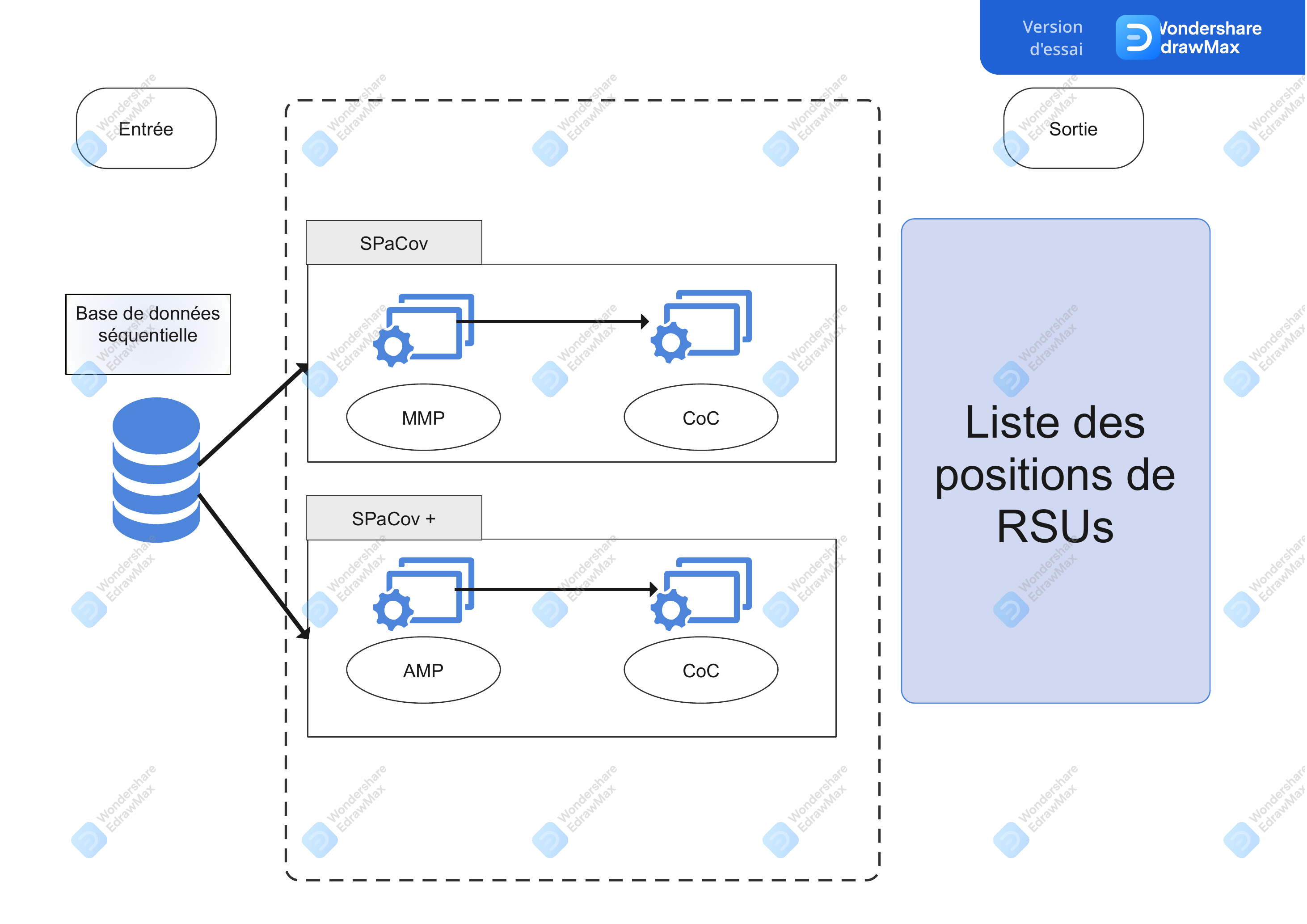}
\caption{L'architecture des méthodes  \textrm{\emph{SPaCov}} et  \textrm{\emph{SPaCov+}}.}
\end{figure}\label{spacov+}
\section{Description de  \textrm{\emph{SPaCov+}}}
Dans cette section, nous présentons la méthode améliorée de  \textrm{\emph{SPaCov}} qui sera nommée  \textrm{\emph{SPaCov+}}. Le but ultime de  \textrm{\emph{SPaCov+}} est d'atteindre un taux de couverture très élevé (c'est-à-dire plus de couverture que ce qui est fourni par \textrm{\emph{SPaCov}} tout en utilisant un nombre minime de \textit{RSU}. Pour atteindre cet objectif,  \textrm{\emph{SPaCov+}} extrait un ensemble représentatif de tous modèles de mobilité (c'est-à-dire les séquences fréquentes et celles rares). Cette extraction est assuré par le module 'All Mobility Pattern' (AMP) (voir la sous-section \ref{AMPSEC}). Ensuite, il calcule les emplacements des \textit{RSU}s appropriés qui couvrent tous les modèles extraits à l'aide du composant CoC de la méthode  \textrm{\emph{SPaCov}}. Il convient de mentionner que pour amélioré le taux de couverture, il faut généralement un nombre supplémentaire de \textit{RSU}. En effet, nous pouvons accepter ce surcoût pour une nette amélioration du taux de couverture.

\subsection{All Mobility Patterns (AMP)}\label{AMPSEC}
À partir d'une base de données séquentielle, nous pouvons extraire un nombre impressionnant de modèles de mobilité. Par conséquent, couvrir tous ces modèles nécessite également un grand nombre de \textit{RSU}. En outre, Une base de données séquentielle peut être vu comme l'union de deux ensembles $\mathcal{FS}$ (voir Définition \ref{SFr}) et $\mathcal{RS}$ (voir Définition \ref{SRa}) ; le premier contient les séquences fréquentes et l'autre contient les séquences rares. Par Conséquent, le modèle de mouvement des véhicules est inclus dans les deux ensembles $\mathcal{FS}$ et $\mathcal{RS}$ qui représentent un sous ensemble réduit et exacte de la base initiale. En effet, la déduction des modèles de mobilité peut se faire à partir de deux ensembles représentatifs $\mathcal{FS}$ et $\mathcal{RS}$.\\
Pour ces raisons, nous supposons que $\mathcal{FS}$ et $\mathcal{RS}$ forment un sous ensemble représentatif de la totalité des mouvements des véhicules. Comme le montre le module MMP, les modèles de mobilité fréquents pourraient être représentés par les séquences fréquentes maximales (Définition \ref{MFSDEF}). De plus, nous pouvons représenter les modèles de mobilités rares par les séquences rares minimales (Définition \ref{MRSDEF}), qui est un sous-ensemble de l'ensemble des séquences rares contenant les séquences rares les plus courts. Ainsi, l'ensemble représentatif de tous les modèles de mobilité, notez $\mathcal{AP}$, est l'union des deux ensembles $\mathcal{MFS}$ et $\mathcal{MRS}$ respectivement, les séquences fréquentes maximales et les séquences rares minimales. Pour calculer l'ensemble $\mathcal{AP}$, nous procédons comme suit:
\begin{enumerate}
\item \textbf{Étape 1:}  Extrait l'ensemble des séquences fréquentes maximales $\mathcal{MFS}$ à partir de la base de données séquentielle $\mathcal{D}$ par l'exécution du composent MMP; 
\item \textbf{Étape 2:} Extrait l'ensemble des séquences rares minimales $\mathcal{MRS}$ à partir de la base de données séquentielle $\mathcal{D}$. Dans cette partie, nous utilisons l'algorithme New\_SPAM \cite{prefixspan} pour l'extraction des séquences rares à partir de la base de données séquentielles $\mathcal{D}$. Ensuite, nous appelons l'algorithme getMinimalRare (algorithme \ref{getMinimalRare} page ) pour sélectionner les séquences rares minimales parmi celles qui sont rares. 
\item \textbf{Étape 3:} Élaguer les séquences nulles, dont leurs supports égale à zéro à partir de l'ensemble $\mathcal{MRS}$.
\item \textbf{Étape 4:} Élaguer toutes les 1-séquences (séquence de taille 1)  à partir de l'ensemble $\mathcal{MRS}$.
\item \textbf{Étape 5:}  $\mathcal{AP}=\mathcal{MRS} \cup \mathcal{MFS}$.
\end{enumerate}
Les deux étapes d'élagage sont bénéfique du point de vue de temps d'exécution, puisqu'ils vont \underline{minimiser le nombre de candidats} généré à chaque itération de l'algorithme  \underline{sans changer } \underline{le résultat final}.\\
Preuve: 
Les deux étapes 3 et 4 minimisent l'ensemble candidat et prennent de ne pas toucher au résultat final de l'ensemble $\mathcal{AP}$
\begin{itemize}
    \item Minimisation d'élément dans l'ensemble candidat:
     \begin{itemize}  
        \item L'étape 3 élague toutes les séquences nulles ainsi que toutes les sur-séquences d'une séquence nulle (elles le sont forcément) ce qui implique une réduction majeure dans le nombre de séquence générée à chaque itération.
        \item Dans l'étape 4, toutes les 1-séquences seront élaguées ainsi que toutes ces sur-séquences, ce qui traduit par la réduction du nombre de candidat à chaque itération (les sur ensembles d'une 1-séquence sont formé par des séquences nulles ou par des 1-séquences). 
        \item Exemple pire des cas: L'unique cas de figure où les deux étapes 3 et 4 ne minimisent pas le nombre des séquences candidates générées c'est lorsque nous admettons l'existence d'une  base de données séquentielle contenant un seul chemin  $s$ (séquence) composé de $n$ croisements (junction) et $m$ véhicules qui traversent tous le même chemin $s$, dans ce cas les deux étapes 3 et 4 n'affectent pas le nombre de séquences, une telle base de séquences ne contient aucune séquence nulle et aucune 1-séquence (dans le cas réel une telle base de données n'existe jamais).
      \end{itemize}
    \item  Le résultat final est inchangeable:
       \begin{itemize}  
        \item L'étape 3: L'ensemble final $\mathcal{AP}$ contient les traverses minimales de l'union entre l'ensemble $\mathcal{MRS}$ et $\mathcal{MFS}$, d'une manière formelle, les traverses minimales ne peuvent pas contenir des séquences nulles (voir Définition \ref{trans}) donc l'existence ou l'absence des séquences nulles dans $\mathcal{MRS}$ ne change pas l'ensemble $\mathcal{AP}$ puisque ce dernier ne peut pas contenir des 0-séquences. D'une manière informelle, l'existence d'une séquence nulle se traduit par l'existence d'un chemin non visité par aucun véhicule ce qui implique (i) un chemin non connecté (les croisements ne sont pas adjacents) ou bien (ii) un chemin non intéressant (personne ne le suit) dans les deux cas nous n'avons aucun intérêt à couvrir un tel chemin (d'où aucun intérêt à couvrir cette séquence).
        \item L'étape 4:  L'ensemble final $\mathcal{AP}$ contient les traverses minimales de l'union entre l'ensemble $\mathcal{MRS}$ et $\mathcal{MFS}$, d'une manière formelle, et contrairement à l'étape 3, les traverses minimales peuvent contenir des 1-séquences (voir Définition \ref{trans}) mais nous avons élagué ces éléments pour ne pas avoir l’ensemble des \textit{RSU}s égale au nombre des croisements de la carte. En outre, nous supposons un chemin, ensemble de croisement, un déplacement entre au moins deux croisements avec cette condition nous pouvons réduire le nombre des \textit{RSU}s et offrir une couverture spatiale meilleur que celle de  \textrm{\emph{SPaCov}}.
      \end{itemize}
\end{itemize}
%\LinesNumbered
\begin{algorithm}
\BlankLine
  \KwIn{$\mathcal{RS}$: Ensemble trié des séquences rares. $\mathcal{RS}$ est trié en fonction de la taille des séquences}
  \KwOut{$\mathcal{MRS}$: ensemble des séquences rares minimales.}
  \BlankLine
  $\mathcal{MRS} = \emptyset$\\
\ForEach{$E$ of $\mathcal{RS}$}
        {
        \BlankLine
         \If{(!existSubset($E$, $\mathcal{MRS}$))}
         {insert($E$, $\mathcal{MRS}$)}
         {}
        }
  return ($\mathcal{MRS}$)

\caption{getMinimalRare() }
\label{getMinimalRare}
\end{algorithm}
\begin{example}\label{expMRS}
En considérant un seuil $\textit{minsup} = 2/8$ et la base de données de séquentielle $\mathcal{D}$ donnée dans l'exemple \ref{cxt2}, pour extraire les modèles de mobilité, le composant AMP se procède comme suit:

\begin{itemize}
  \item \textbf{Étape 1:} il  extrait l'ensemble $\mathcal{MFS}$ suivant:
\[ \mathcal{MFS}=\{<j_{6}j_{5}>,<j_{3}j_{6}>,<j_{6}j_{3}>,<j_{2}j_{6}j_{7}>,<j_{5}j_{3}j_{7}>\} 
\]
  \item \textbf{Étape 2-3-4:} il appelle l'algorithme nommé getMinimalRare, qui extraira l'ensemble des séquences rares minimales. Pour réduire la taille de l'ensemble $\mathcal{MRS}$, le module supprime toutes les séquences nulles ainsi que les séquences de taille 1. L'ensemble final $\mathcal{MRS}$ est le suivant:
  
   \[
      \mathcal{MRS}=\{<j_{3}j_{2}>,<j_{6}j_{5}j_{7}>,<j_{6}j_{5}j_{3}>,<j_{6}j_{3}j_{7}>\} ~~ (après~~élagage)
   \]
   \item \textbf{Étape 5:} il calcule l'ensemble $\mathcal{AP}$ suivant:
   \[
   \mathcal{AP}=\mathcal{MFS} \cup \mathcal{MRS}
   \]
     \[
      \mathcal{AP}=\{<j_{6}j_{5}>,<j_{3}j_{6}>,<j_{6}j_{3}>,<j_{2}j_{6}j_{7}>,<j_{5}j_{3}j_{7}>,<j_{3}j_{2}>,<j_{6}j_{5}j_{7}>,
      \]
  \[
      <j_{6}j_{5}j_{3}>,<j_{6}j_{3}j_{7}>\}
     \]
\end{itemize}
\end{example}
\section{Étude de la complexité et de la validité}
Dans cette section, nous prouvons  la complétude et la terminaison de notre algorithme liée au deux méthodes  \textrm{\emph{SPaCov}} et  \textrm{\emph{SPaCov+}} puis nous calculons les complexités de deux méthodes.
\subsection{Terminaison et complétude}
%%% CHECK %%%
Le nombre des \textit{RSU}s générés par l'algorithme est fini. En effet, le nombre $k$ des \textit{RSU}s est donné par l'utilisateur est au pire des cas $k$ doit être égal à la cardinalité de $|\mathcal{J}|$. Puisque l'ensemble fourni par l'algorithme est fini, les quatre boucles de l'algorithme parcourant cet ensemble sont alors finis.\\
De plus, le premier composant MMP qui génère $\mathcal{MFS}$ est prouvé fini. Donc, l'algorithme exécuté par la méthode \textrm{\emph{SPaCov}} se termine.\\
De même pour le deuxième composant à savoir AMP qui génère $\mathcal{MFS}$ et $\mathcal{MRS}$) il est prouvé comme fini. d'où, l'algorithme exécuté par la méthode \textrm{\emph{SPaCov+}} se termine.

\subsection{Complexité théorique}
%%   CHECK      %%%%%%%%%%%%%%%%%%%%%%%%%%%%
Dans ce qui suit, nous prouvons la terminaison des algorithmes liés au composants MMP, AMP et CoC puis nous calculons la complexité au pire des cas de deux méthodes \textrm{\emph{SPaCov}} et \textrm{\emph{SPaCov+}}. Dans le cas de la méthode  \textrm{\emph{SPaCov}} nous pouvons déduire sa complexité par la somme des complexités des composant MMP et CoC. Par ailleurs, la complexité de \textrm{\emph{SPaCov+}} est la somme de deux complexités des composants AMP et CoC.\\
La complexité de \textrm{\emph{SPaCov}} et de \textrm{\emph{SPaCov+}} sont déduit à partir des composants MMP, CoC et AMP. 
\begin{itemize}
\item Complexité de MMP : \\
L'algorithme analyse d'abord la base de données pour calculer le support de chaque séquence. La complexité de cette étape est  $\mathcal{O}(n \times m)$ où n et m sont le nombre de transactions et la longueur moyenne des transactions respectivement. Ensuite, un élagage se fait pour éliminer les séquences non fréquentes et une recherche récursive, des séquences plus grande en terme de taille, à l'aide d'une recherche en profondeur d'abord. Pour chaque appel récursif, l'algorithme nécessite un temps de $\mathcal{O}(n\times m)$ pour créer une base de données projetée des séquences de préfixe, puis il nécessite un temps de $\mathcal{O}(n)$ pour calculer le support dans cette base projetée. 
\item Complexité de CoC : \\
{\`A} partir d'un hypergraphe de $a$ sommets et $b$ hyperarêtes nous avons une complexité exponentielle, au pire des cas, de l'ordre de $\mathcal{O}(b * a^{b+1})$ avec $a=|\mathcal{J}|$ et $b=|\mathcal{E}|$, les tests pour vérifier une traverse minimale s'effectuent en $\mathcal{O(1)}$.
\item Complexité de AMP : \\
L'algorithme analyse d'abord la base de données pour calculer le support de chaque séquence. La complexité de cette étape est  $\mathcal{O}(n \times m)$ où n et m sont le nombre de transactions et la longueur moyenne des transactions respectivement. Ensuite, un élagage se fait pour éliminer les séquences nulles et les 1-séquences rares,  une recherche récursive des séquences plus grande en terme de taille, à l'aide d'une recherche en profondeur d'abord. Pour chaque appel récursif, l'algorithme nécessite un temps de $\mathcal{O}(n\times m)$ pour créer une base de données projetée des séquences de préfixe, puis il nécessite un temps de $\mathcal{O}(n)$ pour calculer le support dans cette base projetée. 
\end{itemize}
En effet, la complexité théorique, au pire des cas, de deux composants MMP et CoC ou AMP et CoC est la même est égale à :\\
$\mathcal{O}((n\times m + n + a \times a^b+1)$.

\section{Conclusion}
L'objectif principal de  \textrm{\emph{SPaCov}} est d'assurer une couverture élevée en utilisant le plus petit nombre possible de \textit{RSU}. Il prend en entrée une base de données séquentielle $\mathcal{D} $ représentant un grand nombre de trajectoires de véhicules et fournit en sortie les emplacements \textit{RSU} appropriés. Cette méthode comporte deux composants, à savoir l'extraction de modèles de mobilité maximale (MMP) et le calcul de la couverture (CoC). Le premier module extrait un ensemble de modèles de mobilité maximale de la base de données de séquences $\mathcal{D}$. Ce dernier calcule les emplacements \textit{RSU} appropriés qui couvrent tous les modèles de mobilité extraits. Il est à noter que, dans la littérature, un pattern de mobilité représente une séquence de junctions. La méthode  \textrm{\emph{SPaCov+}} est une amélioration de la méthode  \textrm{\emph{SPaCov}} puisqu'elle couvre non seulement les patterns de mouvement fréquents mais aussi les patterns de mouvement rares ce qui rend cette méthode plus représentative à la totalité des patterns de mouvement des véhicules. La méthode  \textrm{\emph{SPaCov+}} comporte deux composants, à savoir AMP et Coc. Le premier composant 'All mobility patterns' (AMP) extrait tous les patterns de mobilité des véhicules (les patterns de mobilité des véhicules sont divisés en deux types ; patterns de mobilité fréquents et mobilité de patterns rares), le deuxième composant CoC c'est celui de  \textrm{\emph{SPaCov}} (déjà détaillé dans cette section). Dans le chapitre suivant, nous proposons une deuxième méthode de déploiement des \textit{RSU}s basée sur une contrainte budgétaire en termes de nombre de \textit{RSU} (fixé par l'utilisateur) et utilise un algorithme heuristique pour extraire les meilleurs emplacements afin de déployer les $k$ \textit{RSU}s de l'utilisateur.
\newpage
\thispagestyle{empty}
\mbox{}
\newpage
%%%%%%%%%%%%%%%%%%%%%%%%               HeSPiC            %%%%%%%%%%%%%%%%%%%%%%%%%%%%%%%%%%%%
\chapter{HeSPiC: couvrir avec une contrainte budgétaire}
\chaptermark{HeSPiC}
\minitoc
\section{Introduction}
Dans ce chapitre, nous présentons notre deuxième approche spatio-temporelle nommée  \textrm{\emph{HeSPiC}}. Cette dernière est une méthode Heuristique pour la couverture des modèles de mouvement des véhicules basée sur une contrainte budgétaire, elle permet le déploiement des \textit{RSU} en respectant une contrainte budgétaire supervisée (nombre de \textit{RSU} à déployer). La méthode vise à maximiser le taux de couverture soumis à une contrainte budgétaire, elle prend en entrée la base de données séquentielle $\mathcal{D}$ et le budget sous forme d'un nombre de \textit{RSU} $k$ et il fournit en sortie les meilleurs emplacements dans lesquels les \textit{RSU} seront placées (meilleur en terme de couverture).  \textrm{\emph{HeSPiC}} peut être divisé en deux modules: 
\begin{enumerate}
    \item \textbf{Module 1:} Calcule ($i$) l'ensemble $\mathcal{FS}$ de séquences fréquentes, ($ii$) l'ensemble $\mathcal{MFS}$ séquences fréquentes maximales et l'ensemble $\mathcal{MRS}$ séquences rares minimales de la base de données séquentielle $\mathcal{D}$. Ceci est effectué en utilisant les mêmes modules de  \textrm{\emph{SPaCov}} et  \textrm{\emph{SPaCov+}} à savoir MMP et AMP;    
\item \textbf{Module 2:} le module exécute le composant HBCC pour calculer les meilleurs emplacements dans lesquels les \textit{RSU}s seront placées afin de maximiser la couverture. 
\end{enumerate}

\section{Fondements mathématiques}
\begin{definition} \textsc{\textbf{\textsc{(}Poids d'un croisement \textsc{)}}}\mbox{}\newline\label{jw}
Le poids $\mathcal{W}(j_i)$ d'un croisement $j_i$ est un couple de deux nombres positifs:
\[
    \mathcal{W}(j_i)=\Big(supp(j_i)_{\mathcal{MFS}},supp(j_i)_{\mathcal{MRS}}\Big) , avec
\]
\[
supp(j_i)_{\mathcal{MFS}}=  |\{s \in \mathcal{MFS} | j_i \in s \}|
\]
\[
supp(j_i)_{\mathcal{MRS}}=  |\{s \in \mathcal{MRS} | j_i \in s \}|
\]
\end{definition}
Nous définissons $\mathcal{W}=\left( \mathcal{W}(j_1) \mathcal{W}(j_2),\ldots , \mathcal{W}(j_n)\right) $ comme un vecteur de poids de tous les croisements. Ce vecteur peut être trié de la manière suivante:
\[
\mathcal{W}(j_p) > \mathcal{W}(j_q) \iff \    \left \{
   \begin{array}{r c l}
      supp(j_p)_{\mathcal{MFS}}  & > & supp(j_q)_{\mathcal{MFS}} \\
         & OU & \\
      supp(j_p)_{\mathcal{MFS}} & = & supp(j_q)_{\mathcal{MFS}} \ \ \& \ \ supp(j_p)_{\mathcal{MRS}}> supp(j_q)_{\mathcal{MRS}}
      
   \end{array}
   \right . 
\]
D'une manière informelle, $\mathcal{W}(j_p)$ est supérieur à $\mathcal{W}(j_q)$ ou nous pouvons dire que $j_p$ est plus important que $j_q$ point de vue poids si et seulement si son apparition dans l'ensemble $\mathcal{FS}$ est supérieur à l'apparition de $j_q$ dans le même ensemble. Au cas où il y a une égalité nous regardons la fréquence d'apparition de deux croisements dans l'ensemble des séquences rares $\mathcal{RS}$ celui qui possède la valeur supérieur c'est celui qui est le plus important.

\begin{example}
Dans cet exemple, nous considérons la base de données séquentielle $\mathcal{D}$ de l'exemple \ref{cxt2}, le poids des croisement sont donné comme suit : 
\[
\mathcal{W}(j_1)= (0,1); \mathcal{W}(j_2)= (1,5) ; \mathcal{W}(j_3)= (3,7) 
\]
\[
\mathcal{W}(j_4)= (0,1); \mathcal{W}(j_5)= (2,6) ; \mathcal{W}(j_6)= (4,6) ; \mathcal{W}(j_7)= (2,5)
\]
L'ensemble $\mathcal{W}$ est sera comme suit : \\
\[
\mathcal{W}= \{(0,1) ; (1,5) ; (3,7) ;(0,1) ;(2,6); (4,6) ; (2,5)\}
\]

\end{example}

\begin{definition} \textsc{\textbf{\textsc{(} Probabilité de traversé un croisement - Poisson distribution\textsc{)}}}\mbox{}\newline\label{jp}
La distribution de Poisson est une probabilité discrète qui exprime la probabilité qu'un nombre donné d'événements se produisent dans un intervalle de temps ou d'espace fixe si ces événements se produisent avec un taux constant connu et indépendamment du temps écoulé depuis le dernier événement \cite{poisson}.\\
La distribution de Poisson peut également être utilisée pour le nombre d'événements dans d'autres intervalles spécifiés tels que la distance, la surface ou le volume. Un événement peut se produire à $0, 1, 2 $, etc. fois dans un intervalle. Le nombre moyen d'événements dans un intervalle est noté $\lambda$. $\lambda$ est le taux d'événement, \textit{a.k.a} le paramètre \textit{rate}. La probabilité d'observer des événements $m$ dans un intervalle est donnée par l’équation:

\[
\mathcal{P}(X=m,\lambda,T)=e^{-\lambda} \frac{\lambda^{m}}{m!}
\]
Avec:
\begin{itemize}
    \item $\lambda$: est le nombre moyen d'événements dans T.
    \item $e$: Le nombre d'Euler égal à $2.71828$.
    \item $m$: prend les valeurs $0,1,2,$ etc. et représente le nombre d'événements dans T.
    \item $T$: intervalle du temps fini.
\end{itemize}
Dans un souci de simplification, nous pouvons écrire $\mathcal{P} (X = m, \lambda, T)$ simplement $\mathcal{P}(X = m)$. En effet, la distribution de Poisson pourrait être appliquée pour estimer la probabilité de franchir un croisement (junction) $j_i$ par
$m$ Véhicules. Cette probabilité est formalisée comme suit:

\[
\mathcal{P}(j_i,X=m)=e^{-\lambda_i} \frac{\lambda_i^{m}}{m!}
\]
Où $\lambda_i$ désigne le nombre de véhicules qui ont traversé $j_i$ pendant l'intervalle de temps fini $T$.\\
Par conséquent, la probabilité de franchir un croisement $j_i$ par tous les véhicules notée $\mathcal{P}(j_i)$ la probabilité de franchissement de $j_i$ est définie comme suit:

\[
\mathcal{P}(j_i)= \sum_{m=1}^{M} \mathcal{P}(j_i,X=m)
\]
Nous définissons $ \mathcal{P} = \left(\mathcal{P}(j_1), \mathcal{P}(j_2), \ldots, \mathcal{P}(j_n) \right)$ comme vecteur de probabilité de franchissement de tous les croisements. Ce vecteur est trié par ordre décroissant par rapport à la probabilité de franchissement.
\end{definition}

\begin{example}

Dans cet exemple, nous considérons la base de données séquentielle $\mathcal{D}$ de l'exemple \ref{cxt2} avec $\lambda_i$ désigne le nombre de véhicules qui ont traversé chaque croisements pendant l'intervalle de temps fini $T$. les probabilités de franchissement des croisements sont donnée comme suit : 
\[
\mathcal{P}(j_1,X=m)=e^{-\lambda_i} \frac{\lambda_i^{m}}{m!} = 0,630 ~ pour ~ \lambda = 1
\]
\[
\mathcal{P}(j_2,X=m)=e^{-\lambda_i} \frac{\lambda_i^{m}}{m!} = 0,863 ~ pour ~ \lambda = 2
\]
\[
\mathcal{P}(j_3,X=m)=e^{-\lambda_i} \frac{\lambda_i^{m}}{m!} = 0,859 ~ pour ~ \lambda = 5
\]
\[
\mathcal{P}(j_4,X=m)=e^{-\lambda_i} \frac{\lambda_i^{m}}{m!} = 0,630 ~ pour ~ \lambda = 1
\]
\[
\mathcal{P}(j_5,X=m)=e^{-\lambda_i} \frac{\lambda_i^{m}}{m!} = 0,938 ~ pour ~ \lambda = 3
\]
\[
\mathcal{P}(j_6,X=m)=e^{-\lambda_i} \frac{\lambda_i^{m}}{m!} = 0,740 ~ pour ~ \lambda = 7
\]
\[
\mathcal{P}(j_7,X=m)=e^{-\lambda_i} \frac{\lambda_i^{m}}{m!} = 0,930 ~ pour ~ \lambda = 4
\]

\end{example}

\begin{definition} \textsc{\textbf{\textsc{(}Distance\textsc{)}}}\mbox{}\newline\label{dis}
La distance entre deux croisements $j_p$ et $j_q$, notée $d_{pq}$, est définie comme le chemin le plus court. La matrice suivante $\mathcal{D}is  = (d_{pq})_{n \times n}$ représente la distance entre tous les croisements (junctions).

\[\mathcal{D}is =  \begin{pmatrix}
0 & d_{1~2} & \cdots && d_{1~n} \\
     d_{2~1} & \ddots & && \vdots \\
      \vdots &&0& &\vdots\\
       \vdots & && \ddots & d_{n-1~n}\\
      d_{n~1} & \cdots && d_{n~n-1}& 0 \\
\end{pmatrix}\]
\end{definition}

\begin{example}
Dans cet exemple, nous considérons la base de données séquentielle de l'exemple \ref{cxt2} et la matrice qui représente la distance entre tous les croisement donnée par $\mathcal{D}is$ est la suivante : \\
\[
\mathcal{D}is =\begin{pmatrix}
  0 & 110 & 170 & 40& 50 & 60& 20  \\
  110 & 0 & 70& 95& 95 & 45& 115 \\
  170 & 70 & 0 & 190 & 110 & 90 & 150 \\
  40& 95 & 190 & 0 & 90 & 90 & 60 \\
  60 & 95 & 110& 90 & 0 & 50 & 50 \\
  60 & 45 & 90 & 40 & 50 & 0 & 80  \\
  20 & 115& 150 & 60 & 50 & 80 & 0
\end{pmatrix}
\]
\end{example}

\section{Méthodologie et algorithme proposé}
À présent, nous introduisons notre deuxième méthode pour le déploiement des \textit{RSU}s, cette méthode utilise trois métriques pour choisir les emplacements des \textit{RSU}s sur la carte. Les métriques sont: ($i$) l'importance du croisement (Poids), ($ii$) La déduction des modèles de mobilité en se basant sur un historique de mouvement et ($iii$) le positionnement des \textit{RSU}s d'une manière incrémentale en tenant compte de la disposition spatiale des \textit{RSU}s déjà déployés.\\
Cette méthode est composée de deux parties (voir Figure \ref{hespic}) ; le premier composant nommé AMP celui de  \textrm{\emph{SPaCov+}} (détaillé dans le Chapitre 2) et le deuxième nommé HBCC (Heuristic Budget-Constrained Coverage).\\
Dans ce qui suit, nous détaillons la méthode  \textrm{\emph{HeSPic}}.
\begin{figure}[!ht]
\centering
\includegraphics[scale=0.6]{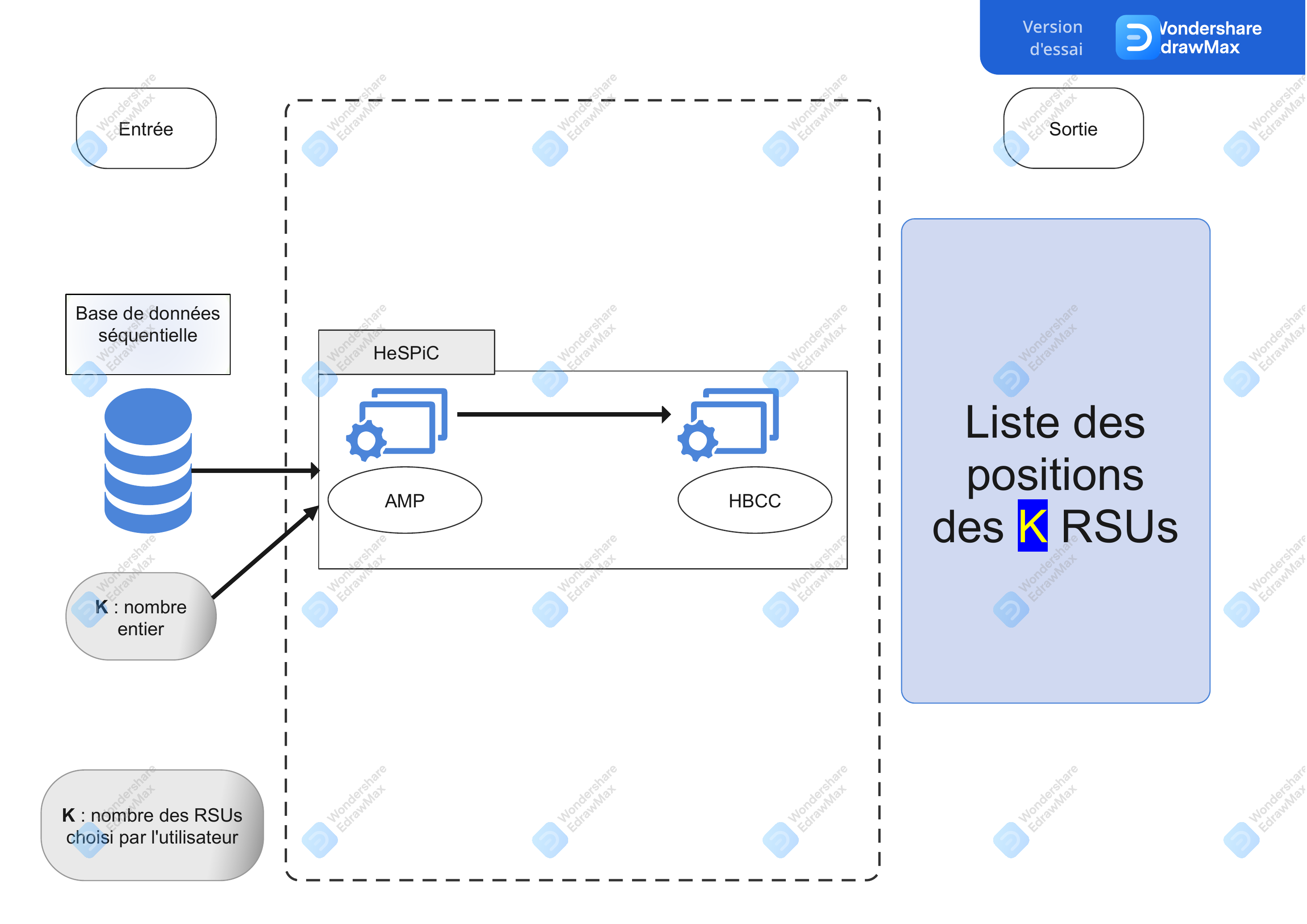}
\caption{L'architecture de la méthode HeSPiC.}
\end{figure}\label{hespic}
\subsection{Description de HeSPiC}
 \textrm{\emph{HeSPiC}} est une méthode de déploiement des \textit{RSU}s spatio-temporelle avec une contrainte budgétaire (supervisé), qui vise à maximiser le taux de couverture tout en tenant compte d'une contrainte budgétaire. La méthode prend en entrée la base de données séquentielle $\mathcal {D}$ et le budget sous forme d'un nombre de \textit{RSU} $ k $ et elle fournit en sortie les $k$ meilleurs emplacements dans lesquels les \textit{RSU}s seront placées.  \textrm{\emph{HeSPiC}} divise le processus de déploiement en deux étapes comme indique la Figure \ref{hespic}. Dans ce qui suit, nous détaillons les étapes de  \textrm{\emph{HeSPiC}}: 
\begin{enumerate}
    \item \textbf{Étape 1:} le module AMP calcule l'ensemble $ \mathcal{FS} $ des séquences fréquentes, l'ensemble $ \mathcal{MFS} $ séquences fréquentes maximales et l'ensemble $ \mathcal{MRS} $ séquences rares minimales de la base de données séquentielle $ \mathcal{D} $. Ceci est fait en utilisant les mêmes algorithmes utilisés par  \textrm{\emph{SPaCov}} et  \textrm{\emph{SPaCov+}} ;    
\item \textbf{Étape 2:} le module HBCC est invoqué en deuxième phase pour calculer les $k$ meilleurs emplacements dans lesquels les \textit{RSU}s seront placées.
\end{enumerate}

\subsection{Algorithme proposé}
L'algorithme de classement 'Ranking algorithm' (voir algorithme ) est la partie la plus importante de HBCC. Tout d'abord, nous commençons par présenter le composant HBCC. Ensuite, nous détaillons les algorithmes qui le composent.
\subsection*{Heuristic Budget-Constrained Coverage (HBCC)}
Pour trouver les $k$ meilleurs emplacements pour déployer les \textit{RSU}s, HBCC calcule un score pour chaque croisement. Dès lors, les \textit{RSU}s seront placées dans les croisements ayant les scores les plus élevés. Pour ce faire, il exploite les informations spatio-temporelles des mouvements des véhicules ainsi que les attributs spatiales de la zone à couvrir. En effet, HBCC s'appuie sur les trois métriques présentées dans la section précédente à savoir $\mathcal{W}$, $\mathcal{P}$ et $\mathcal{D}$ pour calculer le score de chaque croisement  $j_i$ $(1 \leq i \leq n)$. Dans ce qui suit, nous décrivons la formule de classement des croisements utilisée par l'algorithme de classement 'Ranking algorithm':
 \[
    score (j_i)=\frac{\alpha \mathcal{O}(j_i,\mathcal{W}) + \beta \mathcal{O}(j_i,\mathcal{P}) + \delta \mathcal{O}(j_i,\mathcal{L})}{\alpha + \beta + \delta}
   \]
   %\LinesNumbered
\begin{algorithm}
\BlankLine
  \KwIn{$\mathcal{FS}, \mathcal{MFS}$,: Ensemble des séquences fréquentes et maximales fréquentes.\\
  $\mathcal{MRS}$: ensemble des séquences rares minimales.\\
  $\mathcal{D}is$: matrice de distances.\\
  $k$: nombre entier représente le nombre maximal des \textit{RSU}s (couverture budgétaire).
  }
  \KwOut{R: Liste des TOP $k$ croisements classés.}
  \BlankLine
  
\ForEach{$j$ of $\mathcal{S}$}
        {
        \BlankLine
       $\mathcal{W}(j)$ = getWeightOfJunction($j$) \\
       $\mathcal{P}(j)$ = getProbabilityOfJunction($j$) \\
         
        }
      sortWeight($\mathcal{W}$)\\  
      sortProbability($\mathcal{P}$)\\  
      $j_p$=popElement($\mathcal{W}$)\\
      $\mathcal{L}$=greedyLongestPath($\mathcal{S}$,$j_p$,$\mathcal{D}is$)\\
    \ForEach{$j$ of $\mathcal{S}$}
    {
        \BlankLine
       \[
    score(j)=\frac{\alpha \mathcal{O}(j,\mathcal{W}) + \beta \mathcal{O}(j,\mathcal{P}) + \delta \mathcal{O}(j,\mathcal{L})}{\alpha + \beta + \delta}
   \]
   insert(score(j),R(j))
        } 
  return ($k$ premiers éléments de R)

\caption{Ranking algorithme}
\label{rankalgo}
\end{algorithm}
\subsection*{Description de l'algorithme}
L'algorithme de classement est le module le plus important du composant HBCC, ce module est composé de huit étapes représentées comme suit: 

\begin{itemize}
    \item \textbf{Entrée:} 
    \begin{itemize}
    \item $\mathcal{FS}$: ensemble des séquences fréquentes.
    \item $\mathcal{MFS}$: ensemble des séquences fréquentes maximales.
    \item $\mathcal{MRS}$: ensemble des séquences rares minimales.
    \item $\mathcal{D}is$: matrice de distances.
    \item $K$: nombre entier représente le nombre maximal des \textit{RSU}s (couverture budgétaire).
\end{itemize}
    \item \textbf{Sortie:}
    \begin{itemize}
    \item Liste des tops k croisements classés.
\end{itemize}
\end{itemize}

\begin{itemize}

    \item \textbf{Étape 1:} calculer le vecteur Poids pour tous les croisements (jucntions) $\mathcal{W}=\left( \mathcal{W}(j_1), \mathcal{W}(j_2),\ldots , \mathcal{W}(j_n)\right) $;
   \item \textbf{Étape 2:} Trier $\mathcal{W}$ en utilisant la relation définie par la Définition \ref{jw};
   \item \textbf{Étape 3:} Calculer le vecteur de probabilité de franchissement pour tous les croisements $\mathcal{P}=\left( \mathcal{P}(j_1), \mathcal{P}(j_2),\ldots , \mathcal{P}(j_n)\right)$;
   \item \textbf{Étape 4:} Trier $\mathcal{P}$ dans un ordre décroissant par rapport à la probabilité de franchissement; 
   \item \textbf{Étape 5:}  Choisir le croisement $j_p$ ayant le poids le plus élevé dans $\mathcal{W}$;
   \item \textbf{Étape 6:} Calculer le plus long chemin $\mathcal{L}$ à partir de $j_p$ qui contient tous les croisements (junctions) $j_i$ $(i \neq p \  \& \  1 \leq i \leq n)$ en utilisant notre Algorithme heuristique, nommé  greedyLongestPath (voir Algorithme \ref{greedyAlgo}).
   \item \textbf{Étape 7: } Calculer le score de chaque croisement $j_i$ $1 \leq i \leq n$) de la manière suivante :
  \[
    score (j_i)=\frac{\alpha \mathcal{O}(j_i,\mathcal{W}) + \beta \mathcal{O}(j_i,\mathcal{P}) + \delta \mathcal{O}(j_i,\mathcal{L})}{\alpha + \beta + \delta}
   \]
 Avec :
\begin{itemize}
    \item $\mathcal{O}$($j_i$,$\mathcal{W}$) est le rang de $j_i$ dans le vecteur trié $\mathcal{W}$;
    \item $\mathcal{O}$($j_i$,$\mathcal{P}$) est le rang de $j_i$ dans le vecteur trié $\mathcal{P}$;
    \item $\mathcal{O}$($j_i$,$\mathcal{L})$ est le rang de $j_i$ dans l'ensemble $\mathcal{L}$ qui représente le plus long chemin;
    \item $\alpha$,$\beta$ and $\delta$ sont des coefficients de pondération.
\end{itemize}
    \item \textbf{Étape 8:} Sélectionner les top $k$ premiers croisements ayant les meilleurs score. 
\end{itemize}
%\LinesNumbered
\begin{algorithm}
\BlankLine
  \KwIn{\\$\mathcal{J}$: Ensemble des croisements.\\ $j_p$ : Croisement ayant le Poids le plus élève.\\ $\mathcal{D}is$: Matrice distance.}
  \KwOut{\\$\mathcal{L}$: Ensemble contenant le plus long chemin entre tous les croisements.}
  \BlankLine
     $\mathcal{L} =\emptyset$\\
     $e=j_p$\\
     insert($e$,$\mathcal{L}$)\\
  \While{($\mathcal{J} \neq \emptyset$)}
        {
          \BlankLine
           $f=$ getFarthestJunction(e,$\mathcal{D}is$)\\
           insert($f$,$\mathcal{L}$)\\
           $e=f$\\
           delete($e$,$\mathcal{J}$)
        } 
  return ($\mathcal{L}$)

\caption{greedyLongestPath}
\label{greedyAlgo}
\end{algorithm}

\subsection{Exemple illustratif}
\begin{example}
Dans cet exemple, nous considérons les ensembles $\mathcal{FS}$, $\mathcal{MFS}$ and $\mathcal{MRS}$, extraits à partir de la base de données séquentielle $\mathcal{D}$ de l'exemple \ref{cxt2}, donnés comme suit:  
\begin{itemize}
     \item $\mathcal{FS}=\{j_{2},j_{3}, j_{5}, j_{6}, j_{7}, <j_{2}j_{6}>,<j_{2}j_{7}>,<j_{3}j_{6}>,<j_{3}j_{7}>,<j_{5}j_{7}>,<j_{5}j_{3}>,<j_{6}j_{7}>,<j_{6}j_{5}>,<j_{6}j_{3}>,<j_{2}j_{6}j_{7}>,<j_{5}j_{3}j_{7}>\}$ 
    \item $\mathcal{MFS}=\{<j_{6}j_{5}>,<j_{3}j_{6}>,<j_{6}j_{3}>,<j_{2}j_{6}j_{7}>,<j_{5}j_{3}j_{7}>\}$
    \item $\mathcal{MRS}=\{j_{1},j_{4},<j_{2}j_{3}>,<j_{2}j_{5}>,<j_{3}j_{2}>,<j_{3}j_{5}>,<j_{5}j_{6}>,<j_{5}j_{2}>,<j_{6}j_{2}>,<j_{7}j_{6}>,<j_{7}j_{5}>,<j_{7}j_{3}>,<j_{3}j_{6}j_{7}>,<j_{6}j_{5}j_{7}>,<j_{6}j_{5}j_{3}>,<j_{6}j_{3}j_{7}>\}$ 
\end{itemize}
Soit $\mathcal{D}$ la matrice de distance entre tous les croisements dans $\mathcal{MFS} \bigcup \mathcal{MRS}$ donné par la Figure \ref{graphexp}(b).

\begin{figure}[ht]
\centering
\begin{subfigure}{.5\textwidth}
\centering
  \includegraphics[scale=0.3]{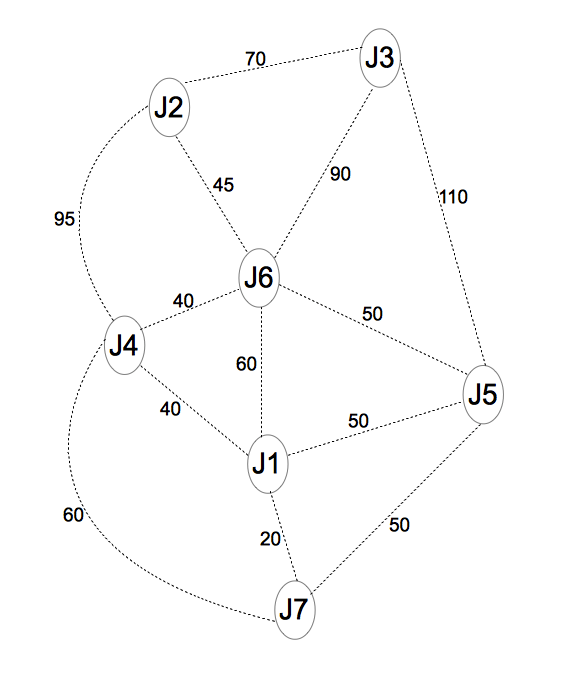}
  %\caption{Exemple d'une topologie routière}
\end{subfigure}%
\begin{subfigure}{.5\textwidth}
  $$
\mathcal{D}is =\begin{pmatrix}
  0 & 110 & 170 & 40& 50 & 60& 20  \\
  110 & 0 & 70& 95& 95 & 45& 115 \\
  170 & 70 & 0 & 190 & 110 & 90 & 150 \\
  40& 95 & 190 & 0 & 90 & 90 & 60 \\
  60 & 95 & 110& 90 & 0 & 50 & 50 \\
  60 & 45 & 90 & 40 & 50 & 0 & 80  \\
  20 & 115& 150 & 60 & 50 & 80 & 0
\end{pmatrix}
$$
%\caption{Distance matrix $\mathcal{D}is$}
\end{subfigure}%
  \caption{(a) Exemple d'une topologie routière et (b) Matrice de distance nommée $\mathcal{D}is$} \label{graphexp}
\end{figure}
 \textrm{\emph{HeSPiC}} invoque le module AMP pour avoir les résultats des ensembles $\mathcal{FS}, \mathcal{MFS}$ et $\mathcal{MRS}$, en deuxième lieu il invoque le module HBCC, qui appel a son tour l'algorithme de classement Ranking algorithm, afin de calculer les emplacements des $k$ \textit{RSU}s sur la carte. L'algorithme de classement, un module responsable de l'ordonnancement des croisements en terme d'importance (selon un score), suit huit étapes.\\
Considérant $k=4$ le budget choisi par l'utilisateur, qui représente le nombre des \textit{RSU}s maximal à déployer, le composant HBCC procède comme suit pour trouver les meilleurs emplacements des croisements dans lesquels les $k$ \textit{RSU}s seront placées.

\begin{enumerate}
   \item  Il commence par calculer le vecteur Poids $\mathcal{W}$ comme indiqué dans la colonne $\mathcal{W}$ du Tableau \ref{WTab}.
  
   \item Il attribut un classement pour chaque croisement $j_i$ dans $\mathcal{W}$ comme illustré dans la colonne $\mathcal{O(W)}$ de la Table \ref{WTab}.
   
   \item  Il calcule le vecteur Probabilité de franchissement $\mathcal{P}$ comme indiqué dans la colonne $\mathcal{P}$ du Tableau \ref{probTab}.
   
   \item  Il classe chaque croisement dans $\mathcal{P}$ comme illustré dans la colonne $\mathcal{O(P)}$ du Tableau \ref{probTab}.
   
   \item  Dans cet exemple, le croisement $j_6$ a le Poids le plus élevé dans $\mathcal{W}$.
  
   \item  Il calcule le plus long chemin $\mathcal{L}$ à partir de $j_6$ en utilisant la matrice des distances $\mathcal{D}is$. $\mathcal{L} = \left( j_6, j_3, j_4, j_2, j_7, j_5, j_1 \right)$.
    
   \item Il calcule le score de chaque croisement comme indiqué dans la colonne $score$ du Tableau \ref{HBCCTab}. les colonnes $\mathcal{O}(j_i,\mathcal{W)}$, $\mathcal{O}(j_i,\mathcal{P)}$ and $\mathcal{O}(j_i,\mathcal{L)}$ représente l'ordre du croisement $j_i$ dans $\mathcal{W}$, $\mathcal{P}$ et $\mathcal{L}$ respectivement.
   
    \item Les croisements mentionnés en gras dans le Tableau \ref{HBCCTab} sont les meilleurs classés par l'algorithme. Par conséquent, $j_6$, $j_3$, $j_2$ et $j_5$ sont les $k=4$ meilleurs emplacements pour mettre les \textit{RSU}s défini à l'avance par l'utilisateur. 
\end{enumerate}

\end{example}

\begin{table}[ht]
 \centering
 \begin{center}

\begin{tabular}{||p{2cm} p{2cm} p{2cm} p{2cm}||c||}
\hline
Junction&$supp(j_i)_{\mathcal{MFS}}$&$supp(j_i)_{\mathcal{MRS}}$&$\mathcal{W}$&$\mathcal{O}(j_i,\mathcal{W})$\\
\hline
\hline
$j_{1}$& 0&1&(0,1)&1\\

$j_{2}$&1& 5&(1,5)&3\\

$j_{3}$&3& 7&(3,7)&6\\

$j_{4}$& 0&1&(0,1)&2\\

$j_{5}$& 2&6&(2,6)&5\\

$j_{6}$&4&6&(4,6)&7\\

$j_{7}$&2&5&(2,5)&4\\
\hline
\end{tabular}
\end{center}
\caption{Poids des croisements}\label{WTab}
\end{table}

\begin{table}[ht]
 \centering
 \begin{center}

\begin{tabular}{||p{2cm} p{2cm} p{2cm} ||c||}
\hline
Junction ID&$\lambda$&$\mathcal{P}$&$\mathcal{O}(j_i,\mathcal{P})$\\
\hline
\hline
$j_{1}$& 1& $0.630$&1\\

$j_{2}$&2& $0.863$&5\\

$j_{3}$&5& $0.859$&4\\

$j_{4}$& 1& $0.630$&2\\

$j_{5}$& 3&$0.938$&7\\

$j_{6}$&7&$0.740$&3\\

$j_{7}$&4&$0.930$&6\\
\hline
\end{tabular}
\end{center}
\caption{Probabilité de franchissement des croisements.}\label{probTab}
\end{table}

\begin{table}[ht]
 \centering
 \begin{center}

\begin{tabular}{|| c | p{2cm} p{2cm} p{2cm}||c||}
\hline
$\mathcal{O}(\mathcal{W})$
Junction&$\mathcal{O}(j_i,\mathcal{W})$&$\mathcal{O}(j_i,\mathcal{P})$&$\mathcal{O}(j_i,\mathcal{L})$&$score$\\
\hline
\hline
$j_{1}$&$1$& $1$& $1$&$1$\\

$\mathbf{j_{2}}$&$3$& $5$& $4$&$\mathbf{5}$\\

$\mathbf{j_{3}}$&$6$&$4$& $6$&$\mathbf{6}$\\

$j_{4}$&$2$& $2$& $5$&$3$\\

$\mathbf{j_{5}}$&$5$& $7$&$2$&$\mathbf{4}$\\

$\mathbf{j_{6}}$&$7$&$3$&$7$&$\mathbf{7}$\\

$j_{7}$&$4$&$6$&$3$&$2$\\
\hline
\end{tabular}
\caption{Score des croisements}\label{HBCCTab}
\end{center}
\end{table}

\section{Étude de la complexité et de la validité}
Dans ce qui suit, nous prouvons la terminaison de notre algorithme liée au composant HBCC puis nous calculons sa complexité au pire des cas. Dans le cas de la méthode  \textrm{\emph{HeSPiC}} nous ne pouvons pas étudier la correction et la complétude de cette dernière à cause de type heuristique de la méthode. En effet, la méthode n'assure pas une couverture optimale dans toutes les situations.
\subsection{Terminaison de l'algorithme}
\begin{proposition}
L'algorithme lié à HBCC se termine.
\end{proposition}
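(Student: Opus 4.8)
Le plan est de montrer que l'algorithme lié à HBCC se termine en établissant séparément la terminaison de ses deux composantes, à savoir l'algorithme de classement \ref{rankalgo} (Ranking algorithm) et la procédure \ref{greedyAlgo} (greedyLongestPath) qu'il appelle. Puisqu'un programme constitué uniquement de boucles effectuant chacune un nombre fini d'itérations se termine, il suffit d'identifier chaque boucle et de lui associer soit un ensemble fini parcouru, soit un variant entier positif strictement décroissant.

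Je commencerais par la procédure greedyLongestPath, car c'est elle qui est invoquée à l'intérieur de l'algorithme de classement. Sa seule boucle est la boucle \textbf{tant que} de condition $\mathcal{J} \neq \emptyset$. J'utiliserais comme variant la cardinalité $|\mathcal{J}|$ de l'ensemble des croisements restants. À chaque passage, l'appel getFarthestJunction($e$, $\mathcal{D}is$) renvoie un croisement $f$ appartenant à $\mathcal{J}$, qui est ensuite supprimé de $\mathcal{J}$ par l'instruction delete($e$, $\mathcal{J}$) avec $e = f$. Ainsi $|\mathcal{J}|$ décroît strictement d'une unité à chaque itération tout en restant un entier positif ou nul ; la boucle effectue donc au plus $|\mathcal{J}|$ itérations et se termine. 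L'ensemble $\mathcal{L}$ retourné étant construit par insertions successives, la procédure rend son résultat et s'arrête.

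Ensuite, pour l'algorithme de classement, je traiterais ses deux boucles \textbf{pour chaque}. Chacune parcourt l'ensemble fini $\mathcal{S}$ des croisements : la première calcule $\mathcal{W}(j)$ et $\mathcal{P}(j)$, la seconde le score de chaque croisement. Comme $|\mathcal{S}|$ est fini, ces deux boucles effectuent un nombre fini d'itérations. Les opérations restantes (les tris sortWeight et sortProbability, l'extraction popElement, les insertions dans $R$ et la sélection des $k$ premiers éléments) s'appliquent toutes à des structures de taille finie et se terminent ; l'appel à greedyLongestPath se termine d'après le paragraphe précédent. Par conséquent l'algorithme de classement se termine, et donc HBCC se termine.

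L'obstacle principal, et le seul point réellement délicat, se situe dans le variant de la boucle \textbf{tant que} de greedyLongestPath. La décroissance stricte de $|\mathcal{J}|$ repose sur le fait que getFarthestJunction sélectionne toujours son résultat $f$ parmi les croisements \emph{encore présents} dans $\mathcal{J}$, de sorte que l'opération delete soit effective et ne soit jamais neutre. Je rendrais donc explicite l'invariant « $f \in \mathcal{J}$ au début de chaque itération », qui garantit que $\mathcal{J}$ finit par devenir vide. On peut aussi remarquer que le croisement initial $j_p$ est inséré dans $\mathcal{L}$ sans être retiré de $\mathcal{J}$, mais ce détail n'altère pas l'argument de comptage puisque seule importe la suppression d'un élément par itération.
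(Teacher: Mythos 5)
Votre preuve est correcte et suit essentiellement la même démarche que celle du manuscrit, à savoir un argument de finitude sur chacune des boucles de l'algorithme. Elle est même plus soignée que celle du texte sur le point de la boucle \textbf{tant que} de greedyLongestPath, où vous exhibez explicitement le variant $|\mathcal{J}|$ strictement décroissant, alors que le manuscrit se contente d'affirmer que les boucles parcourant un ensemble fini sont finies.
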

\begin{proof}
Le nombre des \textit{RSU}s générés par l'algorithme est fini.\\
En effet, le nombre $k$ des \textit{RSU}s donné par l'utilisateur est au pire des cas $k$ doit être égal à la cardinalité de $|\mathcal{J}|$. Puisque l'ensemble fourni par l'algorithme est fini, les quatre boucles de l'algorithme parcourant cet ensemble sont alors, eux aussi, finis.\\
De plus, l'algorithme du premier composant AMM qui génère $\mathcal{MRS}$ et $\mathcal{MFS}$ est prouvé fini dans le chapitre précédent. Donc, l'algorithme lié au module HBCC se termine.
\end{proof}
\subsection{Complexité théorique}
Dans le cas de la méthode \textrm{\emph{HeSPiC}} nous pouvons déduire sa complexité par la somme des complexités des composant MMP et HBCC. {\`A} partir d'une base de données séquentielle de $n$ véhicules et $m$ croisements , nous avons : 
\begin{itemize}
\item Complexité de MMP : \\
L'algorithme analyse d'abord la base de données pour calculer le support de chaque séquence. La complexité de cette étape est  $\mathcal{O}(n \times m)$. Ensuite, un élagage se fait pour éliminer les séquences non fréquentes et une recherche récursive, des séquences plus grande en terme de taille, à l'aide d'une recherche en profondeur d'abord. Pour chaque appel récursif, l'algorithme nécessite un temps de $\mathcal{O}(n\times m)$ pour créer une base de données projetée des séquences de préfixe, puis il nécessite un temps de $\mathcal{O}(n)$ pour calculer le support dans cette base projetée. 
\item Complexité de HBCC : \\
L'algorithme calcul les poids des croisements pour un coût au pire des cas de $\mathcal{O}(m)$. Ensuite, il procède à un calcul de probabilité pour les croisements ce qui nécessite un temps de l'ordre de  $\mathcal{O}(m)$. {\`A} la fin, l'algorithme calcul l'ordre des croisements en se basant sur les distances qui les sépare pour un coût, au pire des cas, qui vaut  $\mathcal{O}(m*m)$. Le coût de la formule score sera ignoré puisqu'il vaut $\mathcal{O}(1)$.
\end{itemize}
En effet, la complexité théorique, au pire des cas, de deux composants MMP et HBCC est égale à :\\
$\mathcal{O}((n\times m) + \mathcal{O}(m*m) +\mathcal{O}(m) + \mathcal{O}(m)$ $\cong$  $\mathcal{O}(m^2+n*m)$ .

\section{Conclusion}
Dans ce chapitre, nous avons présenté notre deuxième méthode de déploiement des \textit{RSU}s basé sur une contrainte budgétaire supervisé par l'utilisateur. Ensuite, nous avons étudié la complexité et la validité de l'algorithme assurant l'extraction des croisements dans lesquels les \textit{RSU}s seront placés. En effet, notre  \textrm{\emph{HeSPic}} ne peut pas être comparé aux  autres méthodes proposé dans ce manuscrit puisque son résultat final dépend, contrairement aux autres méthodes de déploiement de la littérature, non seulement de la base des mouvements des véhicules mais aussi d'une constante représentant le nombre des \textit{RSU}s $k$ (en termes de budget) que l'utilisateur souhaite les mettre en place (la méthode essaie d'atteindre un taux de couverture maximal avec ce nombre $k$. Dans le chapitre suivant, nous proposons une troisième méthode de déploiement des \textit{RSU}s basée sur de nouvelles métriques à savoir (i) l'utilité d'une séquence, (ii) le bénéfice d'une séquence et (iii) la densité d'une séquence.

\newpage
\thispagestyle{empty}
\mbox{}
\newpage
%%%%%%%%%%%%%%%%%%%%%%%                 MIPA               %%%%%%%%%%%%%%%%%%%%%%%%%%%%%%%%%%%%%
\chapter{MIP: réduire pour déployer}
\chaptermark{MIP}
\minitoc
\section{Introduction}
Dans ce chapitre, nous présentons notre troisième approche spatio-temporelle qui permet d'exploiter les modèles de mobilité à partir des trajectoires des mouvements de véhicules. À cet égard, nous décrivons une nouvelle méthode de représentation des modèles de mobilité de véhicules basée sur les modèles utiles en data mining. Dans ce qui suit, nous rappelons et adaptons certaines notions mathématiques des modèles séquentiels.\\

\section{Fondements mathématiques}
\begin{definition}\textsc{\textbf{\textsc{(}Trajectoire Vs séquence \textsc{)}}}\mbox{}\newline
Une trajectoire est une séquence de croisements qui représente la trajectoire (ou le chemin) d'un véhicule $v$ noté $< j_{1} j_{2} ... j_{k} >$ entre deux points $A$ et $B$. En effet, il existe une connexion spatiale entre les croisements d'une trajectoire. Cependant, une séquence est un ensemble de croisements est une combinaison aléatoire de croisement dans n'importe quel ordre, évidemment un ensemble de croisements peut ne pas avoir lieu dans l'espace (c.à.d. ensemble de point non adjacent). 
\end{definition}
\begin{example}
Dans cet exemple, nous considérons la topologie spatiale des croisements présenté dans la Figure \ref{graphexp} (a) et nous présentons, dans ce qui suit, un exemple d'une trajectoire et d'une séquence: \\  
L'ensemble $< j_{5},j_{3},j_{2},j_{6}>$ forme une trajectoire puisque (i) elle est composé d'un ensemble de croisements $j_i$ et (ii)  spatialement les $j_{i}$ successives sont deux à deux adjacents. \\
$< j_{5},j_{4},j_{3},j_{7}>$ est une séquence $s$, mais elle ne peut pas être une trajectoire, puisque (i) elle est composé d'un ensemble de croisements et (ii) il y a au moins un pair de croisement successive non adjacent ($<j_{5},j_{4}>$ non adjacent ou $<j_{3},j_{7}>$ non adjacent).
\end{example}
\begin{definition}\textsc{\textbf{\textsc{}Base de données séquentielle (ensemble de trajectoires) \textsc{}}}\mbox{}\label{defDB}
Une base de données séquentielle $\mathcal{D}_t$ peut être vu comme un ensemble de trajectoires, noté $\mathcal{T}$. Dans notre cas, une trajectoire est une liste ordonnée de croisement traversées par un véhicule $v$, noté par $t_{i}$=$< j_{1} j_{2} ... j_{k} >$.
\end{definition}

\begin{example}
Le Tableau \ref{cxt22} illustre la base de données séquentielle transformée $\mathcal{D}_t$ à partir de la base de données initiale représenté par la Table \ref{cxt2}, où $\mathcal{T} = \{t_{0}; t_{1}; t_{2}; t_{3}; t_{4}; t_{5}; t_{6}\}$ est un ensemble de trajectoires de véhicules et $\mathcal{S}$ représente leurs trajectoires associées. La transformation se fait de la manière suivante: \\
\begin{itemize}
    \item L'ensemble $\mathcal{V}$ sera remplacé par l'ensemble $\mathcal{T}$ puisque nous nous intéressons plus aux trajectoires pour avoir une vue macroscopique des mouvements des véhicules.
    \item Le calcul de l'utilité de chaque trajectoire $t_i$ afin qu'on puisse évaluer l'importance des trajectoires au sein de la base de données. En effet, l'utilité des trajectoires (3ème colonne du Tableau \ref{cxt22}) se calcule en se basant sur la définition \ref{def_utilite}
\end{itemize}
\end{example}

\begin{table}[ht]
\begin{tabular*}{\hsize}{@{\extracolsep{\fill}}|l|c|c|@{}}
%\toprule
\hline
$\mathcal{T}$ & $\mathcal{S}$ & Utility(T)\\
\hline
%\colrule
$t_{0}$& $j_{6} ; j_{5} ; j_{3}$ &14\\
$t_{1}$& $j_{6};j_{5};j_{3};j_{7}$ &18\\
$t_{2}$&$j_{5} ; j_{3}; j_{7}$ &12\\
$t_{3}$& $j_{3} ; j_{6}$ &11\\
$t_{4}$& $j_{3} ; j_{6} ; j_{4}$ &12\\
$t_{5}$&$j_{2} ; j_{6} ; j_{7}$ &12\\
$t_{6}$&$j_{2} ; j_{6} ; j_{7} ; j_{1}$ &13\\
\hline
%\botrule
\end{tabular*}

\caption{Base de données séquentielle avec l'utilité de chaque trajectoire $\mathcal{D}_t$.}\label{cxt22}
\end{table}

\begin{definition}\textsc{\textbf{\textsc{(}Support d'une trajectoire\textsc{)}}}\mbox{}\newline\label{defsupp}
Soit $s \subseteq \mathcal{S}$ une trajectoire d'une base de données séquentielle nommée $\mathcal{D}_t$. Le support de $s$, noté $supp(s)$, est une indication de la fréquence à laquelle la trajectoire $s$ apparaît dans $\mathcal{D}_t$. Il est calculé comme suit: 
\[ supp(s)=\frac{|{t \in \mathcal{T} | s \lesssim t}|}{|\mathcal{T}|}\]
Lorsque la séquence est composée de $l$ croisements (ou junction) elle est appelée \textit{l-junction}.
\end{definition}
Note: Le symbole $\lesssim$ signifie une inclusion ordonnées, par exemple $(j_{3}j_{4})$ $\lesssim$ $(j_{2}\mathbf{j_{3}}j_{5}j_{6};\mathbf{j_{4}})$ mais $(j_{3}j_{4})$ $\not \lesssim$. $(j_{2};\mathbf{j_{4}}j_{5}j_{6};\mathbf{j_{3}})$

\begin{example}
Soit $s=j_{6}j_{7}$ une trajectoire de la base de données séquentielle $\mathcal{D}_t$ le support de $s$ est donné comme suit: 
\[ 
supp((j_{6}j_{7}))=\frac{|\{v_1,v_5,v_7\}|}{|\mathcal{D}_t|} = \frac{3}{8} = 42\%
\]
Avec $|\mathcal{D}_t|$ égal au nombre de transactions de la base de données $\mathcal{D}_t$. 
\end{example}
\begin{definition}\textsc{\textbf{\textsc{(}Trajectoire Fréquente\textsc{)}}}\mbox{}\newline
La trajectoire $s$ est dite fréquente si et seulement si son support dans $\mathcal{D}_t$ est supérieur ou égal à $minsup$ (\emph{seuil spécifié par l'utilisateur}). Formellement, nous définissons l'ensemble $\mathcal{FS}_t$ de trajectoires fréquentes dans $\mathcal{D}_t$ comme:
\[ \mathcal{FS}_t = \{s \in \mathcal{D}_t | supp(s) \geqslant minsup\}\]
\end{definition}
\vspace{- 0.4cm}

\begin{example}
Considérant un seuil $minsup=2/8$ et la base de données séquentielle $\mathcal{D}_t$ de l'exemple \ref{cxt22}, alors la trajectoire $s=<j_{2}j_{6}j_{7}>$ est fréquente puisque son support $supp(s)=2/8  \geqslant minsup$.
\end{example}
\begin{definition}\textsc{\textbf{\textsc{(}Utilité d'une séquence ou trajectoire \textsc{)}}}\mbox{}\newline\label{def_utilite}
Dans cette définition, nous définissons une nouvelle métrique nommée Utilité d'une séquence (respectivement d'une trajectoire). L'utilité d'une séquence recherche les séquences qui apparaissent dans un trafic intense. En effet, l'utilité d'une séquence $s$ dans $\mathcal{D}_t$, notée $U(s)$, est la somme du support de chaque sous-ensemble à 1-junction de $s$ dans $\mathcal{D}_t$. Formellement, $U(s)$ (respectivement $U(t)$) est défini comme suit:

\[
U(s) = \sum\limits_{\substack{s' \subseteq s}} supp(s^{'}).   ~~ tel ~~que ~~  s^{'}~~ est ~~~ 1-junction~~séquence.
\]

\[
U(t) = \sum\limits_{\substack{t' \subseteq t}} supp(t^{'}).   ~~ tel ~~que ~~  t^{'}~~ est ~~~ 1-junction~~séquence.
\]
Remarque: Une séquence $s$ est considérée comme \underline{très importante} lorsque sa valeur \underline{d'utilité est élevée}. 
\end{definition}
\begin{example}
Considérant la base de données séquentielle $\mathcal{D}_t$ de l'exemple \ref{cxt22} et la séquence $s=<J_{2}J_{6}J_{7}>$, l'utilité de $s$ est calculé comme suit: \\
$U(s)= supp(J_{2})+supp(J_{6})+supp(J_{7})=14$.
\end{example}
\begin{definition}\textsc{\textbf{\textsc{(}Densité d'une séquence \textsc{)}}}\mbox{}\newline
La densité d'une séquence $s$ dans la trajectoire $t$, noté $d_{t}(s)$ est l'utilité de $t$ si $ s \subseteq t $. Formellement, $d_{t}(s)$ est définit comme suit: 
\[
d_{t}(s)= U(t), ~~ Si ~~ s \subseteq t.
\]
La densité d'une \underline{séquence} $s$ dans une \underline{trajectoire} $t$ est la somme des supports des croisements qui compose $t$. En effet, l'idée est de repérer les croisements (ou séquences: ensemble de croisements) qui apparaissent dans des trajectoires de haute utilités.
\end{definition}

\begin{example}
Considérant la base de données séquentielle $\mathcal{D}_t$ de l'exemple \ref{cxt22}, la densité de la séquence $J_ {6} J_ {5}$ est calculé de la manière suivante : 
\[
d_{t}(J_{6}J_{5})= U(t), ~~ pour~ tous~  J_{6}J_{5} \subseteq t.
\]
\[
d_{t_0}(J_{6}J_{5})= U(t_0) = 14.
\]
\[
d_{t_1}(J_{6}J_{5})= U(t_1) = 18.
\]
\end{example}

\begin{definition}\textsc{\textbf{\textsc{(}Bénéfice d'une séquence \textsc{)}}}\mbox{}\newline
Le bénéfice d'une séquence $s$, noté $Bf(s)$, est une métrique qui évalue l'utilité de $s$ dans une base de données séquentielle. Nous définissons $Bf(s)$ comme suit: 

\[
Bf(s) = \frac{\sum\limits_{\substack{t \in \mathcal{T}}} d_{t}(s) }{supp(s)} + \sum\limits_{\substack{t \in \mathcal{T}}} \frac{U(s) }{d_{t}(s)}, ~~ o\Grave{u} ~~ s \in \mathcal{FS}
\]
Toutes les séquences $s$ avec $Bf(s)$ supérieur ou égal à un seuil \textit{minbenefit} (seuil fixé par l'utilisateur) sont considérées comme bénéfiques.
\end{definition}

\begin{example}
Considérant la base de données séquentielle $\mathcal{D}_t$ de l'exemple \ref{cxt22} et un seuil $minbenefit= 17$, la séquence $J_ {6} J_ {5}$ est considérée comme étant bénéfique puisque son $Bf(J_ {6} J_ {5})$ est supérieur ou égal à $minbenefit$. Cependant, la séquence $J_ {3} J_ {7}$ est considérée comme étant non bénéfique puisque $Bf(J_ {3} J_ {7})$ est inférieur à $minbenefit$. Dans ce qui suit, nous présentons les traces de calcul de  $Bf(J_{6}J_{5})$ et $Bf(J_{3}J_{7})$ respectivement: 

\[
Bf(J_{6}J_{5})= \frac{d_{t0}(J_{6}J_{5}) + d_{t1}(J_{6}J_{5}) }{supp(J_{6}J_{5})} + \frac{U(J_{6}J_{5})}{d_{t_{0}}(J_{6}J_{5})}+\frac{U(J_{6}J_{5})}{d_{t_{1}}(J_{6}J_{5})}. \\
\]

\[
Bf(J_{3}J_{7})= \frac{d_{t1}(J_{3}J_{7}) + d_{t2}(J_{3}J_{7}) }{supp(J_{3}J_{7
})} + \frac{U(J_{3}J_{7})}{d_{t_{1}}(J_{3}J_{7})}+\frac{U(J_{3}J_{7})}{d_{t_{2}}(J_{3}J_{7})}. \\
\]
\[
Bf(J_{6}J_{5})= \frac{14 + 18 }{2} + \frac{9}{14}+\frac{9}{18} = 17,14 > minbenefit\] 
\[Bf(J_{3}J_{7})= \frac{30 }{2} + \frac{9}{18}+\frac{9}{12} = 16,25 < minbenefit
\]
\end{example}
Remarque: La formule mathématique pour calculer le bénéfice d'une séquence se compose de deux parties:\\
($i$) La contribution de la séquence dans tous les mouvements des véhicules dans la base de données séquentielle.\\
($ii$) La contribution des éléments de la séquence, qui sont des croisements, dans chaque transaction de la base de données séquentielle.\\

La séquence $s$ est dite plus importante lorsque la valeur de $Bf(s)$ est plus élevée.\\
Pour favoriser les séquences de grandes importances avec le minimum de croisement possible nous définissons une nouvelle métrique que nous appelons \underline{Ratio} d'une séquence.

\begin{definition}\textsc{\textbf{\textsc{(}Ratio d'une séquence \textsc{)}}}\mbox{}\newline
Le Ratio d'une séquence $s$, noté $R(s)$, est défini comme suit: 
\[
R(s)= \frac{|s|}{Bf(s)} (s \in \mathcal{FS} ~~et~~ Bf(s) \geqslant minBenefit)
\]
\end{definition}
\underline{Remarque}: Contrairement à la métrique précédente Bénéfice, les valeurs Ratio des séquences qui sont proche de zéro sont considérées comme étant des séquences très importantes. \\
Considérant deux séquences $s1$ et $s2$ ayant le même Bénéfice $Bf(s_1)=Bf(s_2)$  mais avec deux cardinalités différentes $|s1| =l_1$ et $|s2|=l_2$ respectivement, alors celle qui a la valeur la moins élevée, point de vue nombre de croisement de la séquence, c'est elle qui va avoir la valeur Ratio la moins élevée. D'où elle est considérée comme étant la séquence la plus importante.\\
$R(s_1) < R(s_2)$ Si et Seulement si $l_1 < l_2$.\\ Dans ce cas,  $s_1$ est considéré plus importante que $s_2$.

\begin{example}
Considérant la base de données séquentielle $\mathcal{D}_t$ de l'exemple \ref{cxt22} et un seuil $minbenefit= 17$, la séquence <$J_ {6}J_ {5}$> est considérée comme étant bénéfique puisque son $Bf(J_ {6} J_ {5})$ est supérieur ou égal à $minbenefit$. Cependant, la séquence <$J_ {3}J_ {7}$> est considérée comme étant non bénéfique. Dans ce qui suit, nous calculons le Ratio des séquences <$Bf(J_{6}J_{5})$> et <$Bf(J_{5}J_{7})$> : 

\[
R(J_{6}J_{5})= \frac{|J_{6}J_{5}|}{Bf(J_{6}J_{5})} = \frac{2}{17,14} =  0,116.
\]

\[
R(J_{5}J_{7})= \frac{|J_{5}J_{7}|}{Bf(J_{5}J_{7})} = \frac{2}{16} = 0,125.
\]
Puisque $R(J_{6}J_{5}) < R(J_{5}J_{7})$ alors la séquence $R(J_{6}J_{5})$ est considéré plus importante que $R(J_{5}J_{7})$.
\end{example}

\section{Méthodologie et algorithme proposé}
\subsection{Description de la méthode  \textrm{\emph{MIP}}}
Dans cette sous section, nous décrivons notre troisième approche de déploiement des \textit{RSU}s.  Notre nouvelle approche définit de nouvelles mesures de qualité afin qu'elle puisse classer les croisement et/ou l'ensemble des croisements couramment nommées séquences, comme utiles ou peu utiles. En effet, L'approche  \textrm{\emph{MIP}} se compose de deux parties comme suit: 
\begin{enumerate}
\item \textbf{Chercher un ensemble représentatif à partir de la base de données FiRT }(\textbf{Fi}nd \textbf{R}epresentative \textbf{T}ransactions): \\
il repose sur de nouvelles métriques, que nous avons introduits, pour extraire les séquences les plus importantes, telles que:\\
($i$) l'utilité de la séquence, ($ii$) le bénéfice de la séquence et ($iii$) le Ratio de la séquence.
\item \textbf{Calcul de couverture (CovC)} \textbf{Co}verage \textbf{C}ompute: \\
il calcule un ensemble minimal de croisement, qui représente les emplacements dans lesquels les \textit{RSU}s seront placés. Cet ensemble, couvre l'ensemble de modèles de mobilités séquentielles représentatifs extrait par le composant FiRT dans la première phase.
\end{enumerate}
À partir d'une base de données séquentielle, nous pouvons extraire un grand nombre de modèles de mobilité. Cependant, couvrir tous ces modèles nécessite un nombre écrasant de \textit{RSU}. Pour éviter ce problème, nous cherchons des séquences spécifiques que nous appellerons dans le reste de ce manuscrit \textit{les modèles hautement bénéfiques}.\\
Dans ce qui suit, nous appelons l'ensemble des patterns hautement important  \textrm{\emph{MIP}} (Most Important Pattenrs). Il est à noter que, dans la littérature, un modèle de mobilité est représenté par une séquence de croisement (junction). Dans la suite, nous présentons notre algorithme MIPA (Most Important Patterns Algorithm) qui permet de trouver les modèles les plus importants  \textrm{\emph{MIP}} à partir de la trajectoire des véhicules\\

\subsection{Algorithme proposé}
L'idée derrière  \textrm{\emph{MIP}} est de trouver les patterns de mobilité les plus importants et les plus fréquents qui respecte un seuil $minbenefit$. Pour ce faire, nous nous appuyons tout d'abord sur l'algorithme CEPN \cite{fou19} pour extraire les modèles de mobilité à haute utilité (c'est-à-dire les séquences à haute utilité) à partir des trajectoires des véhicules. En outre, nous définissons l'algorithme MIPA (c.f. algorithme \ref{mipa}), qui calcule et sélectionne les séquences bénéfiques parmi celles qui sont utiles (voir l'architecture de la méthode dans la Figure \ref{archMIP}).
\begin{figure}[!ht]
  \begin{center}
  \label{pr}
  \includegraphics[scale=0.6]{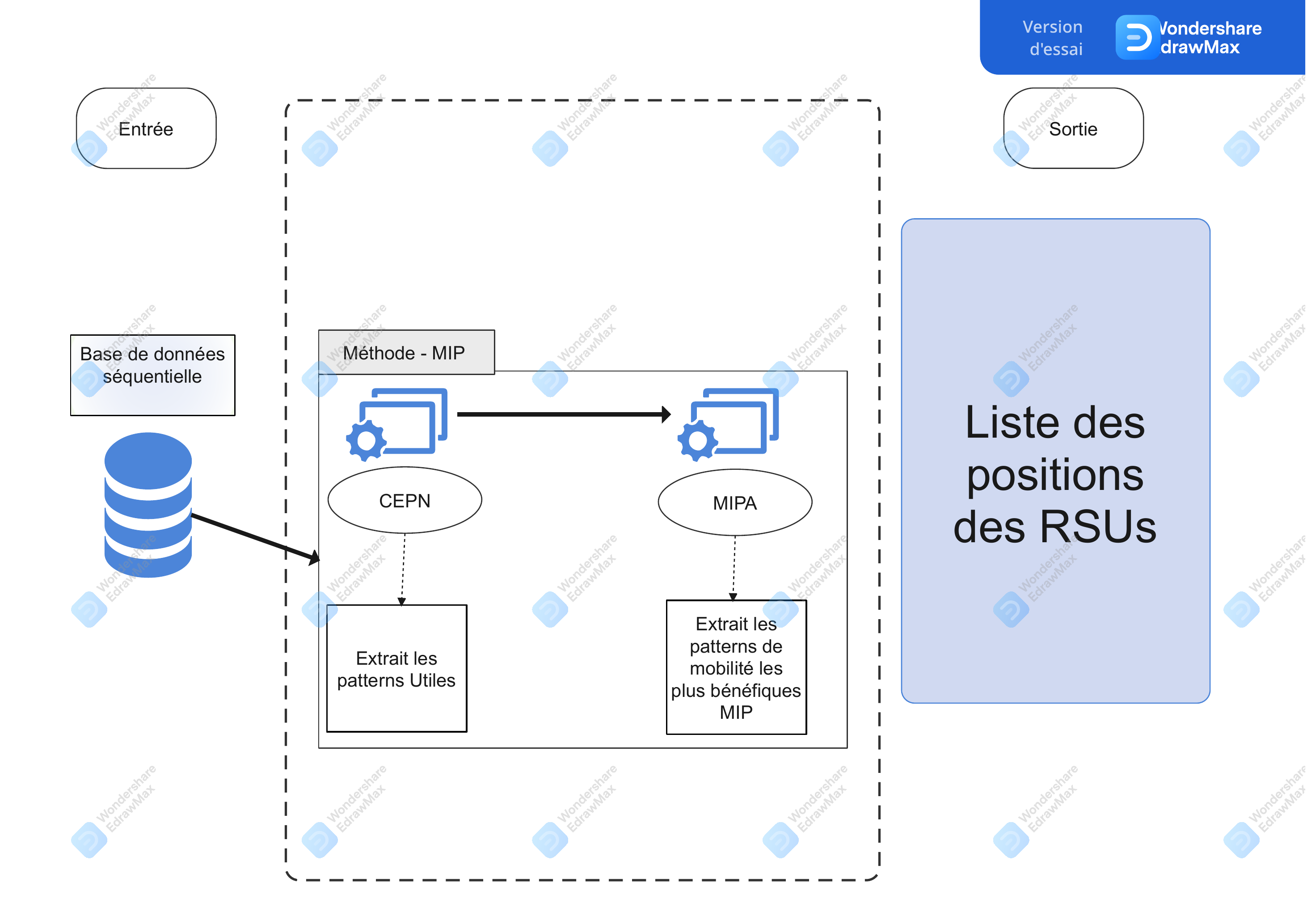}
  \caption{L'architecture de la méthode  \textrm{\emph{MIP}}.} \label{archMIP}
\end{center} 
\end{figure}

%\LinesNumbered
\begin{algorithm}
\BlankLine
  \KwIn{$\mathcal{FS}$: set of frequent sequences, minbenefit: threshold of benefit.}
  \KwOut{ \textrm{\emph{MIP}}: set of most important patterns.  \textrm{\emph{MIP}} is sorted according to the ratio of sequences.}
  \BlankLine
  $\mathcal{MIP} = \emptyset$\\
\ForEach{$E$ of $\mathcal{FS}$}
        {
        \BlankLine
         \If{(getBenefit($E$)$>=$ $minbenefit$)}
         {insertIntoMIP($E$, $getRatio(E)$)}
         {}
        }
  return( \textrm{\emph{MIP}})
\caption{MIPA}
\label{mipa}
\end{algorithm}

\begin{proposition}\label{p1}
L'ensemble des modèles importants \textrm{\emph{MIP}} ne coïncide pas avec l'ensemble des séquences fermées.
\end{proposition}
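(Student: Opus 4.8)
The plan is to prove the proposition by exhibiting an explicit counterexample, since to show that two sets do not coincide it suffices to produce a single object belonging to one but not to the other. First I would pin down the two competing selection criteria. A frequent sequence $s$ is \emph{fermée} (closed) when no proper supersequence $s'$ (that is, $s \lesssim s'$ with $s \neq s'$) satisfies $supp(s') = supp(s)$; this is a purely support-based notion. By contrast, $\mathcal{MIP}$ gathers the frequent sequences whose benefit $Bf(s)$ reaches the threshold $minbenefit$, a criterion built from the utility and the density of $s$ rather than from any closure property of its support. Because the two criteria are defined from unrelated quantities, I expect them to disagree, and the remaining task is to certify that disagreement on one concrete instance.

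The natural instance is the running sequential database $\mathcal{D}_t$ of Table~\ref{cxt22}, with the threshold $minbenefit = 17$ already used in the examples above. I would take $s = \langle j_6 j_5 \rangle$ as witness and check the three facts it must satisfy: that it is frequent, that it lies in $\mathcal{MIP}$, and that it is \emph{not} closed. The first two are essentially already available from the preceding examples, where $Bf(\langle j_6 j_5\rangle) = 17{,}14 \geq minbenefit$; since $\langle j_6 j_5\rangle$ is also frequent (it occurs in two trajectories, so its support exceeds $minsup$), it does belong to $\mathcal{MIP}$. It then only remains to show that $\langle j_6 j_5\rangle$ is not closed.

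For the closure step I would compare $supp(\langle j_6 j_5\rangle)$ with the support of its proper supersequence $\langle j_6 j_5 j_3\rangle$. Reading off the trajectories, $\langle j_6 j_5\rangle$ occurs, as an ordered subsequence, exactly in $t_0$ and $t_1$, while $\langle j_6 j_5 j_3\rangle$ also occurs exactly in $t_0$ and $t_1$; hence the two supports coincide. Consequently $\langle j_6 j_5\rangle$ admits a proper supersequence of identical support and is therefore not closed, even though it belongs to $\mathcal{MIP}$. This single witness already forces $\mathcal{MIP}$ and the set of closed sequences to differ, which establishes the proposition.

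I expect the main delicate point to be the bookkeeping of the ordered-inclusion relation $\lesssim$ when counting occurrences: one must match $\langle j_6 j_5\rangle$ and $\langle j_6 j_5 j_3\rangle$ while respecting the chronological order of the junctions, so that no trajectory is wrongly counted or omitted. Beyond this verification the argument is direct. To make the statement sharper I would optionally also record the reverse phenomenon, namely a closed sequence whose benefit falls strictly below $minbenefit$ and which is thus excluded from $\mathcal{MIP}$; this would show that neither set is contained in the other, so that the two notions are genuinely incomparable rather than merely distinct.
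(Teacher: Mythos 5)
Your proof is correct, but it takes a genuinely different route from the paper's. The paper argues in general: it considers an arbitrary element $Pm$ of \textrm{\emph{MIP}} and an arbitrary closed pattern $P$ with $supp(P)=supp(Pm)$, and tries, through a chain of inequalities on densities, utilities, benefits and ratios, to force $P\neq Pm$ by contradiction. Several of those intermediate steps are hard to justify (for instance the claim $d_{t}(P)>d_{t}(Pm)$ for all $t$ sits uneasily with the definition $d_{t}(s)=U(t)$ whenever $s\subseteq t$, under which two sequences contained in the same trajectory have \emph{equal} density there). You instead exhibit a single explicit witness: $\langle j_{6}j_{5}\rangle$ belongs to \textrm{\emph{MIP}} (its benefit $17{,}14$ meets the threshold $17$ and it is frequent), yet it is not closed, since its proper supersequence $\langle j_{6}j_{5}j_{3}\rangle$ occurs, with respect to $\lesssim$, in exactly the same trajectories $t_{0}$ and $t_{1}$ of Table~\ref{cxt22} and therefore has the same support. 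This is fully rigorous for the proposition as literally stated --- two sets fail to coincide as soon as one element separates them --- and it is far easier to verify than the paper's computation. What your argument does not deliver is the stronger, universal reading the paper seems to aim at (and that the conclusion chapter invokes), namely that a \textrm{\emph{MIP}} pattern is \emph{never} closed, for any database and any thresholds; that would require a general argument of the paper's kind. Your optional remark about also producing a closed sequence excluded from \textrm{\emph{MIP}} would sharpen the separation to mutual non-inclusion, but still only on this one instance.
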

\begin{proof} Dans ce qui suit, nous prouvons la validité de la proposition \ref{p1}. Pour se faire, nous admettons à une démonstration à l'aide du raisonnement par l'absurde. \\

\textbf{Hypothèse}: Nous considérons que $Pm$ est un élément de  \textrm{\emph{MIP}} (pattern important) et $P$ est un pattern fermé.\\

\textbf{But}: Prouvons que $Pm$ est différent de $P$\\

Soit $P$ est un pattern fermé ce qui implique que :\\
\begin{equation}
supp(P)=supp(Pm) (P pattern fermé)
\end{equation}
d'où :
\begin{equation}
d_{t}(P)>d_{t}(Pm) \forall t \in \mathcal{T}
\end{equation}
\begin{equation}
d_{t}(P) = d_{t}(Pm)+c_{t}
\end{equation}
\begin{equation}
Bf(P) = \frac{d_{t}(P)}{supp(P)} + \frac{U(p)}{d_{t}(P)}
\end{equation}
\begin{equation}
Bf(Pm)=  \frac{d_{t}(Pm)}{supp(Pm)} + \frac{U(pm)}{d_{t}(Pm)}
\end{equation}
puisque 
\begin{equation}
d_{t}(P)  > d_{t}(Pm) \longrightarrow \frac{d_{t}(P)}{supp(P)} > \frac{d_{t}(Pm)}{supp(Pm)}
\end{equation}
et 
\begin{equation}
\frac{U(P)}{d_{t}(P)}=\frac{U(Pm)}{d_{t}(Pm)} + \frac{U(P)-U(Pm)}{d_{t}(Pm)} > \frac{U(Pm)}{d_{t}(Pm)}
\end{equation}
alors, 
\[ U(P) > U(Pm) \]
d'après (5.4) et (5.5) nous évaluons $\frac{R(P)}{R(Pm)} = \frac{|P|}{|Pm|} * \frac{U(Pm)}{U(P)}$ \\
Nous avons : \\
$\frac{|P|}{|Pm|} > 1$. (La valeur de $\frac{|P|}{|Pm|}$ est représenté par $n$ dans la Figure \ref{pr}).\\
Reste à rechercher un minimum pour l'expression suivante: $\frac{U(Pm)}{U(P)} > = < ?$\\
Nous avons : 
\begin{equation}
\frac{\sum d_{t}(Pm)}{\sum d_{t}(Pm) + supp(P) * \alpha  } avec \alpha  = d_{t}(P)-d_{t}(Pm)
\end{equation}
\begin{equation}
\longrightarrow \frac{1}{1 + \frac{supp(P) * \alpha}{\sum d_{t}(Pm)}  }
\end{equation}
\begin{equation}
\longrightarrow supp(P) * \alpha < \sum d_{t}(Pm)
\end{equation}
d'après (5.7) l'équation 
\begin{equation}
1 + \frac{supp(P) * \alpha}{\sum d_{t}(Pm)} < 2  \longrightarrow \frac{1}{1 + \frac{supp(P) * \alpha}{\sum d_{t}(Pm)}  } > 1/2
\end{equation}

%$supp(P)=supp(Pm)$ ($P$ pattern fermé) (1)\\
%$d_{t}(P)>d_{t}(Pm) \forall t \in \mathcal{T}$\\
%$d_{t}(P) = d_{t}(Pm)+c_{t}$\\
%$Bf(P) = \frac{d_{t}(P)}{supp(P)} + \frac{U(p)}{d_{t}(P)}$ (2)\\
%$Bf(Pm)=  \frac{d_{t}(Pm)}{supp(Pm)} + \frac{U(pm)}{d_{t}(Pm)}$
%(3)\\
%Puisque $d_{t}(P)  > d_{t}(Pm) \longrightarrow %\frac{d_{t}(P)}{supp(P)} > \frac{d_{t}(Pm)}{supp(Pm)}$\\
%$\frac{U(P)}{d_{t}(P)}=\frac{U(Pm)}{d_{t}(Pm)} + %\frac{U(P)-U(Pm)}{d_{t}(Pm)} > \frac{U(Pm)}{d_{t}(Pm)}$ (4) \\
%Par conséquent $U(P) > U(Pm)$ \\ 
%D'après (2) and (3) nous évaluons: $R(P) / R(Pm)  $\\
%$\frac{|P|}{|Pm|} * \frac{U(Pm)}{U(P)}$\\
%Nous avons: $\frac{|P|}{|Pm|} > 1$. (La valeur de %$\frac{|P|}{|Pm|}$ est représenté par $n$ dans la Figure %\ref{pr}).\\
%Reste à rechercher un minimum pour l'expression suivante: %$\frac{U(Pm)}{U(P)} > = < ?$\\
%$\frac{\sum d_{t}(Pm)}{\sum d_{t}(Pm) + supp(P) * \alpha  }$ avec %$\alpha  = d_{t}(P)-d_{t}(Pm)$\\
%$\longrightarrow \frac{1}{1 + \frac{supp(P) * \alpha}{\sum %d_{t}(Pm)}  }$ \\
%$\longrightarrow supp(P) * \alpha < \sum d_{t}(Pm)$ (5) \\
%D'après (4) $ 1 + \frac{supp(P) * \alpha}{\sum d_{t}(Pm)} < 2  %\longrightarrow \frac{1}{1 + \frac{supp(P) * \alpha}{\sum %d_{t}(Pm)}  } > 1/2.$\\

Dans la partie précédente, nous prouvons que $\frac{1}{1 + \frac{supp(P) * \alpha}{\sum d_{t}(Pm)}}$ est toujours supérieure à 1/2 quelle que soit la valeur de $\frac {|P|}{|Pm|}$. Dans le cas où la valeur de $\frac{|P|}{|Pm|} $ est supérieure à 1 (le cas où le numérateur est supérieur au dénominateur: prouvé par l'équation (5.1)) la fonction est toujours supérieure à 1 comme représenté sur la figure \ref{pr}.
D'où $P$ est différent de $Pm$.
\end{proof}

\begin{figure}[!ht]
  \begin{center}
  \label{pr}
  \includegraphics[scale=0.42]{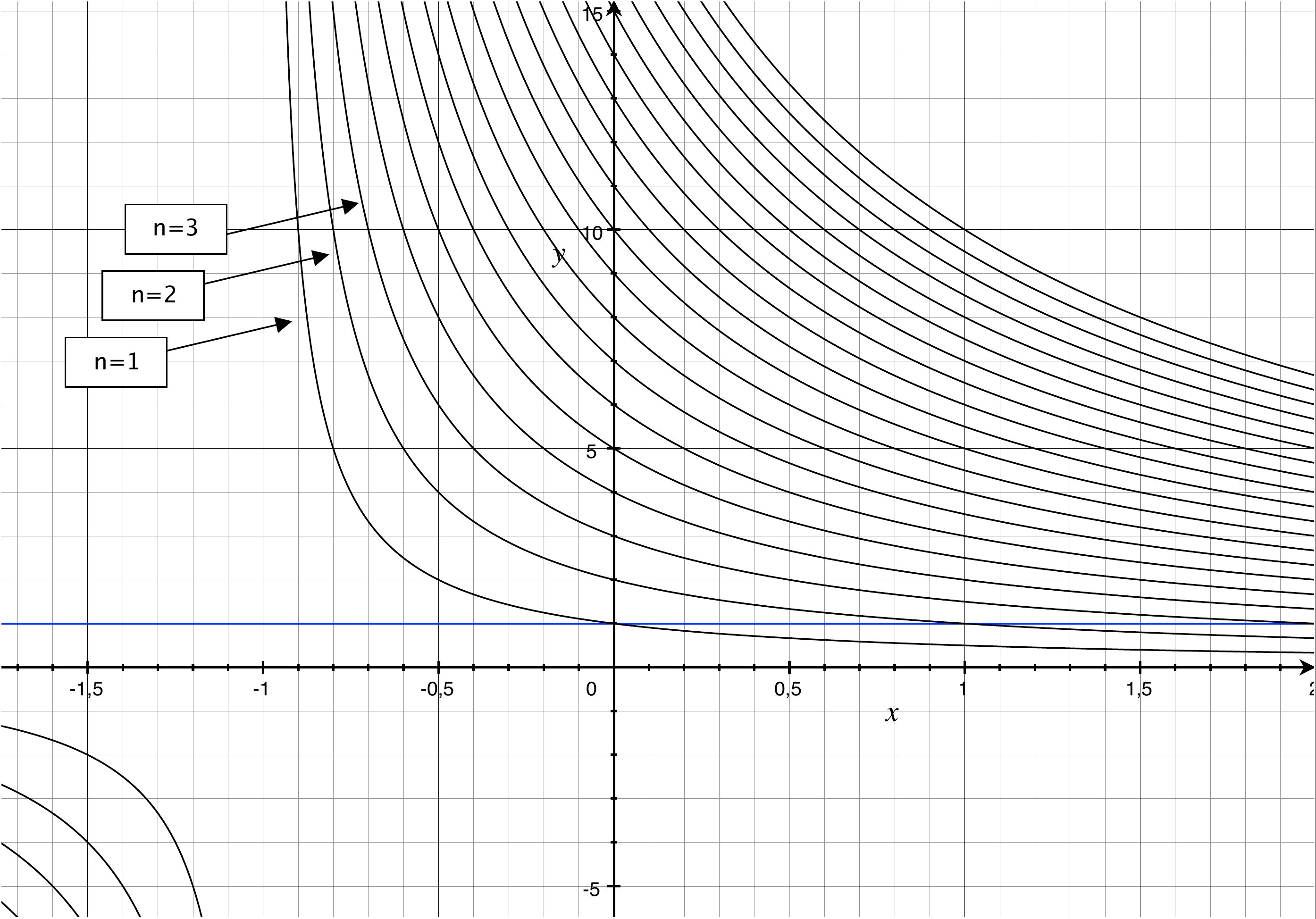}
  \caption{Evolution du métrique Ratio par la variation des paramètres cardinalité et utilité.} \label{pr}
\end{center}
\end{figure}

\begin{figure}[!ht]
  \begin{center}
   \includegraphics[scale=0.42]{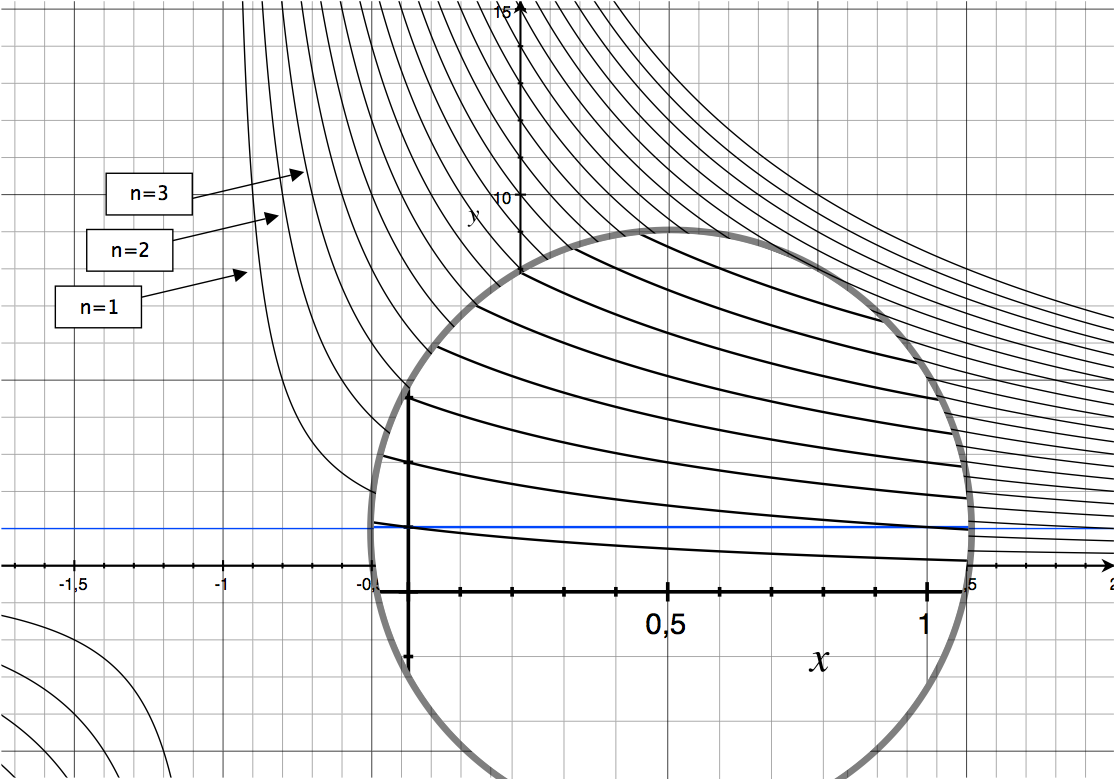}
  \caption{Agrandissement d'une partie de la courbe de l'évolution du métrique Ratio par la variation des paramètres cardinalité et utilité.} \label{preuveBis}
\end{center}
\end{figure}

\subsection{Exemple illustratif}
\begin{example}
Considérant la base de données séquentielle $\mathcal{D}$ de l'exemple \ref{cxt2}, seuil $minsupp = 1/7$ et $minbenefit=17$.\\
Afin d'extraire les modèles de mobilité les plus bénéfiques notre algorithme MIPA procède comme suit:

\begin{itemize}
\item Premièrement, il invoque l'algorithme CEPN pour extraire les séquences fréquentes:

%\begin{tabular}{ll}
\[
\mathcal{FS} =\{J_{2};J_{3};J_{5};J_{6};J_{7}; J_{2}J_{6};J_{2}J_{7};J_{5}J_{3};J_{3}J_{6}; J_{3}J_{7};
 J_{6}J_{7}; J_{6}J_{3};J_{6}J_{5} ; J_{2}J_{6}J_{7} ; J_{5}J_{3}J_{7}\}
\]
L'ensemble $\mathcal{FS}$ est réduit en effectuant les calculs suivants :

\[
Bf(J_{3}J_{6}) = \frac{d_{t0}(J_{3}J_{6}) + d_{t1}(J_{3}J_{6}) }{supp(J_{3}J_{6})} + \frac{U(J_{3}J_{6})}{d_{t_{0}}(J_{3}J_{6})}+\frac{U(J_{3}J_{6})}{d_{t_{1}}(J_{3}J_{6})}.
\]

\[
Bf(J_{6}J_{5})= \frac{d_{t0}(J_{6}J_{5}) + d_{t1}(J_{6}J_{5}) }{supp(J_{6}J_{5})} + \frac{U(J_{6}J_{5})}{d_{t_{0}}(J_{6}J_{5})}+\frac{U(J_{6}J_{5})}{d_{t_{1}}(J_{6}J_{5})}. \\
\]
\[
Bf(J_{6}J_{3}) = \frac{d_{t0}(J_{6}J_{3}) + d_{t1}(J_{6}J_{3}) }{supp(J_{6}J_{3})} + \frac{U(J_{6}J_{3})}{d_{t_{0}}(J_{6}J_{3})}+\frac{U(J_{6}J_{3})}{d_{t_{1}}(J_{6}J_{3})}.
\]
\[
Bf(J_{5}J_{7})= \frac{d_{t1}(J_{5}J_{7}) + d_{t2}(J_{5}J_{7}) }{supp(J_{5}J_{7})} + \frac{U(J_{5}J_{7})}{d_{t_{1}}(J_{5}J_{7})}+\frac{U(J_{5}J_{7})}{d_{t_{2}}(J_{5}J_{7})}. \\
\]
\[
Bf(J_{5}J_{3}) = \frac{d_{t0}(J_{5}J_{3}) + d_{t1}(J_{5}J_{3}) + d_{t2}(J_{5}J_{3}) }{supp(J_{5}J_{3})} + \frac{U(J_{5}J_{3})}{d_{t_{0}}(J_{5}J_{3})}+\frac{U(J_{5}J_{3})}{d_{t_{1}}(J_{5}J_{3})} + \frac{U(J_{5}J_{3})}{d_{t_{2}}(J_{5}J_{3})}.
\]
\[
Bf(J_{2}J_{6}J_{7})= \frac{d_{t5}(J_{2}J_{6}J_{7}) + d_{t6}(J_{2}J_{6}J_{7}) }{supp(J_{2}J_{6}J_{7})} + \frac{U(J_{2}J_{6}J_{7})}{d_{t_{5}}(J_{2}J_{6}J_{7}}+\frac{U(J_{2}J_{6}J_{7})}{d_{t_{6}}(J_{2}J_{6}J_{7})}. 
\]
\[
Bf(J_{5}J_{3}J_{7})= \frac{d_{t5}(J_{5}J_{3}J_{7}) + d_{t6}(J_{5}J_{3}J_{7}) }{supp(J_{5}J_{3}J_{7})} + \frac{U(J_{5}J_{3}J_{7})}{d_{t_{5}}(J_{5}J_{3}J_{7}}+\frac{U(J_{5}J_{3}J_{7})}{d_{t_{6}}(J_{5}J_{3}J_{7})}.
\]
%\end{tabular}
\item Deuxièmement, il appelle notre algorithme MIPA, qui utilise l'ensemble $\mathcal{FS}$ pour extraire l'ensemble de modèles de mobilité les plus bénéfique suivant en calculant les Ratios des motifs ayant des valeurs de Bénéfice supérieur à minbenefit :\\

\[
R(J_{6}J_{5})= \frac{|J_{6}J_{5}|}{Bf(J_{6}J_{5})} = \frac{2}{17,14} =  0,116.
\]
\[
R(J_{6}J_{3})= \frac{|J_{6}J_{3}|}{Bf(J_{6}J_{3})} = \frac{2}{17,5} = 0,114.
\]
D'où l'ensemble \emph{MIP} :
\[\textrm{\emph{MIP}} =\{<J_{6}J_{3}>;<J_{6}J_{5}>\}. \]
\end{itemize}
\end{example}

\section{Étude de la complexité et de la validité}
%%%%%%%%%%%%%% check %%%%%%%%%%%%%%%%%%%%%%%%
Dans cette section, nous prouvons la complétude et la terminaison de notre algorithme qui implémente la méthode  \textrm{\emph{MIP}} puis nous calculons la complexité de cette méthode au pire des cas.\\

\subsection{Terminaison et complétude de l'algorithme MIPA}.
\begin{proposition}
L'algorithme MIP termine.
\end{proposition}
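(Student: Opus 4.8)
Le plan consiste à constater que l'algorithme MIPA (algorithme \ref{mipa}) se réduit à une unique boucle \texttt{ForEach} parcourant l'ensemble $\mathcal{FS}$ des séquences fréquentes, suivie, à chaque itération, d'opérations de coût fini (\texttt{getBenefit}, \texttt{getRatio} et \texttt{insertIntoMIP}). Il suffit donc d'établir deux choses : que $\mathcal{FS}$ est fini, de sorte que la boucle n'effectue qu'un nombre borné d'itérations, et que chacune des sous-routines invoquées termine.

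D'abord, je montrerais la finitude de $\mathcal{FS}$. La base de données séquentielle $\mathcal{D}_t$ (Définition \ref{defDB}) contient un nombre fini de trajectoires, chacune étant une liste \emph{finie} de croisements ; le nombre de sous-séquences distinctes susceptibles d'apparaître est par conséquent borné, et $\mathcal{FS}$ en est un sous-ensemble. De plus, $\mathcal{FS}$ est produit en amont par l'algorithme CEPN \cite{fou19}, dont la terminaison assure que l'ensemble fourni en entrée de MIPA est effectivement fini. La boucle \texttt{ForEach} n'accomplit donc qu'un nombre fini $|\mathcal{FS}|$ d'itérations.

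Ensuite, je vérifierais que le corps de la boucle termine à chaque passage. L'appel \texttt{getBenefit}($E$) évalue $Bf(E)$, qui est une somme finie indexée par les trajectoires $t \in \mathcal{T}$ (ensemble fini) ; chacun de ses termes fait intervenir la densité $d_t(E)$ et l'utilité $U(E)$, elles-mêmes définies comme des sommes finies, respectivement sur les trajectoires de $\mathcal{T}$ et sur les sous-ensembles à $1$-junction de $E$, donc calculables en temps fini. L'appel \texttt{getRatio}($E$) renvoie $R(E)=|E|/Bf(E)$, soit une simple division, et \texttt{insertIntoMIP} insère $E$ dans une structure triée de taille au plus $|\mathcal{FS}|$, opération bornée. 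Chaque itération s'exécute donc en temps fini.

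Le seul point réellement à surveiller est l'évaluation de $Bf(E)$ : il faut s'assurer qu'elle ne déclenche aucune récursion non bornée. Comme elle ne repose que sur des sommations finies sur $\mathcal{T}$ et sur les sous-ensembles de $E$, et qu'aucun de ces calculs n'appelle MIPA ni ne rouvre la boucle, la terminaison est acquise. En combinant un nombre fini d'itérations avec un corps de boucle terminant, je conclurais que l'algorithme MIPA termine.
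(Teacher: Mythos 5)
Votre preuve est correcte, mais elle suit un chemin sensiblement différent de celui du manuscrit. Vous raisonnez sur la \emph{structure de contrôle} de l'algorithme MIPA : une unique boucle \texttt{ForEach} parcourant l'ensemble fini $\mathcal{FS}$, dont vous établissez la finitude à partir de celle de la base $\mathcal{D}_t$, puis vous vérifiez que chaque appel du corps de boucle (évaluation de $Bf(E)$, de $R(E)$, insertion triée) se réduit à des sommations finies sur $\mathcal{T}$ et sur les sous-ensembles à $1$-junction de $E$, sans récursion. La preuve du manuscrit, elle, raisonne sur la \emph{taille de la sortie} : elle observe que le nombre de \textit{RSU}s produits est au pire égal à $|\mathcal{J}|$, donc fini, et en conclut la terminaison. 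Votre décomposition est plus fidèle à un argument de terminaison classique (variant de boucle borné par $|\mathcal{FS}|$ plus terminaison du corps) et elle est logiquement plus solide : la finitude du résultat ne suffit pas, à elle seule, à garantir qu'un algorithme s'arrête, alors que la finitude du domaine d'itération combinée à la terminaison de chaque itération le garantit. L'argument du manuscrit a pour lui la brièveté et le fait de rattacher la borne à la quantité physiquement significative ($|\mathcal{J}|$, le nombre de croisements), mais votre version comble le pas logique qu'il laisse implicite.
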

\begin{proof}
Le nombre des \textit{RSU}s générés par l'algorithme est un nombre fini. En effet, le nombre $k$ des \textit{RSU}s qui représente les emplacements des croisements les plus utiles, est au pire des cas,  égal à la cardinalité de $|\mathcal{J}|$. Puisque l'ensemble fourni par l'algorithme est un nombre fini alors nous pouvons considéré que l'algorithme se termine.
\end{proof}
\begin{proposition}
L'algorithme MIP est complet.
\end{proposition}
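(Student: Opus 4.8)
Avant tout, je préciserais ce que « complet » signifie ici. Comme le bénéfice $Bf(s)$ n'est défini que pour $s \in \mathcal{FS}$, un motif le plus important est par définition une séquence fréquente $s$ vérifiant $Bf(s) \geq minbenefit$. La complétude se décompose alors en deux volets : (i) la \emph{complétude de l'extraction}, à savoir que $\mathcal{MIP}$ contient \underline{toutes} ces séquences sans en omettre aucune ; et (ii) la \emph{complétude de la couverture}, à savoir que l'ensemble de croisements renvoyé par le composant CovC rencontre chacun des motifs extraits. Le plan consiste donc à établir d'abord l'égalité ensembliste $\mathcal{MIP} = \{ s \in \mathcal{FS} \mid Bf(s) \geq minbenefit \}$, puis à en déduire la couverture.

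Je procéderais en trois étapes pour le premier volet. D'abord, je m'appuierais sur la complétude de l'algorithme CEPN \cite{fou19}, qui garantit que l'ensemble $\mathcal{FS}$ reçu en entrée de MIPA contient la totalité des séquences fréquentes : aucun motif bénéfique ne peut ainsi résider hors de $\mathcal{FS}$. Ensuite, la terminaison déjà établie assure que la boucle \textbf{pour chaque} de l'algorithme \ref{mipa} parcourt chaque élément $E \in \mathcal{FS}$ exactement une fois, de sorte que tout candidat est testé. Enfin, je vérifierais que $\textit{getBenefit}(E)$ calcule exactement la valeur $Bf(E)$ et que $\textit{insertIntoMIP}$ ajoute $E$ dès que $\textit{getBenefit}(E) \geq minbenefit$ ; toute séquence fréquente et bénéfique est donc insérée, et réciproquement toute séquence écartée ne l'est pas, d'où l'égalité souhaitée. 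Pour le second volet, j'invoquerais le fait que CovC calcule une traverse minimale de l'hypergraphe dont les arêtes sont précisément les motifs de $\mathcal{MIP}$ : par la Définition \ref{trans}, une telle traverse intersecte chaque arête, et donc chaque motif important est couvert par au moins une \textit{RSU}.

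La principale difficulté ne se situe pas dans la boucle, qui est manifestement exhaustive, mais dans l'hérédité de la complétude depuis l'extracteur amont : il faut justifier rigoureusement (ou citer) que CEPN énumère effectivement \underline{toutes} les séquences fréquentes, car c'est cette propriété qui interdit à un motif important d'échapper à l'énumération. Un second point de vigilance est de s'assurer que l'évaluation de $\textit{getBenefit}$ est exacte et non approchée : une approximation pourrait rejeter à tort une séquence réellement bénéfique et briser ainsi la complétude de l'extraction.
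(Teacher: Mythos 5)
Votre démonstration est correcte dans son principe, mais elle emprunte une route réellement différente de celle du manuscrit. La preuve du manuscrit est un argument extrémal en une phrase : elle observe que, au pire des cas, lorsque le seuil \textit{minbenefit} vaut $1$, l'algorithme restitue la totalité des croisements, et en conclut la complétude. Elle ne dit rien du comportement pour un seuil arbitraire. Vous, au contraire, décomposez la complétude en deux volets (complétude de l'extraction, c'est-à-dire l'égalité $\mathcal{MIP} = \{ s \in \mathcal{FS} \mid Bf(s) \geq minbenefit \}$, puis complétude de la couverture via la traverse minimale de la Définition \ref{trans}) et vous l'établissez structurellement pour \emph{tout} seuil, en vous appuyant sur la complétude de CEPN, l'exhaustivité de la boucle de l'algorithme \ref{mipa} et l'exactitude de $\textit{getBenefit}$. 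Votre version est plus forte et correspond davantage à ce qu'on entend usuellement par complétude d'un algorithme d'énumération ; elle a aussi le mérite d'expliciter les deux hypothèses dont tout dépend (complétude de l'extracteur amont et exactitude du calcul du bénéfice), que la preuve du manuscrit passe entièrement sous silence. L'argument du manuscrit, lui, est plus court mais n'établit qu'une propriété du cas limite : il montre que l'algorithme \emph{peut} tout retourner, pas qu'il retourne tout ce qui satisfait le critère pour un seuil donné. Les deux raisonnements ne se recouvrent donc pas, et le vôtre subsume celui du manuscrit.
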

%%%%%%%%%% check %%%%%%%%%%%%%%%
\begin{proof}
Le nombre des \textit{RSU}s générés par l'algorithme représente les croisements très utiles selon un seuil utilisateur nommé \textit{minbenefit}, au pire des cas, lorsque ce dernier égale à $1$ l'algorithme fourni la totalité des croisements comme résultat. Donc, l'algorithme MIP est considéré comme complet. 
\end{proof}
\subsection{Complexité théorique}
%%%%%%% check %%%%%%%%%%%%%%%%%%%%%%%%%%%%%%%%%%
Dans le cas de la méthode \textrm{\emph{MIP}} nous pouvons déduire sa complexité par la somme des complexités des composant CEPN, MIPA et CoC. {\`A} partir d'une base de données séquentielle de $n$ véhicules et $m$ croisements , nous avons : 
\begin{itemize}
\item Complexité de CEPN : \\
L'algorithme analyse d'abord la base de données pour calculer le support de chaque séquence. La complexité de cette étape est  $\mathcal{O}(m \times n)$. Ensuite, un élagage se fait pour éliminer les séquences non fréquentes et une recherche récursive, des séquences plus grande en terme de taille, à l'aide d'une recherche en profondeur d'abord. Pour chaque appel récursif, l'algorithme nécessite un temps de $\mathcal{O}(m\times n)$ pour créer une base de données projetée des séquences de préfixe, puis il nécessite un temps de $\mathcal{O}(m)$ pour calculer le support dans cette base projetée. 
\item Complexité de MIPA : \\
L'algorithme calcul les valeurs de l'utilité, le bénéfice et le ratio de chaque croisement. Cette phase nécessite un coût  de $\mathcal{O}(1)$ puisque le calcul de ces métriques nécessite uniquement l'information support qui a été évalué dans l'algorithme CEPN. Ensuite, l'ordonnancement des croisements, selon la valeur du Ratio, nécessite un coût  de  $\mathcal{O}(m)$ au pire des cas.
\item Complexité de CoC \\
{\`A} partir d'un hypergraphe de $a$ sommets et $b$ hyperarêtes nous avons une complexité exponentielle, au pire des cas, de l'ordre de $\mathcal{O}(b * a^{b+1})$ avec $a=|\mathcal{J}|$ et $b=|\mathcal{E}|$, les tests pour vérifier une traverse minimale s'effectuent en $\mathcal{O(1)}$.
\end{itemize}
En effet, la complexité théorique, au pire des cas, de l'algorithme MIPA est égale à :\\
$\mathcal{O}(n* m) + \mathcal{O}(n* m) + \mathcal{O}(m)+ \mathcal{O}(m) +\mathcal{O}(b * a^{b+1})$ avec $a$ et $b$ toujours inférieur à $m$ et $n$ respectivement.
D'où la complexité peut s'écrire : 
$\mathcal{O}(n*m + (b * a^{b+1}))$
\section{Conclusion}
%%%%%%%%%%%%%%%% A REFAIRE %%%%%%%%%%%%%%%%%%%%%%%%%%
L'objectif principal de la méthode  \textrm{\emph{MIP}}  est d'assurer une couverture élevée en utilisant le plus petit nombre de \textit{RSU}. En outre, l'entrée de la méthode est une base de données séquentielle enrichi par une information \underline{utilité} pour chaque transaction et fournit en sortie un ensemble réduit de trajectoire dite "trajectoire avec hautes utilités", à partir de ces dernières la méthode choisis quelques emplacements dans lesquels déploie les unités de bord de la route couramment nommé \textit{RSU}. En effet, ces emplacements sont nommés des séquences à hautes utilités, le calcul de ces dernières se fait en utilisant trois nouvelles métriques qui sont respectivement: Utility, Benefit et Ratio.\\
La méthode  \textrm{\emph{MIP}} est composé de deux parties à savoir CEPN et MIPA.

\newpage
\thispagestyle{empty}
\mbox{}
\newpage
%%%%%%%%%%%%%%%%%%%%%%                     XP                 %%%%%%%%%%%%%%%%%%%%%%%%%%%%%%%%%%%%%%%
\chapter{Expérimentations et Résultats}
\minitoc
\section{Introduction}
Dans ce chapitre, nous présentons les différentes métriques d'évaluation utilisés, les mesures d'efficacités et des performances afin d'établir une étude expérimentale entre les différentes approches et méthodes de la littérature. De plus, nous démontrons   les performances des méthodes  \textrm{\emph{SPaCov+}} et \textrm{\emph{HeSPiC}} vs. Les autres méthodes de la littérature en termes d'efficacité et de performance. Les méthodes sont testées sur un simulateur du trafic routier afin de bien les évalués.\\
Une étude expérimentale pour chaque système est détaillée. Dans chaque étude expérimentale différents scénarios de test seront discutés. Nous présentons pour chaque étude les résultats expérimentaux associés à chaque scénario de test.

\section{Métriques d'évaluation}
Dans cette section, nous présentons les métriques d'évaluation utilisées pour comparer les méthodes. D'une manière générale, l'évaluation d'une méthode porte sur deux points importants à savoir l'efficacité et la performance. Le premier critère "efficacité" signifie le taux de ressemblance du résultat retourné par rapport aux besoins de l'utilisateur, un taux de ressemblance élevé signifie une meilleure efficacité de résultat. Le deuxième critère "performance" signifie le temps de la requête qui est la somme entre le temps de calcul et le temps de réponse, lorsque ce temps tend vers zéro, la performance devient plus importante. Pour cela, plusieurs mesures d'efficacité et de performance sont proposées dans la littérature.

\subsection{Mesure d'efficacité}
Les mesures d'efficacité les plus utilisées sont la \underline{précision} et le \underline{rappel}. Rappelons que la précision notée Pr, est la capacité qu'un système retourne tous les résultats pertinents. D'une manière formelle, la précision peut s'écrire de la manière suivante:\\ 
\[ 
Pr = \frac{|Prt| |Ret|}{|Ret|} avec |Ret| \neq 0
\]
Le rappel, noté R, reflète la capacité du système à retourner tous les résultats pertinents. D'une manière formelle le rappel s'écrit de la manière suivante: 
\[ 
R = \frac{|Prt| |Ret|}{|Prt|} avec |Prt| \neq 0
\]
Où:\\
Prt: L'ensemble des résultats pertinent dans le système pour une requête données.\\
Ret: les résultats retournés par le système.
\subsection{Mesure de performance}
Les mesures de performance les plus utilisées sont : 
\begin{itemize}
    \item Taux de couverture ou Coverage ratio.
    \item Coût ou Cost.
    \item Temps de latence ou Latency.
    \item Surcharge ou Overhead.
\end{itemize}
Dans cette partie nous définissons chaque métrique.
\begin{enumerate}
    \item \textbf{Taux de couverture}: Il évalue le pourcentage de couverture. Le taux de couverture pour un message d'événement donné \textit{e} est défini comme suit: \\

\[ Taux ~ de ~couverture~ pour~ un ~message~ = \frac{|V^{'}|}{|V|} \]
Où:
\begin{itemize}
\item  $V^{'}$: L'ensemble de véhicules informés (\textit{i.e.}, ceux qui ont reçu un message).
\item  $V$: Tous les véhicules en mouvement.
\end{itemize}

\item \textbf{Coût}: Le nombre total des \textit{RSU}s utilisées pour couvrir une zone considérée.

\item \textbf{Temps de latence}: Le temps nécessaire pour délivrer un message d'un \textit{RSU} à un véhicule. Le temps de latence moyen noté AL est défini comme suit: 
\[AL=\frac{\sum (t_{i}-T)} {NumberOfVehicles} \]

\item \textbf{Surcharge}: Le nombre de messages envoyés.
\end{enumerate}

\section{Environnement de simulation}
Dans cette section, nous présentons notre environnement de simulation utilisé pour évaluer et tester nos méthodes vs. Les autres méthodes de la littérature. Dans ce contexte plusieurs simulateurs ont été développés en tenant compte des besoins des chercheurs. Dans notre cas, nous avons utilisé deux simulateurs complémentaires: le premier nommé SUMO pour la simulation du trafic routier et le deuxième nommé OMNET++ pour la simulation du trafic réseaux. Dans ce qui suit, nous détaillons les deux simulateurs choisis.

\subsection{Simulateur SUMO}
Simulation of Urban MObility (SUMO) \cite{B1} est un progiciel de simulation de trafic multi-modal open source, portable, microscopique et continu conçu pour gérer de grands réseaux. La simulation du trafic au sein de SUMO utilise des outils logiciels pour la simulation et l'analyse du trafic routier et des systèmes de gestion du trafic. De nouvelles stratégies de trafic peuvent être mises en œuvre via une simulation pour les analyser avant d'être utilisées dans des situations réelles. Plusieurs outils et logiciel ont été développés au dessus de la couche haute du logiciel de simulation SUMO à l'instar de: MOVE, CIVITAS, OMNET++, etc.
\subsection{Simulateur OMNET++}
Objective Modular Network Testbed in C++ (OMNET++) \cite {OMNET10} est une bibliothèque et un cadre de simulation C ++ modulaires basé sur des composants principalement pour la construction de simulateurs de réseau. OMNET++ seul est un framework de simulation sans modèles pour les protocoles réseau les plus connus comme IP et HTTP. Les principaux modèles de simulation de réseaux informatiques sont disponibles dans d'autres framworks externes, qui peuvent s'ajouter à OMNET, comme le cas du framework INET qui offre une variété de modèles pour tous les types de protocoles réseaux et des technologies à l'exemple de IPv6, BGP (Border Gatway Protocol), etc.

%\subsection{Intégration de nos solutions}

\section{Étude Expérimentale des solutions proposées}
\subsection{Configuration et simulation}
Dans cette section, nous évaluons les performances respectives de  \textrm{\emph{SPaCov}},  \textrm{\emph{SPaCov+}} et  \textrm{\emph{HeSPiC}} via le simulateur OMNET ++ \cite{OMNET10}. Nous évaluons les performances de nos méthodes de couverture et nous comparons les résultats de deux premières méthodes ( \textrm{\emph{SPaCov}} et  \textrm{\emph{SPaCov+}} à ceux obtenus par MPC \cite{B1}. Pour la troisième méthode,  \textrm{\emph{HeSPiC}}, nous évaluons ses performances pour différents paramètres comme le nombre de véhicules, le numéro \textit{RSU}, la zone couverte, la portée de communication et le nombre de croisement. Les paramètres de simulation sont résumés dans la sous-section suivante.
\begin{table}[ht]
\centering

\label{caract}
\begin{tabular}{|c|c|c|}
 \hline \cline{1-3}
           \multirow{3}{*}{Physical layer} & Frequency band &  5.9 GHz \\  \cline{2-3}
            &Transmission power & 40 mW \\ \cline{2-3}
            &Bandwidth &10 MHz\\ \cline{2-3}
     \hline \hline
      \multirow{4}{*}{Link layer} & Bit rate &  6 Mbit/s \\  \cline{2-3}
            &CW &15.1023\\ \cline{2-3}
            &Slot time&13 us\\ \cline{2-3}
            &SIFS&32 us\\ \cline{2-3}
       \hline \hline
 \multirow{3}{*}{Scenarios} & Message size &  2312 Bytes \\  \cline{2-3}
            &Message frequency &0.5 Hz\\ \cline{2-3}
            &Slot time&13 us\\ \cline{2-3}
            &\#Runs&30 times\\ \cline{2-3}
            &Communication Range (Km) & 0.3 \\ \cline{2-3}
  \hline \cline{1-3}
\end{tabular}
\caption{Paramètres de simulation}\label{caract}
\end{table}

\begin{table}[ht]
 \centering
 \begin{center}

{%\rowcolors{2}{green!80!green!20}{green!30!green!10}
\begin{tabular}{||p{3.5cm}  | p{3cm}  | p{3cm} | p{2cm} ||}
\hline
%\rowcolor[HTML]{dddddd}
Map&Number of Raods&Number of Junctions &Complexity\\
\hline
\hline
SETIF& 462& 381& Low\\

TOZEUR & $812$&  $624$& Moderate\\

Menzel Bourguiba& $1218$ & $1109$ & High\\

\hline
\end{tabular}
}
\caption{Caractéristiques des topologies routières}\label{MapTab}
\end{center}
\end{table}

\begin{figure}[ht]
  \begin{center}
  \includegraphics[scale=0.6]{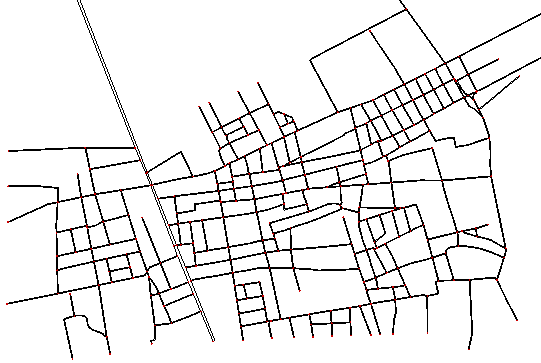}
  \caption{Exemple de la topologie routière de la ville de Tozeur.} \label{map1}
\end{center}
\end{figure}

\begin{figure}[ht]
  \begin{center}
  \includegraphics[scale=0.6]{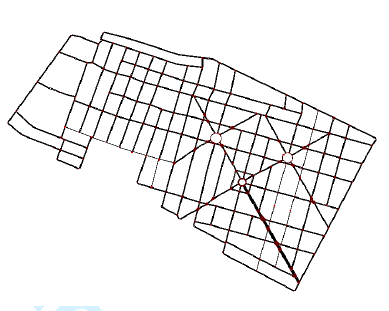}
  
  \caption{Exemple de la topologie routière de ville de Menzel Bourguiba.} \label{map2}
\end{center}
\end{figure}

\begin{figure}[ht]
  \begin{center}
  
   \includegraphics[scale=0.6]{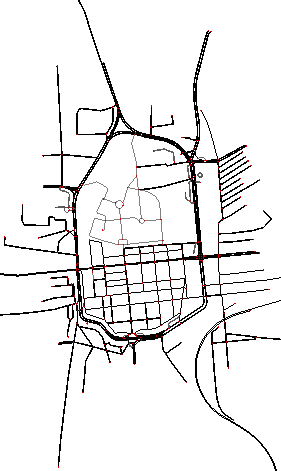}
  \caption{Exemple de la topologie routière de la ville de Setif.} \label{map3}
\end{center}
\end{figure}

La mobilité urbaine est simulée par simulation microscopique et continue du trafic routier à l'aide du SUMO \cite{SUMO12}. De plus, nous utilisons l'environnement OMNeT ++ pour réaliser la simulation de réseau. Les paramètres de simulation sont résumés dans le Tableau \ref{caract}. Dans ce qui suit, nous décrivons l'ensemble de jeux de données et les scénarios de tests.\\

\textbf{Jeux de données:} Les expériences présentées, dans ce manuscrit, sont basées sur un ensemble de données de trajectoires réelles représentant des portions de zones urbaines des villes de Sétif à l'Algérie Figure \ref{map3}, Tozeur au sud Tunisien Figure \ref{map1} et Menzel Bourguiba au nord de la Tunisie Figure \ref{map2}. Les cartes de ces villes sont téléchargées depuis le site \textit{OpenStreetMap} \footnote{https://www.openstreetmap.org/} sous forme de fichiers OSM. Ensuite, ils sont convertis en fichiers réseau, contenant les routes et les intersections sur lesquelles les véhicules simulés circulent. Les caractéristiques des topologies routières de chaque plan de ville sont résumées dans le Tableau \ref{MapTab}.\\

\subsection{Scénario de simulation}
Dans cette partie expérimentale, les objectifs sont d'évaluer l'efficacité et la performance de nos approches par rapport aux autres stratégies de la littérature. Nous testons les méthodes dans différentes conditions et avec différents scénarios à savoir le nombre de véhicules, les surfaces des cartes utilisées, les topologies des routes et la portée de communication pour les \textit{RSU}s. En effet, nous avons utilisé entre $200$ et $1000 $ véhicules en mouvement pour des surfaces couvertes entre $2$ $km^{2}$ et $4,8$ $km^{2}$. De même, nous avons choisi trois topologies de carte différentes: éparse, moyenne et dense afin de bien évaluer les performances de nos méthodes dans chaque type de topologie. En premier lieu, nous comparons les méthodes  \textrm{\emph{SPaCov}},  \textrm{\emph{SPaCov+}} et  \textrm{\emph{MPC}} en changeant quelques paramètres à savoir la surface couverte, le nombre de véhicules, la densité afin d'évaluer leurs impacts sur le nombre de \textit{RSU}s à déployer, le temps de latence et l'overhead.\\ Cependant,  \textrm{\emph{HeSPiC}} ne peut pas être comparé avec les autres approches citées ci-dessus puisqu'elle fait partie de la famille des méthodes avec une contrainte budgétaire, ce qui signifie que la couverture offerte par  \textrm{\emph{HeSPiC}} est limité par un seuil budgétaire en terme de nombre de \textit{RSU}. Nous avons testé les composants de la méthode  \textrm{\emph{HeSPiC}} à savoir la fonction Poids, la fonction Probabilité et la fonction Distance sur plusieurs paramètres comme la portée de communication, le nombre de véhicules, le taux de couverture et le nombre de \textit{RSU}. En second lieu, nous comparons  \textrm{\emph{SPaCoV+}},  \textrm{\emph{MPC}} et  \textrm{\emph{MIP}} sur une base de test pré-traitée afin que nous pouvons évaluer les intérêts de chaque croisement à savoir la surface couverte, le nombre de véhicules et la densité, sont utilisés pour évaluer le nombre de \textit{RSU} déployer par chaque méthode.\\

%%%%%%%%%%%%%%% FIGURE %%%%%%%%%%%%%%%%%%%%

%%%%%%%%%%%%%%%

\section{Résultats d'expérimentations}
Dans ce qui suit, nous rapportons d'abord les résultats de  \textrm{\emph{SPaCov}} et  \textrm{\emph{SPaCov+}} par rapport à MPC. Par la suite, nous présentons les résultats de  \textrm{\emph{MIP}} par rapport à  \textrm{\emph{SPaCoV+}} et  \textrm{\emph{MPC}}. À la fin, nous présentons les résultats obtenus par l'exécution de  \textrm{\emph{HeSPiC}} tout en changeant les conditions des tests et les paramètres des fonctions interne de la méthode (poids, probabilité et distance).\\

\subsection{Résultats des méthodes  \textrm{\emph{SPaCov}} et  \textrm{\emph{SPaCov+}}}

La Figure \ref{CR} montre l'évolution du taux de couverture de  \textrm{\emph{SPaCov}} et  \textrm{\emph{SPaCov+}} par rapport à la méthode MPC tout en changeant le nombre de véhicules pour la ville de Setif. La Figure \ref{CR:MB} et la Figure \ref{CR:T} montre que le taux de couverture est inversement proportionnel à la densité du trafic indépendamment du type de la carte. En fait, le taux de couverture diminue lorsque le nombre de véhicules augmente. Ceci peut s'expliquer par l'effet de la surcharge du réseau et des messages d'erreurs dus à l'augmentation du nombre de communications établies sur la même surface et au même instant. Par conséquent, ce phénomène augmente le nombre de véhicules qui ne recevront pas les messages, ce qui diminue le taux de couverture. Nous notons que notre méthode  \textrm{\emph{SPaCov+}} est toujours meilleure pour les différentes topologies de cartes par rapport aux deux autres méthodes en terme de taux de couverture. En effet, cela est dû au nombre supplémentaire des \textit{RSU}s utilisées et aux zones dans lesquelles elles sont placées (elle couvre deux types de zones: les zones très peuplées et les zones peu fréquentées).\\ 
Les deux méthodes  \textrm{\emph{SPaCov}} et  \textrm{\emph{MPC}} fournissent les mêmes résultats, pour les deux topologies routières représentées par les villes Menzel Bourguiba et Tozeur, lorsque le nombre de véhicules ne dépasse pas les 400, au delà de ce nombre  \textrm{\emph{SPaCov}} dépasse  \textrm{\emph{MPC}} en termes de couverture, cela peut être expliqué par le fait que  \textrm{\emph{SPaCov}} utilise un jeu de données ordonné, contrairement à  \textrm{\emph{MPC}}, ce qui rend l'emplacement des \textit{RSU}s plus sophistiqué pour le cas de  \textrm{\emph{SPaCoV}}.

\begin{figure}[!ht]
\centering
\includegraphics[scale=1]{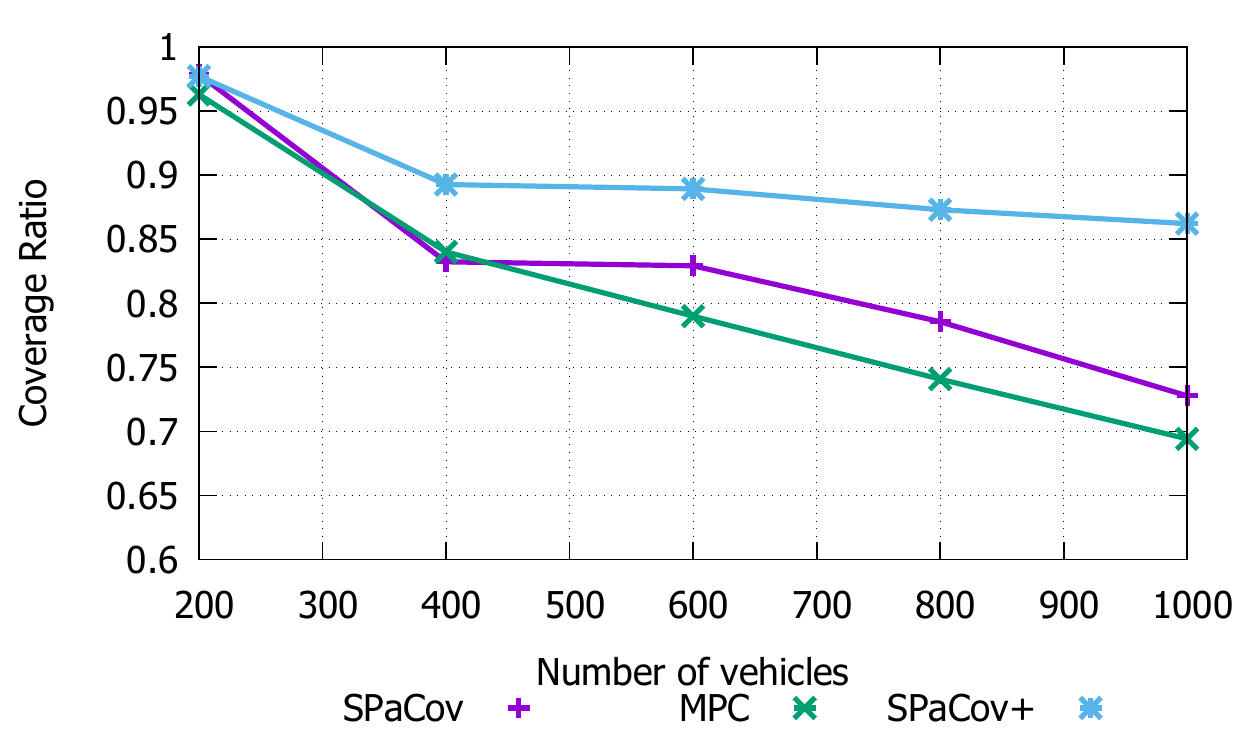}
\caption{Variation du taux de la couverture par rapport au nombre de véhicules pour la ville de Setif.}
%\caption{Setif city}
%\label{CR:ST}
%\end{subfigure}
%\caption{Coverage ratio w.r.t variation in number of vehicles under different road topologies.}
\label{CR}
\end{figure}
\begin{figure}[!ht]
\centering
%\begin{subfigure}{.5\linewidth}
\centering
\includegraphics[scale=1]{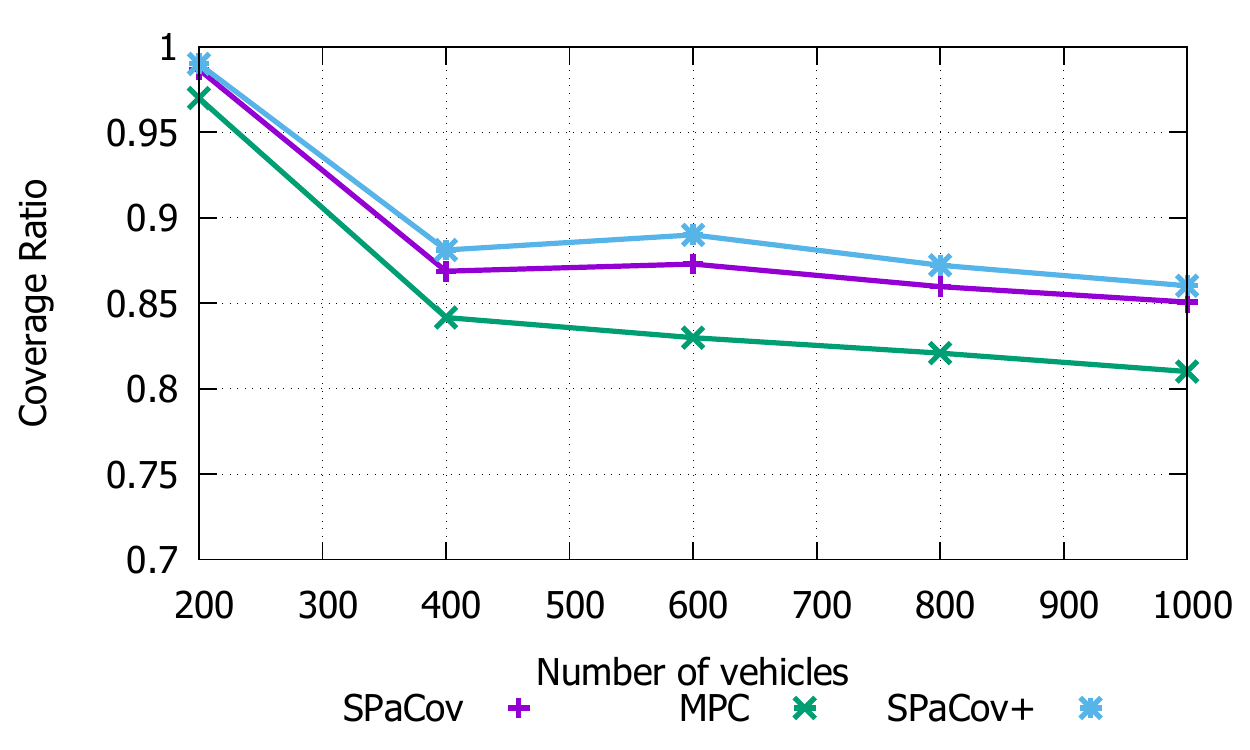}
\caption{Variation du taux de couverture par rapport au nombre de véhicule en mouvement pour la ville de Menzel bourguiba.}
\label{CR:MB}
\end{figure}

\begin{figure}[!ht]
%\end{subfigure}%
%\begin{subfigure}{.5\linewidth}
\centering
\includegraphics[scale=1]{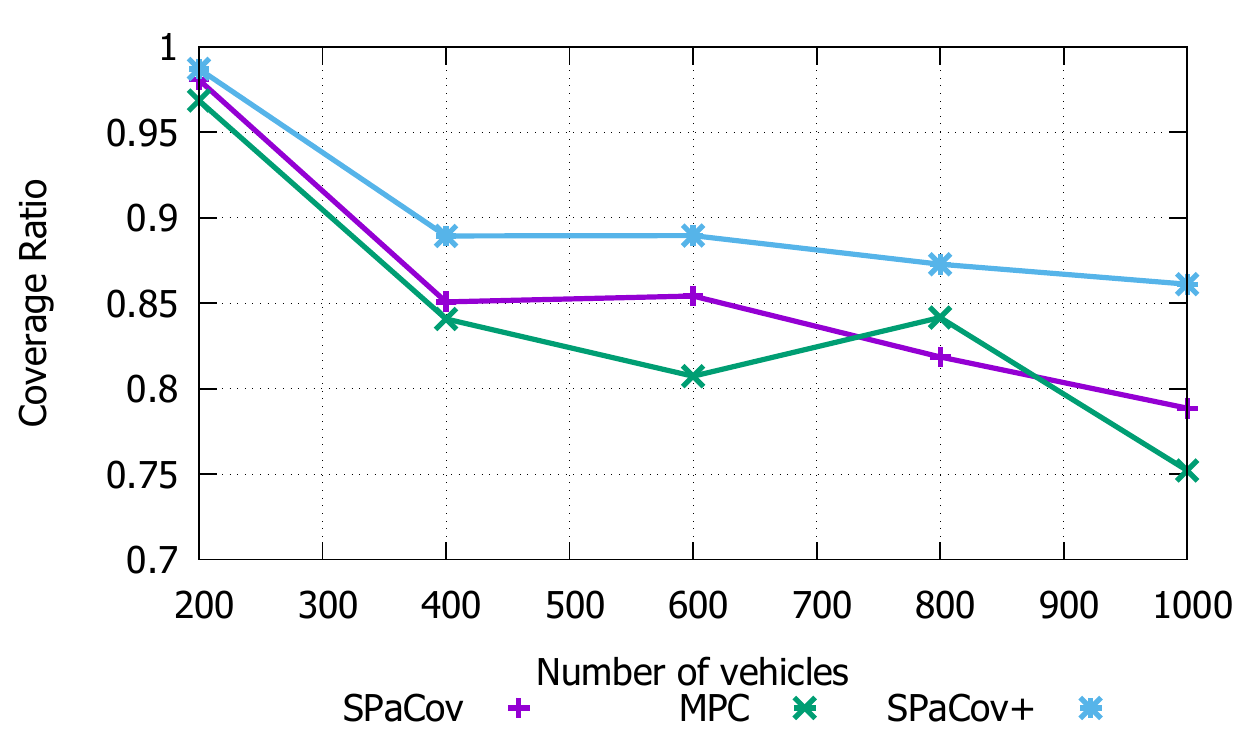}
\caption{Variation du taux de couverture par rapport au nombre de véhicule en mouvement pour la ville de Tozeur.}
%\caption{Tozeur city}
\label{CR:T}
\end{figure}

Figure \ref{Lat:MB}, Figure \ref{Lat:T},  et Figure \ref{Lat} montrent l'évolution du temps de la latence de nos méthodes par rapport à la méthode MPC tout en variant le nombre de véhicules sous différentes topologies routières représentées par les villes de Menzel borguiba, Tozeur et Setif. Nous observons que le temps de latence augmente au fur et à mesure que le nombre de véhicules dans le réseau augmente. La Figure \ref{Lat} montre qu'à partir d'un nombre élevé de véhicules (dans l'exemple de Setif: $800$ véhicules) la différence de l'évolution du temps de latence, entre les méthodes, devient remarquable par rapport à un nombre réduit de véhicules (inférieur à $400$). Les trois méthodes considérées ont une latence relative par rapport au nombre de \textit{RSU} déployées par chacune. La méthode qui déploie plus de \textit{RSU} a moins du temps de la latence. Pour toutes les villes considérées dans cette simulation,  \textrm{\emph{SPaCov+}} surpasse ses concurrents  \textrm{\emph{SPaCov}} et  \textrm{\emph{MPC}}.\\
\begin{figure}[ht]
\centering
%\begin{subfigure}{.5\linewidth}
\centering
\includegraphics[scale=1]{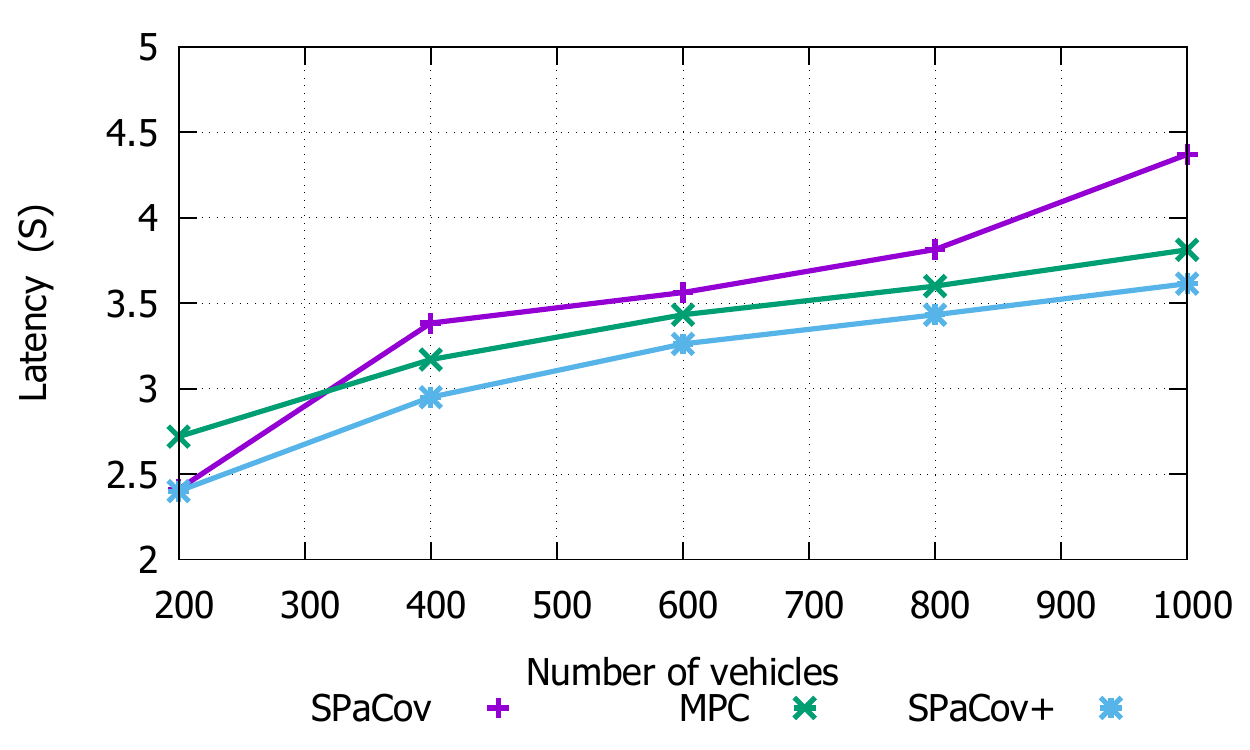}
\caption{Variation du temps de latence par rapport au nombre de véhicules pour la ville de Menzel Borguiba.)}
\label{Lat:MB}
\end{figure}
%\end{subfigure}%
%\begin{subfigure}{.5\linewidth}
\begin{figure}[ht]
\centering
\includegraphics[scale=1]{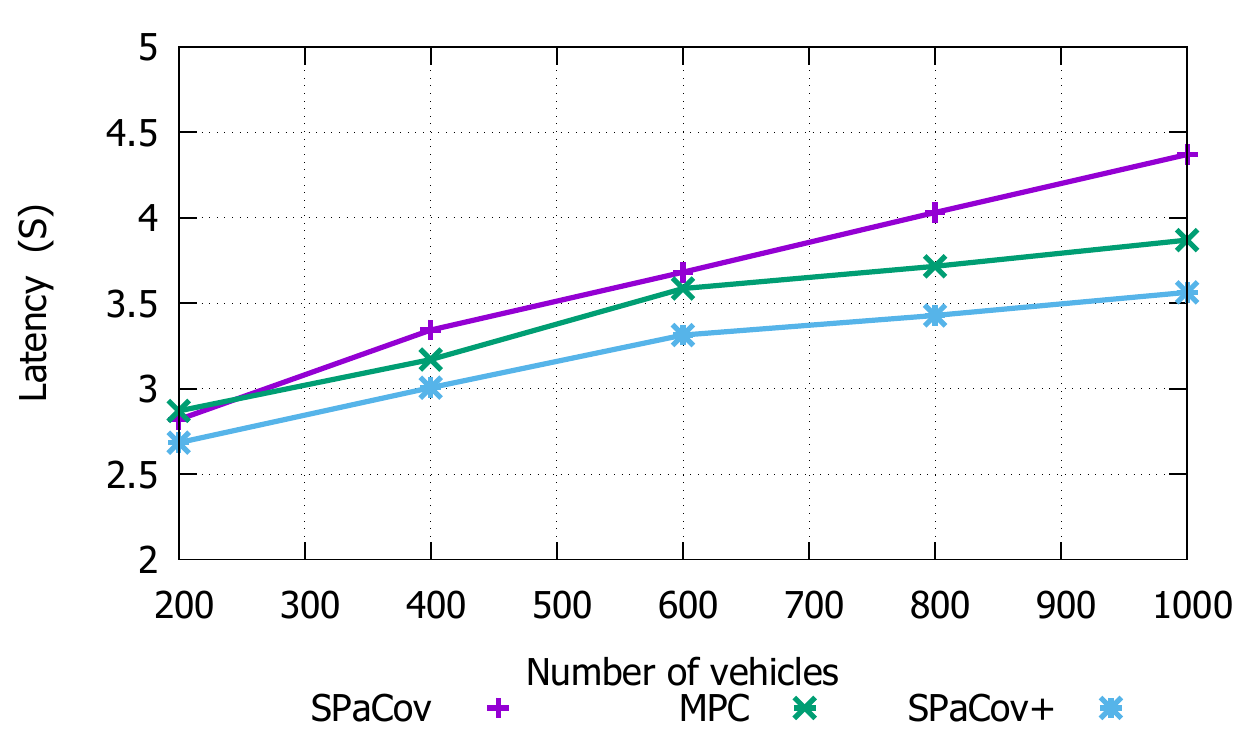}
\caption{Variation du temps de latence par rapport au nombre de véhicules pour la ville de Tozeur.)}
\label{Lat:T}
\end{figure}
\begin{figure}[!ht]
\centering
\includegraphics[scale=1]{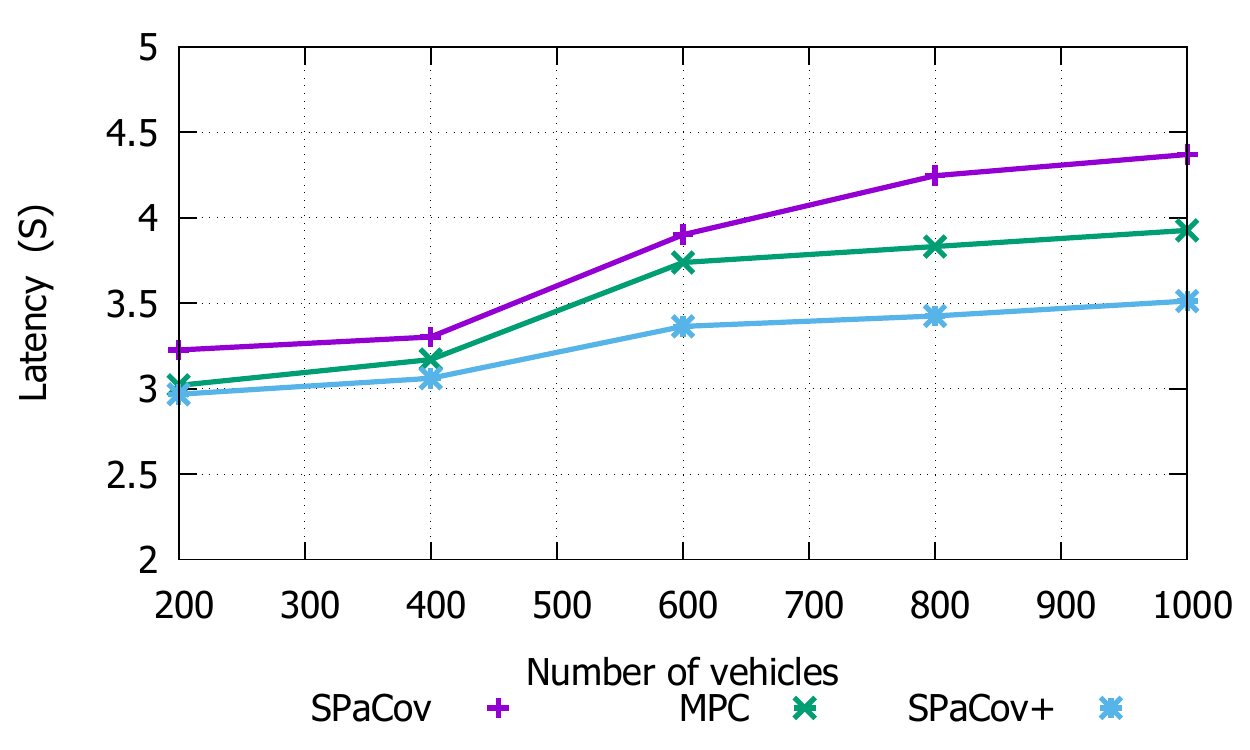}
\caption{Latence moyenne par rapport à la variation du nombre de véhicules pour la ville de Setif.}
\label{Lat:ST}
%\end{subfigure}
%\caption{Average latency w.r.t variation in number of vehicles under different road topologies.}
\label{Lat}
\end{figure}

D'une manière générale, les Figures \ref{OH:MB} et \ref{OH:T} et \ref{OH} montrent l'évolution de la surcharge des messages de nos méthodes par rapport à la méthode MPC en changeant le nombre de véhicules sous différentes topologies de route représentés par les villes Menzel borguiba, Tozeur et Setif. Nous observons que la surcharge augmente au fûr et à mesure que le nombre de véhicules augmente.\\
Plus précisément, la Figure \ref{OH:MB} montre que l'overhead augmente proportionnellement avec le nombre de véhicules. De plus, nous remarquons que  \textrm{\emph{SPaCov}} a le meilleur surcoût puisqu'il déploie moins de \textit{RSU} par rapport  \textrm{\emph{SPaCov+}} et  \textrm{\emph{MPC}}.  \textrm{\emph{SPaCov+}} signale un taux d'overhead plus faible que  \textrm{\emph{MPC}} malgré le nombre supplémentaire de \textit{RSU} déployées dans les zones peu fréquentées.\\
La Figure\ref{OH:T} (ville de Tozeur) montre que les trois méthodes considérées ont des valeurs d'overhead similaire (pour un nombre de véhicule réduit inférieur à $400$ véhicules). 

\begin{figure}[!ht]
\centering
%\begin{subfigure}{.5\linewidth}
\centering
\includegraphics[scale=1]{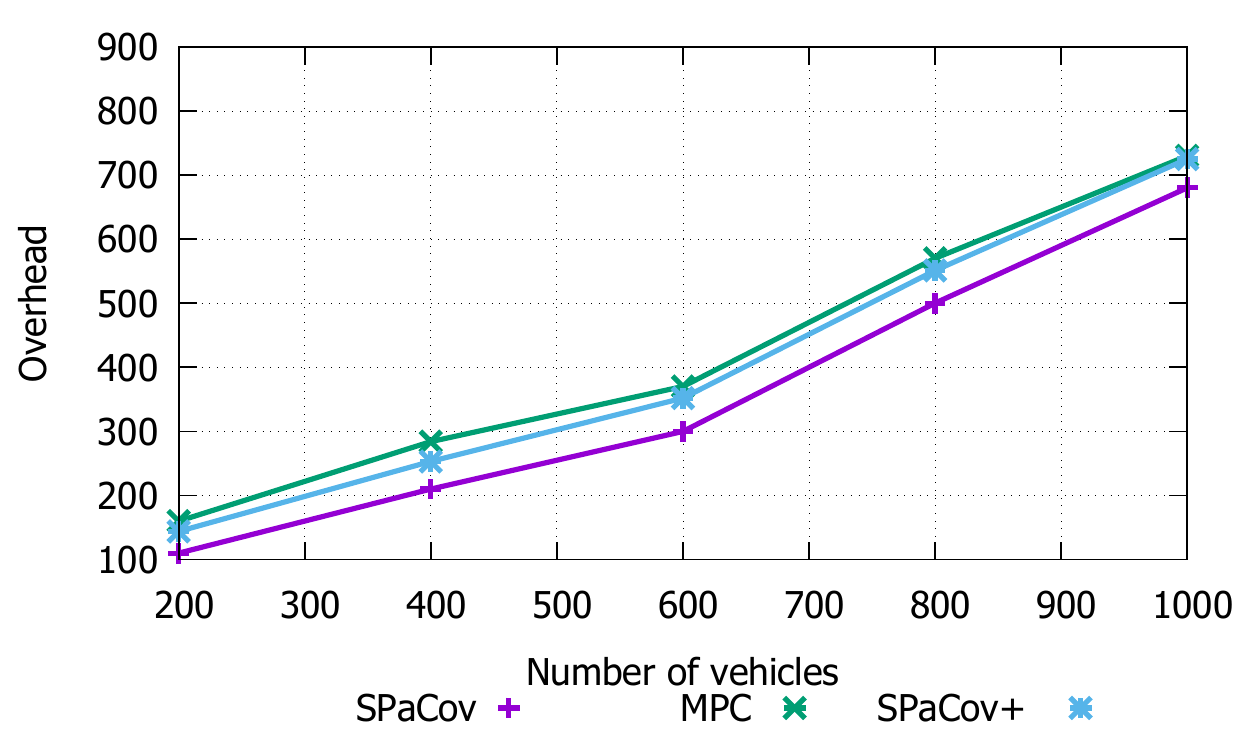}
\caption{Variation de l'verhead par rapport au nombre de véhicules pour la ville de Menzel Borguiba.}
\label{OH:MB}
\end{figure}
\begin{figure}[!ht]
%\end{subfigure}%
%\begin{subfigure}{.5\linewidth}
\centering
\includegraphics[scale=1]{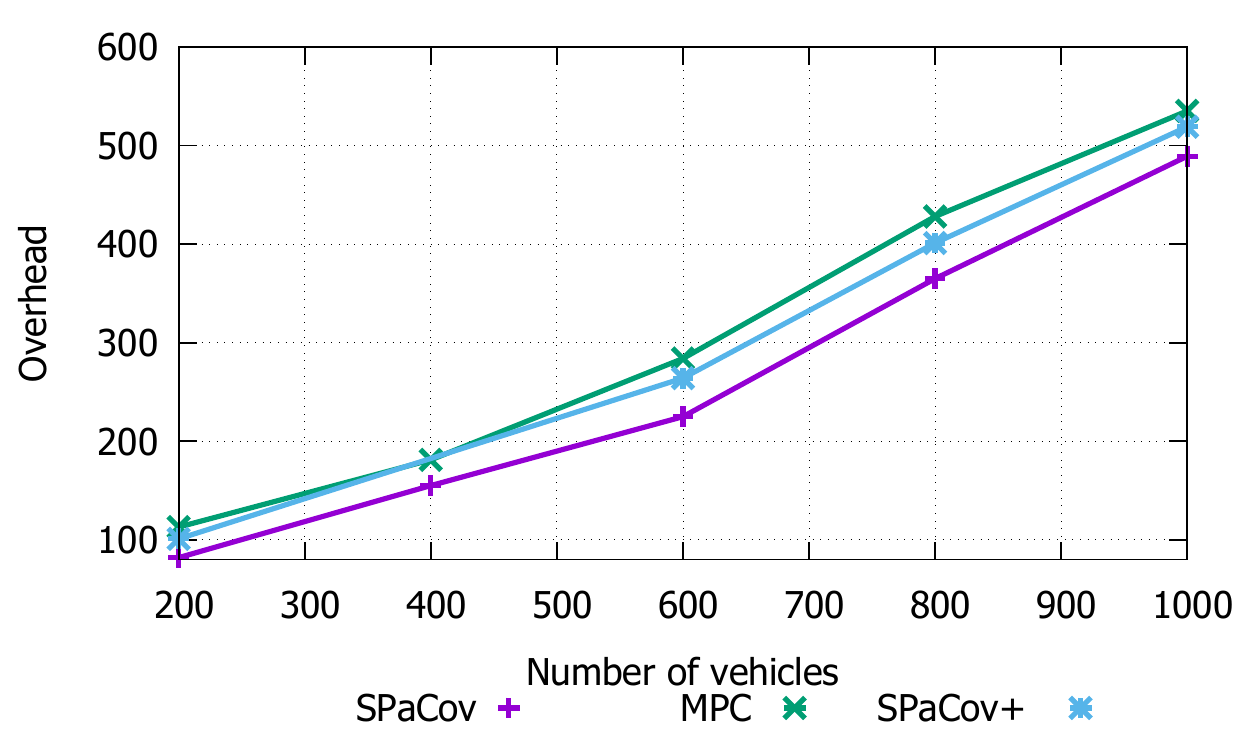}
\caption{Variation de l'overhead par rapport au nombre de véhicules pour la ville de Tozeur.}
\label{OH:T}
%\end{subfigure}\\[1ex]
%\begin{subfigure}{\linewidth}
\end{figure}
Cependant la Figure \ref{OH} (ville de Setif) montre une différence d'overhead considérable entre les trois méthodes  par rapport au nombre de \textit{RSU} déployées par chacune.\\
\begin{figure}[!ht]
\centering
\includegraphics[scale=1]{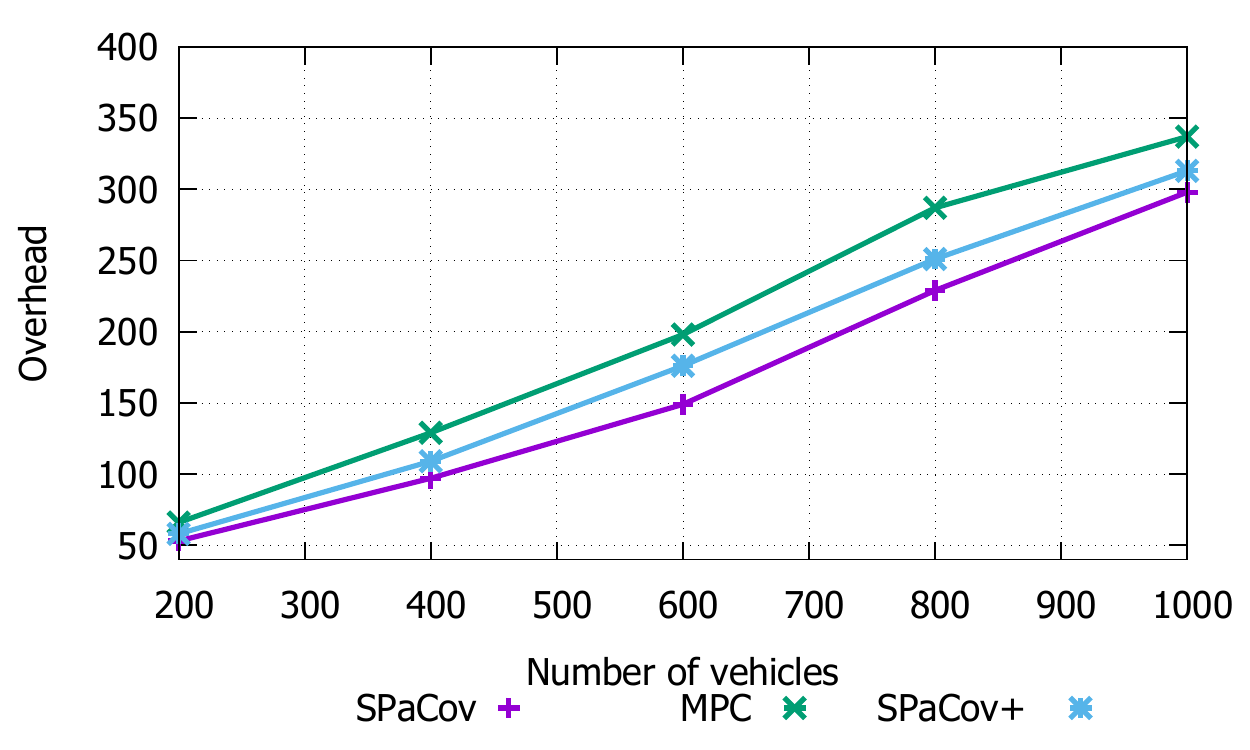}
\caption{Variation de l'overhead par rapport au nombre des véhicules pour la ville de Setif.}
%\label{OH:ST}
%\end{subfigure}
%\caption{Overhead w.r.t variation in number of vehicles under different road topologies.}
\label{OH}
\end{figure}

Les Figures \ref{cost:MB}, \ref{cost:T} et \ref{cost:ST} représentent l'évolution du coût de déploiement, en terme du nombre de \textit{RSU}, pour les méthodes  \textrm{\emph{SPaCov}},  \textrm{\emph{SPaCov+}} et  \textrm{\emph{MPC}} dans les différentes villes étudiées qui représentent  différentes topologies routières.
\begin{figure}[!ht]
\centering
%\begin{subfigure}{.5\linewidth}
\centering
\includegraphics[scale=1]{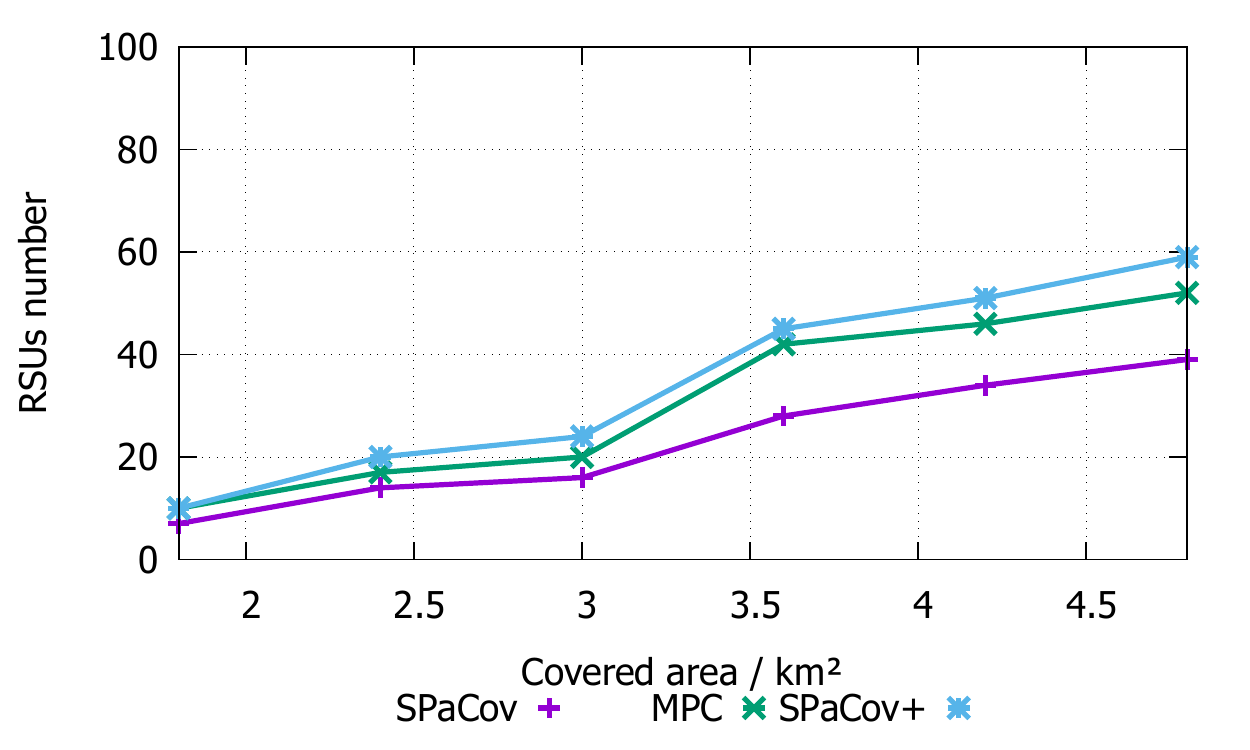}
\caption{Variation du nombre des \textit{RSU}s par rapport à la couverture spatiale pour la ville de Menzel Borguiba.}
\label{cost:MB}
\end{figure}
%\end{subfigure}%
%\begin{subfigure}{.5\linewidth}
\begin{figure}[!ht]
\centering
\includegraphics[scale=1]{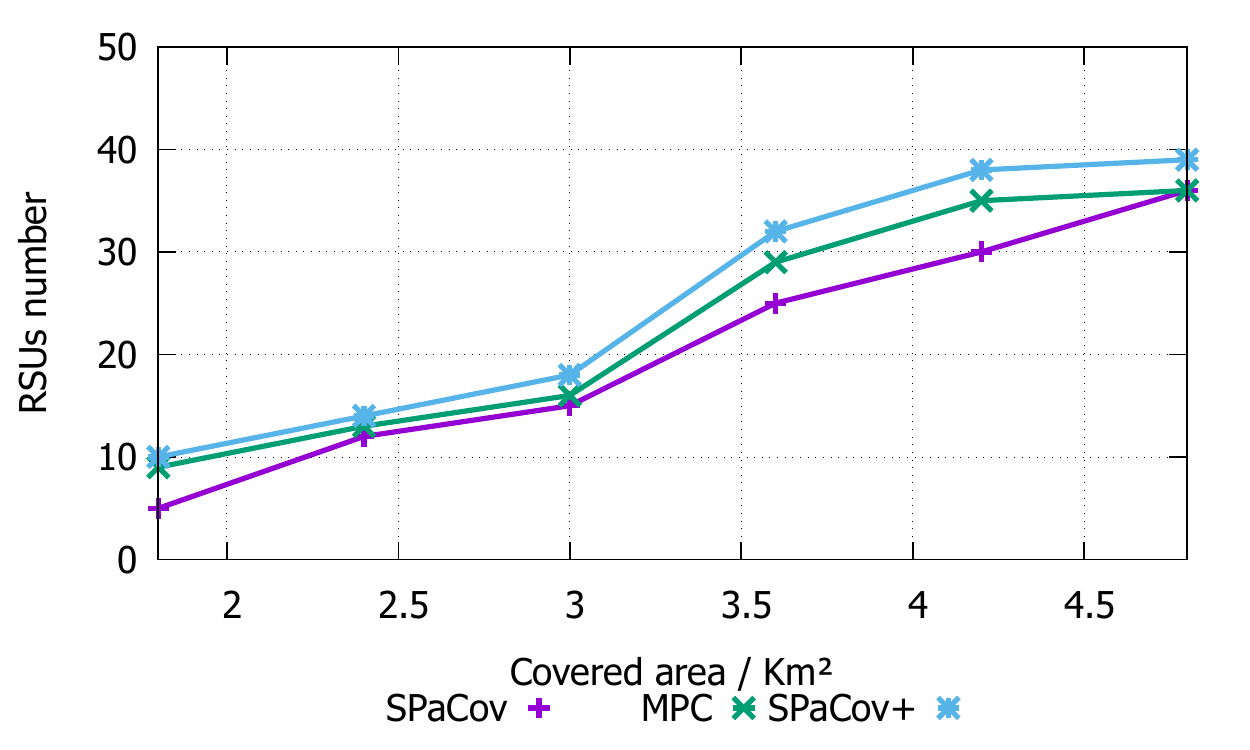}
\caption{Variation du nombre des \textit{RSU}s par rapport à la couverture spatiale de la ville de Tozeur.}
\label{cost:T}
\end{figure}
%\end{subfigure}\\[1ex]
%\begin{subfigure}{\linewidth}
\begin{figure}[!ht]
\centering
\includegraphics[scale=1]{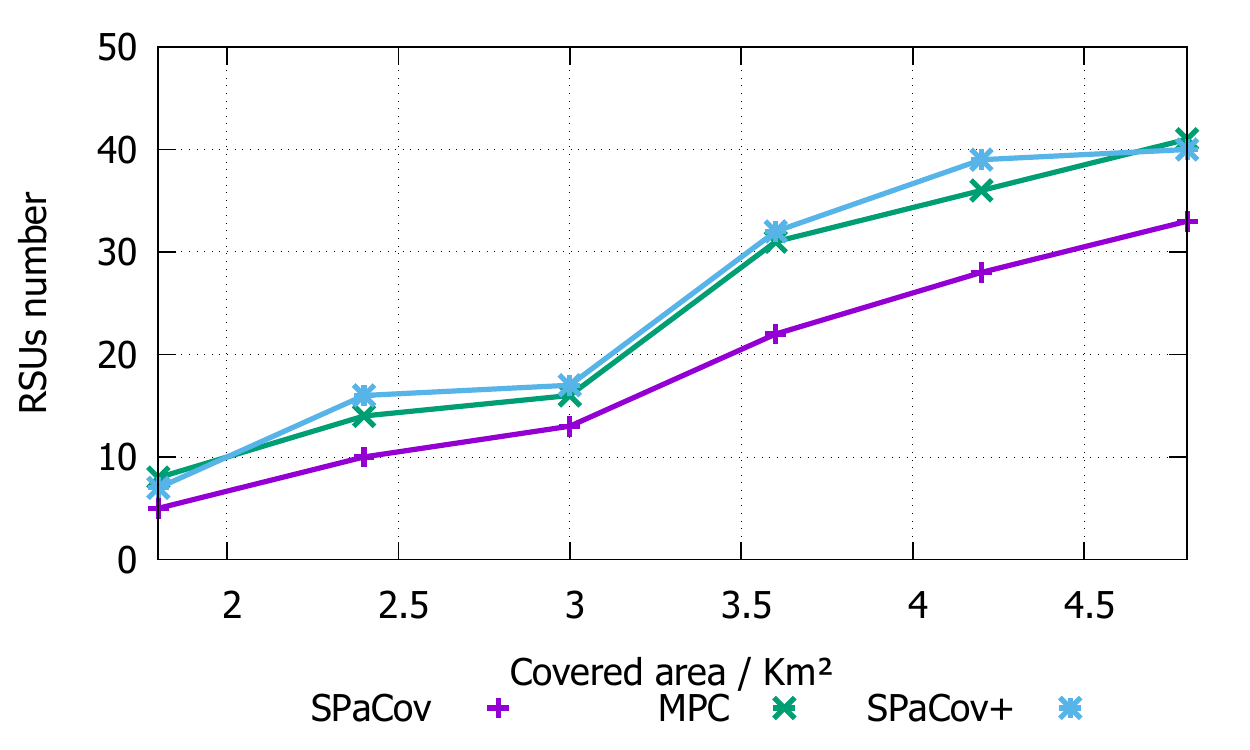}
\caption{Variation du nombre des \textit{RSU}s par rapport à la couverture spatiale pour la ville Setif.}
\label{cost:ST}
%\end{subfigure}
%\caption{Number of RSU w.r.t variation in covered area under different road topologies.}
\end{figure}
Comme prévu, le nombre des \textit{RSU}s augmente au fur et à mesure que la surface de la zone couverte augmente pour toutes les cartes considérées. Les deux méthodes  \textrm{\emph{MPC}} et  \textrm{\emph{SPaCoV+}} ont presque le même nombre des \textit{RSU} sur toutes les topologies routières. Cependant,  \textrm{\emph{SPaCov}} génère moins de \textit{RSU} sur les trois types de carte utilisées. En effet, ces résultats peuvent s'expliquer par le fait que  \textrm{\emph{SPaCov}} ne recherche que les séquences maximales dans les modèles de mouvement.\\
\subsection{Résultats de la méthode  \textrm{\emph{HeSPiC}}}
Pour évaluer davantage les performances de la méthode  \textrm{\emph{HeSPiC}}, nous avons défini la zone de couverture à $4,8 km ^ {2}$ et la portée de communication à $0,3 m$  pour différentes topologies routières. Les paramètres de simulation de ce scénario sont détaillés dans le Tableau \ref{caractHespic}.\\

 \begin{table}[!ht]
  \centering

\begin{tabular}{|c|c|c|}
  \hline
 \multirow{3}{*}{Scénario} & Taille du message &  2312 Bytes \\  \cline{2-3}
            &Fréquence de Messages&0.5 Hz\\ \cline{2-3}
            &Slot time&13 us\\ \cline{2-3}
            &\#Runs&30 fois\\ \cline{2-3}
            &Portée de communication (Km) & 0.3 \\ \cline{2-3}
            &Nombre de véhicules & 400 \\ \cline{2-3}
  \hline \cline{1-3}
\end{tabular}
\caption{Les paramètres de simulation pour  \textrm{\emph{HeSPiC}}}\label{caractHespic}
\end{table}

%\end{subfigure}\\[1ex]
%\begin{subfigure}{.5\linewidth}
\begin{figure}[!ht]
\centering
\includegraphics[scale=0.6]{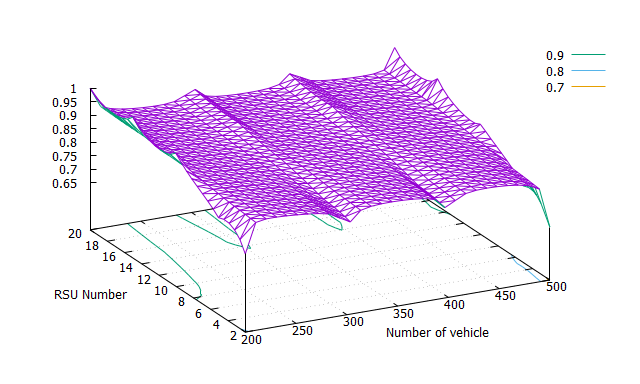}
\caption{Tozeur (Tunisia)}
\label{CR3D:T}
%\end{subfigure}
%\caption{Coverage ratio w.r.t variation in number of RSU and number of vehicles under different road topologies.}
\end{figure}
Les Figures \ref{CR3D:T} et \ref{CR3D:ST} représentent la relation entre les trois facteurs: nombre de véhicules, le nombre des \textit{RSU}s et le taux de couverture. Comme prévu, le maximum, pour toutes les courbes et pour toutes les topologies routières, est atteint lorsque le nombre des \textit{RSU}s est assez élevé par rapport au nombre de véhicules. De plus, la comparaison entre les différentes topologies routières montre une corrélation entre la complexité de la carte et le taux de couverture. En effet, le taux de couverture maximal devient rapidement atteint lorsque la carte devient moins complexe (voir la partie basse de la courbe de Menzel Bourguiba Figure \ref{CR3D:MB}). \\
\begin{figure}[!ht]
\centering
%\begin{subfigure}{.5\linewidth}
\centering
\includegraphics[scale=0.6]{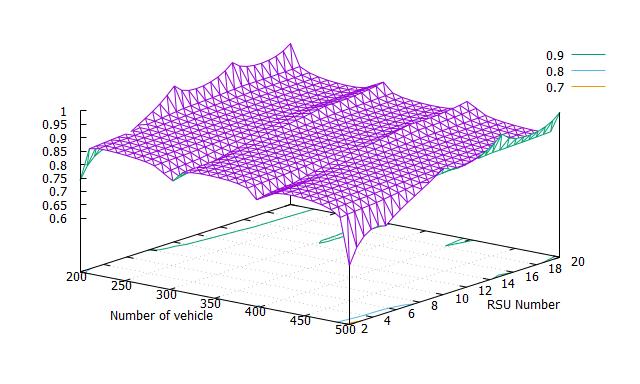}
\caption{Menzel Borguiba (Tunisia)}
\label{CR3D:MB}
\end{figure}
\begin{figure}[!ht]
\centering
\includegraphics[scale=0.6]{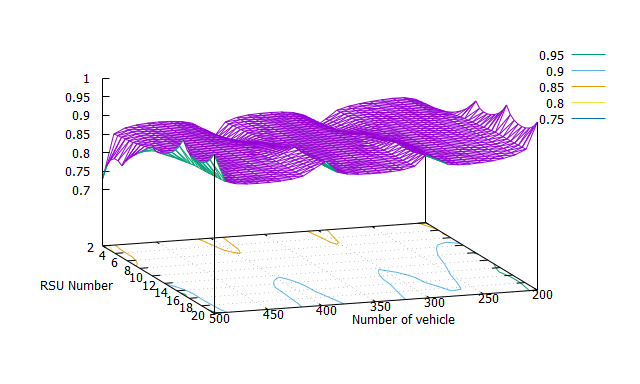}
\caption{Setif (Algeria)}
\label{CR3D:ST}
\end{figure}

\begin{figure}[!ht]
\centering
%\begin{subfigure}{.5\linewidth}
\centering
\includegraphics[scale=1]{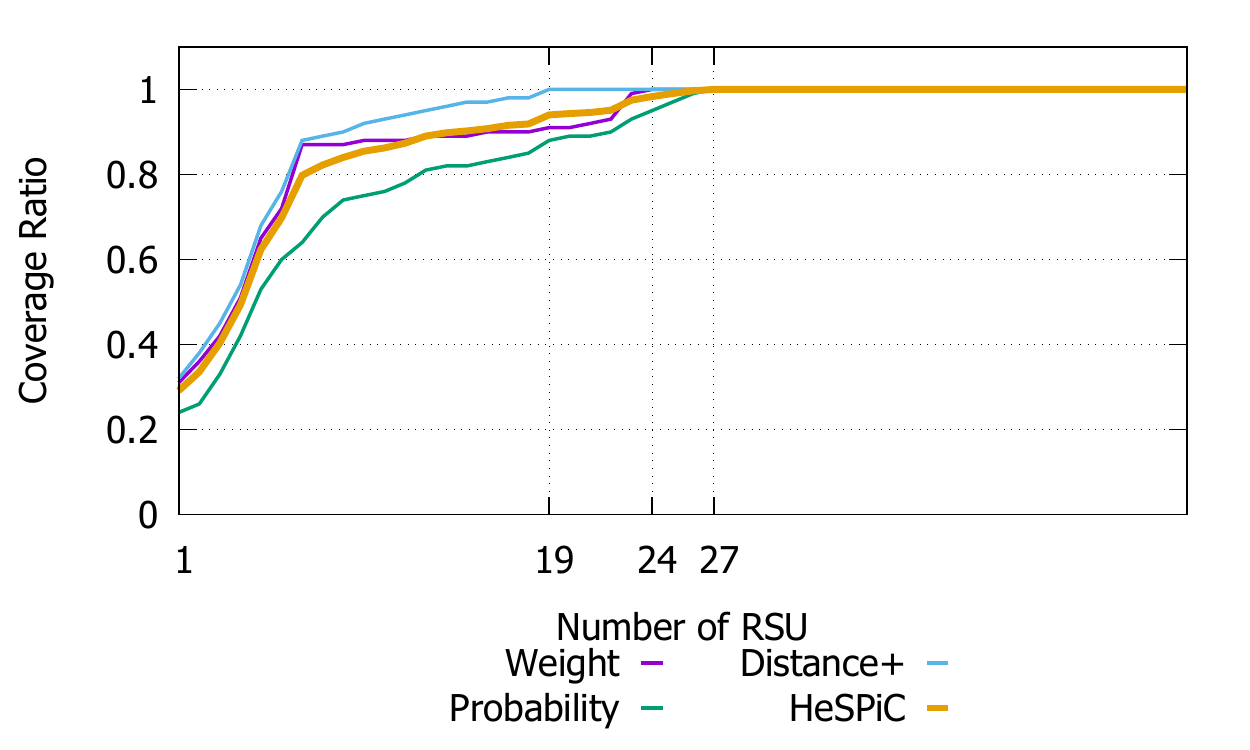}
\caption{L'évolution du taux de couverture de la méthode HeSPiC par rapport à la variation du nombre des \textit{RSU}s pour la ville de Menzel Borguiba.}
\label{RSUCOV:MB}
\end{figure}
%\end{subfigure}%
%\begin{subfigure}{.5\linewidth}
\begin{figure}[!ht]
\centering
\includegraphics[scale=1]{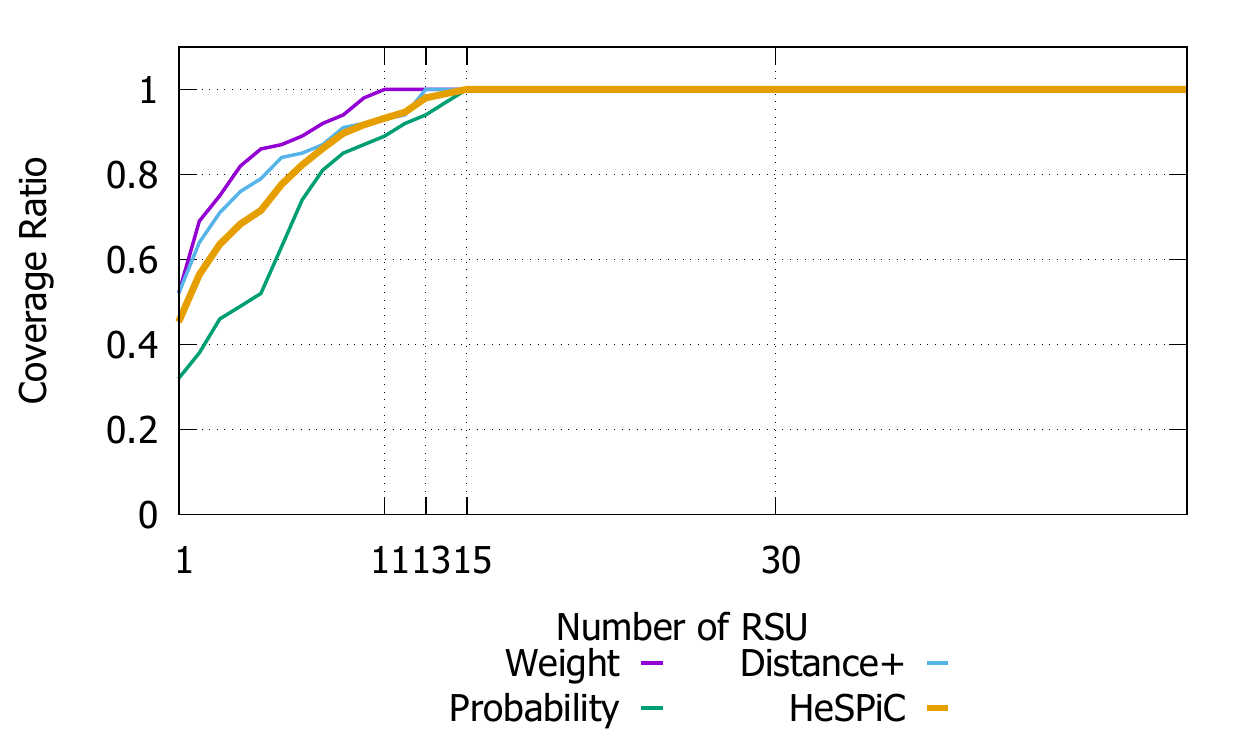}
\caption{L'évolution du taux de couverture de la méthode HeSPiC par rapport à la variation du nombre des \textit{RSU}s de la ville de Setif.}
\label{RSUCOV:ST}
\end{figure}
%\end{subfigure}\\[1ex]
%\begin{subfigure}{.5\linewidth}

Les Figures \ref{RSUCOV:MB}, \ref{RSUCOV:ST} et \ref{RSUCOV:T} représentent l'évolution du taux de la couverture par rapport à la variation du nombre de RSU sous différentes topologies de carte. Comme mentionné ci-dessus, le taux de couverture augmente dans la mesure où le nombre des RSUs augmente. En effet, la méthode HeSPiC atteint le maximum (Ratio de Couverture = $1$) entre $19$ et $27$ RSUs sur la carte de Menzel Bourguiba (voir Figure \ref{RSUCOV:MB}). En revanche, il ne lui faut que $13$ RSUs pour avoir la même couverture maximale sur la carte de Sétif (Figure \ref{RSUCOV:ST}). En outre, nous observons que chacune des métriques utilisées dans HeSPiC (la probabilité, le poids et la distance) permet d'obtenir un meilleur rapport de couverture sur une carte particulière en fonction de sa complexité, par exemple, la distance est la meilleure métrique qui permet d'atteindre une couverture maximale avec un minimum de RSU pour la carte de Menzel Bourguiba. Cependant, le poids est le meilleur pour la carte de Sétif et la métrique de probabilité est la meilleur dans le cas de Tozeur (Figure \ref{RSUCOV:T}).\\
\begin{figure}[!ht]
\centering
\includegraphics[scale=1]{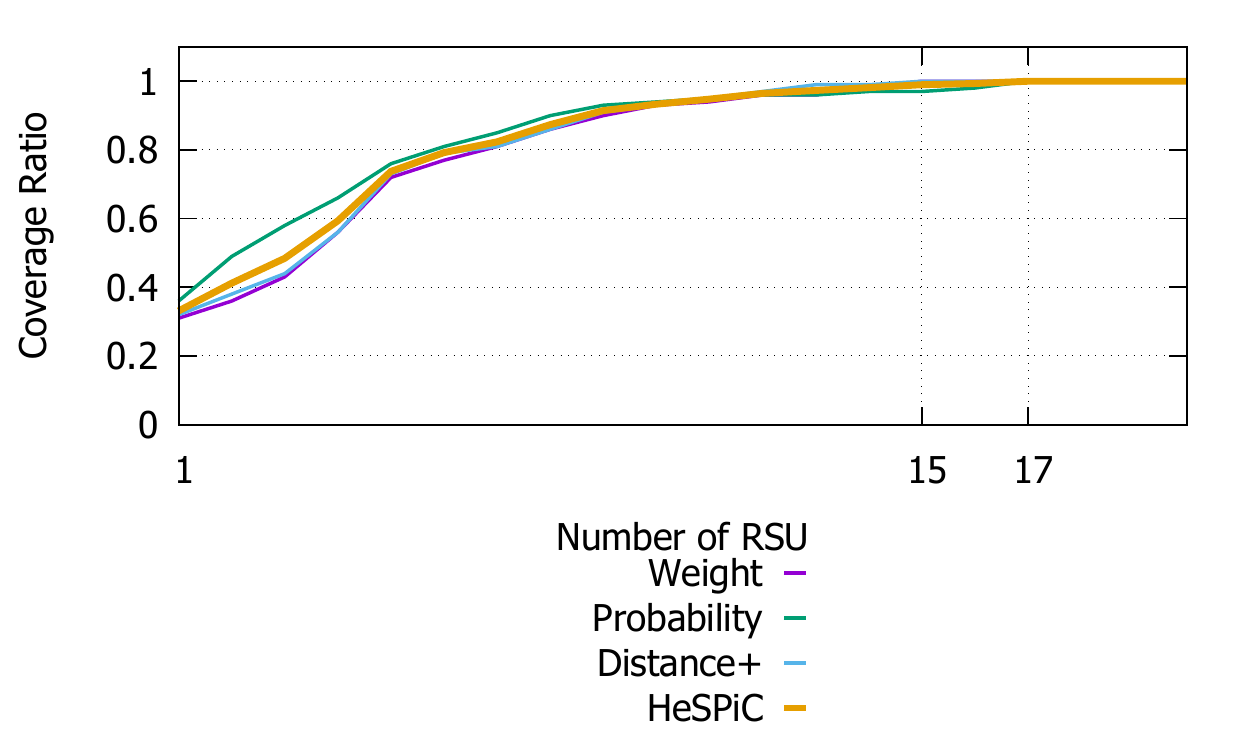}
\caption{Ville de Tozeur}
\label{RSUCOV:T}
\caption{L'évolution du taux de couverture de la méthode  \textrm{\emph{HeSPiC}} par rapport à la variation du nombre de RSUs pour la ville de Tozeur.}
\end{figure}

La Figure \ref{Crange} représente l'évolution du taux de couverture par rapport à la variation de la portée de communication sous différentes topologies routières. En effet, pour un nombre de véhicules égal à $400$ nous obtenons un nombre des \textit{RSU}s égal à $10$. Comme prévu, le taux de couverture augmente à mesure que la portée de communication augmente. En effet, une portée de communication relativement élevée permet une large couverture de la zone, ce qui implique un taux de couverture élevé.\\
\begin{figure}[!ht]
  \begin{center}
  \includegraphics[scale=1]{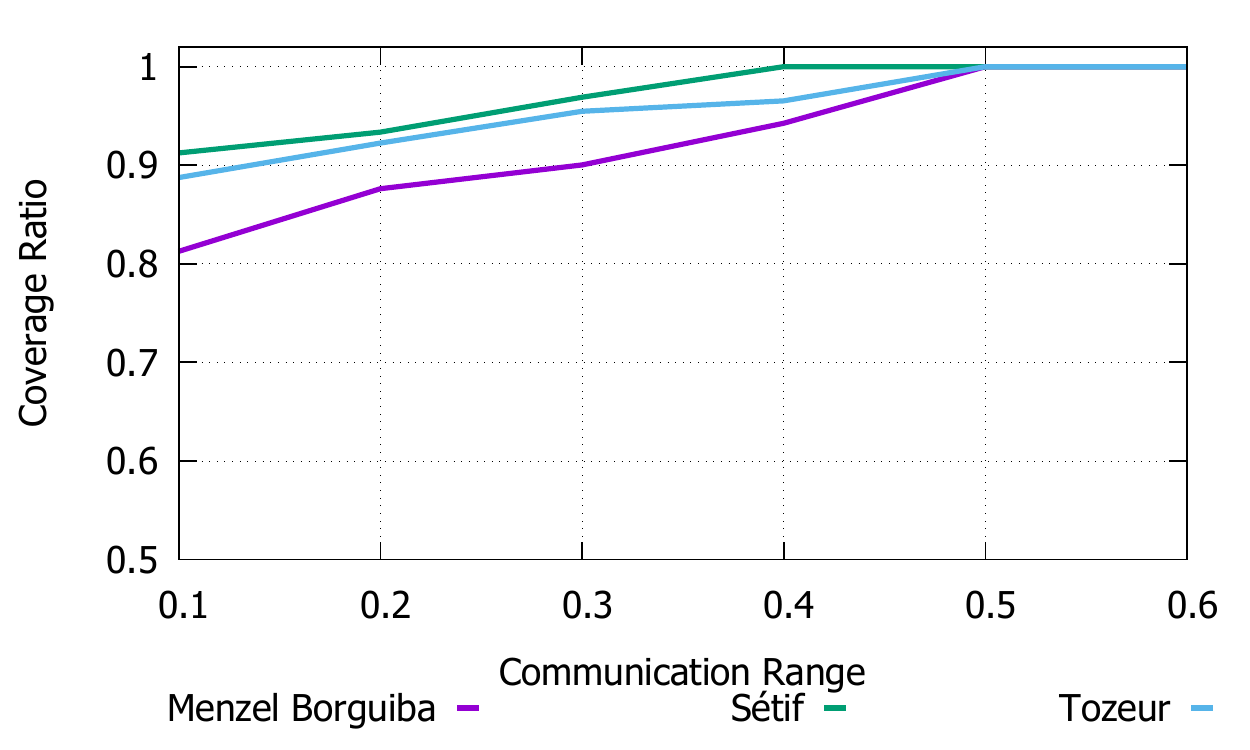}
  \caption{L'évolution du taux de couverture par rapport à la variation de la portée de communication sur différentes topologies routières en utilisant la méthode  \textrm{\emph{HeSPiC}} (vehicles = $400$, \textit{RSU}s et $K = 10$).} 
  \label{Crange}
\end{center}
\end{figure}
\subsection{Résultats de la méthode \textrm{\emph{MIP}}}
La Figure \ref{cost} montre l'évolution du coût de déploiement de  \textrm{\emph{MIP}} par rapport à  \textrm{\emph{SPaCov+}} \cite{Bch19} et MPC \cite{B1} sous différentes superficies des zones couvertes. En effet, le coût de déploiement augmente proportionnellement à la surface de couverte. En effet, plus de \textit{RSU} sont nécessaires pour couvrir une vaste zone. De plus, nous observons que  \textrm{\emph{MIP}} a un faible coût de déploiement par rapport à  \textrm{\emph{MPC}} et  \textrm{\emph{SPaCov+}} pour toute variation de la zone couverte. En effet, cela diminue le coût global de déploiement par rapport à la méthode  \textrm{\emph{MPC}} d'environ $13$\%, ceci est dû au fait que notre méthode  \textrm{\emph{MIP}} considère ($i$) l'utilité de chaque séquence de croisement en privilégiant les petites séquences, ($ii$) l'ordre des croisements lors de l'extraction des modèles de mobilité, ce qui conduit sans aucun doute à réduire le nombre total de patterns extraits. Par conséquent, moins de \textit{RSU} sont nécessaires pour couvrir les modèles extraits, ce qui entraîne une diminution du coût de déploiement.\\

\begin{figure}[ht]
\begin{center}
\includegraphics[scale=1]{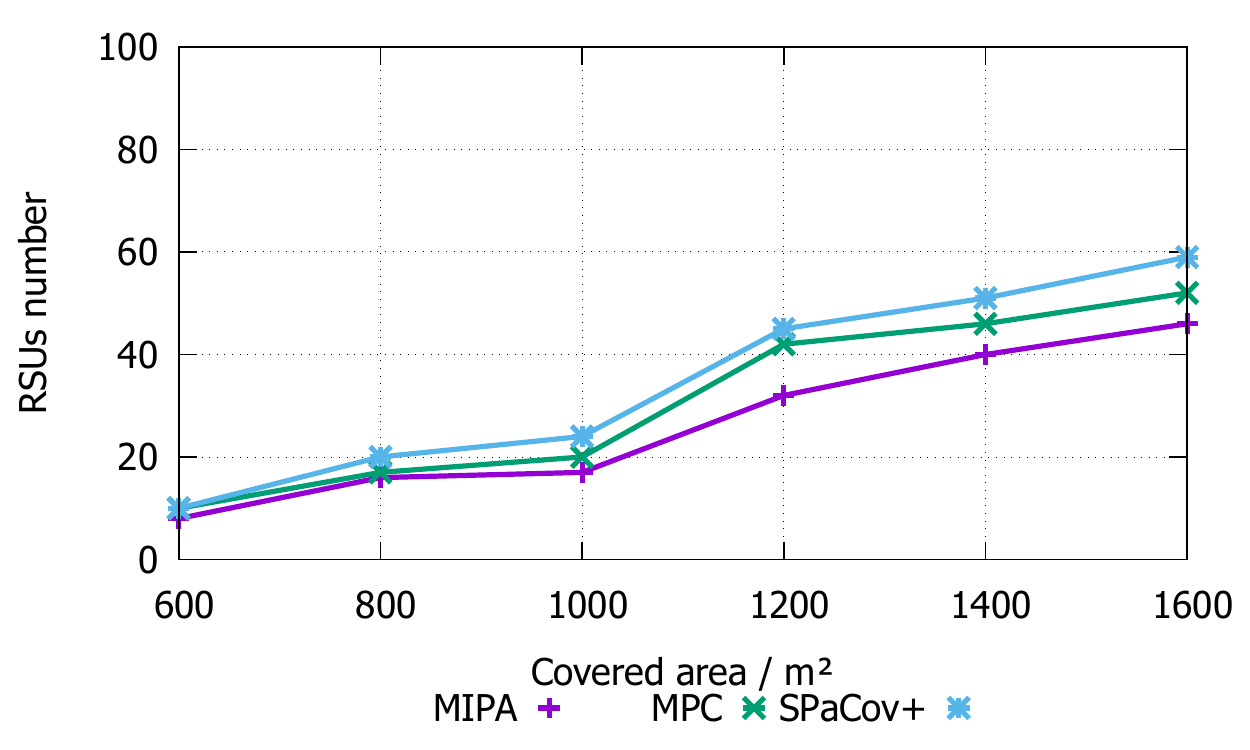}
\caption{Variation du nombre de \textit{RSU}s par rapport à la zone à couvrir pour la ville de Menzel Borguiba.}
\label{cost}
\end{center}
\end{figure}
\begin{figure}[ht]
\begin{center}
\includegraphics[scale=1]{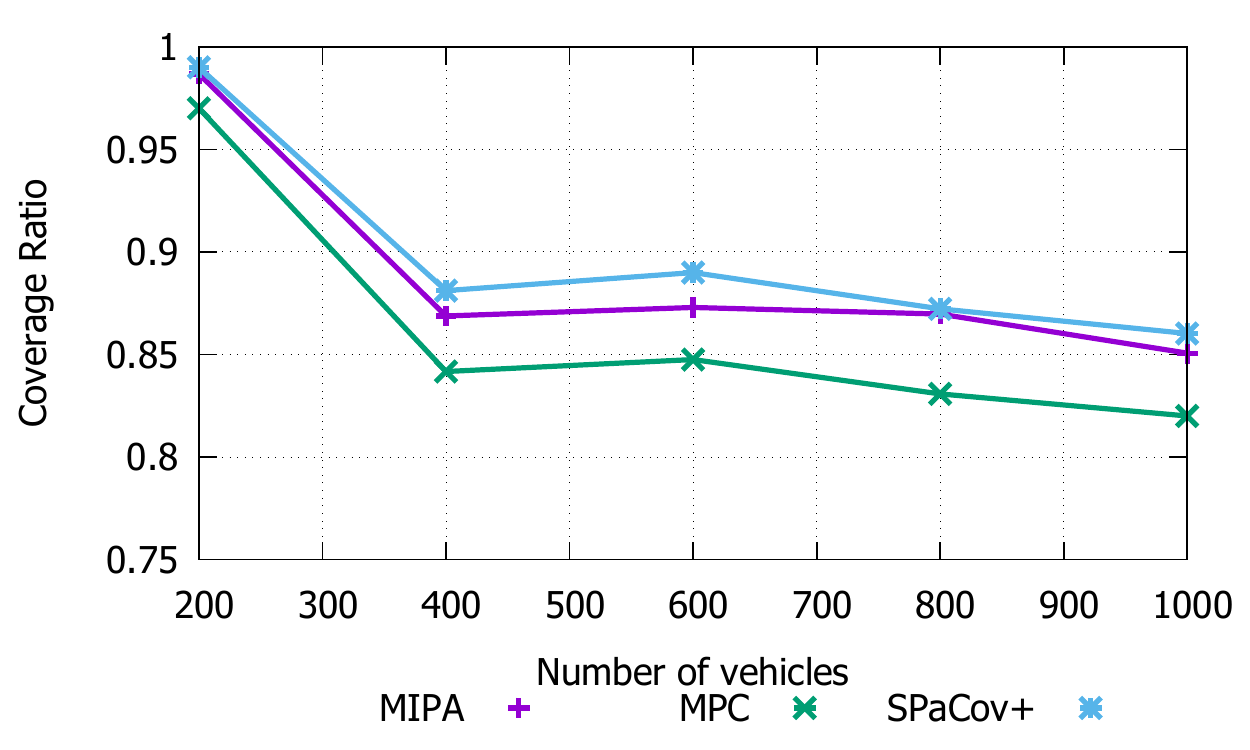}
\caption{Variation du taux de couverture par rapport au nombre de véhicules pour la ville de Menzel Borguiba.}
\label{cost:MIP}
\end{center}
\end{figure}
%%%%%%%%%%%%%%%%%%%%%%%%
\begin{figure}[!ht]
\begin{center}
\includegraphics[scale=1]{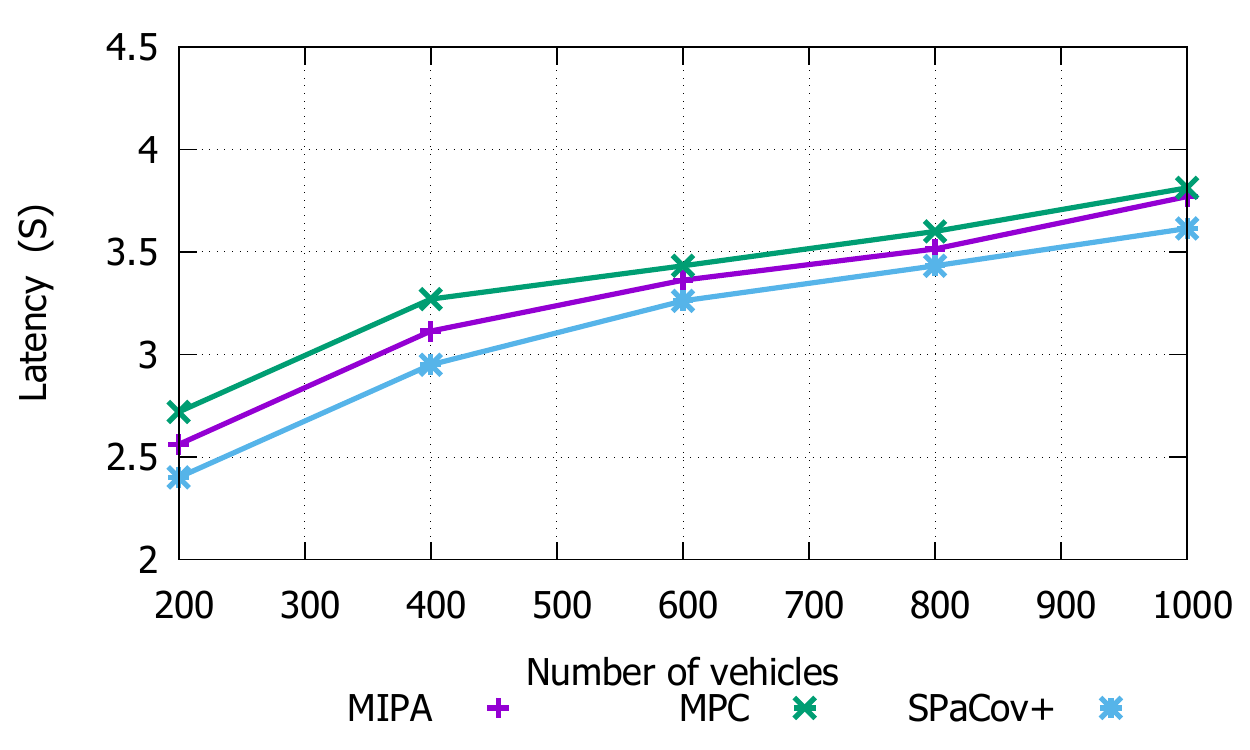}
\caption{Variation du temps de latence par rapport au nombre des véhicules pour la ville de Menzel Borguiba.}
\label{latency:MIP}
\end{center}
\end{figure}

\begin{figure}[!ht]
\begin{center}
\includegraphics[scale=1]{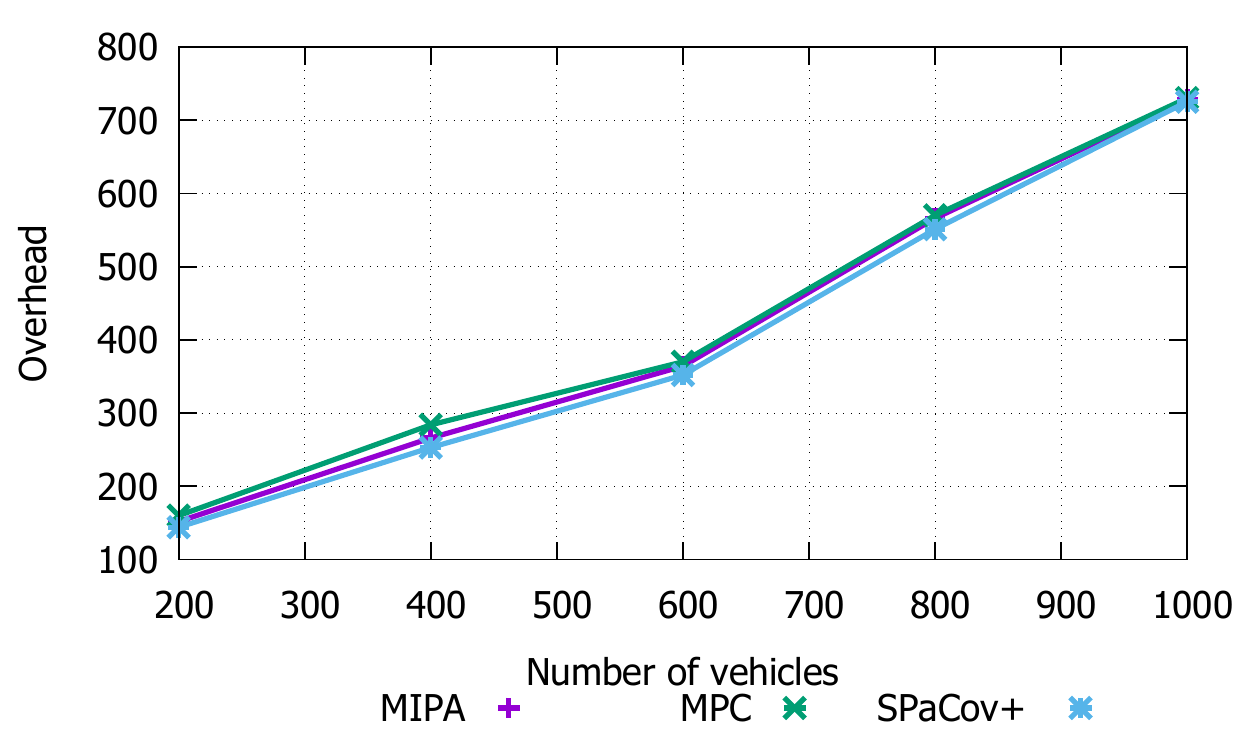}
\caption{Variation de l'overhead par rapport au nombre des véhicules en circulation pour la ville de Menzel Borguiba.}
\label{overhead:MIP}
\end{center}
\end{figure}

La Figure \ref{cost:MIP}, montre l'évolution du taux de couverture par rapport au nombre de véhicules en mouvement. En effet, il est normal que le taux de couverture soit inversement proportionnel à la densité du trafic, car l'augmentation du nombre de véhicules dans le réseau entraînera une augmentation du taux de paquets perdus, de sorte que de nombreux véhicules ne recevront pas les messages envoyés par les \textit{RSU}s. Nous observons que les taux de couverture de deux approches de couverture  \textrm{\emph{MIP}} et  \textrm{\emph{SPaCov+}} sont plus ou moins les mêmes. Cependant, le ratio de couverture du MPC est un peu inférieur à celui des autres méthodes. Par conséquent, cela peut s'expliquer par le fait que  \textrm{\emph{MPC}} couvre un ensemble de croisement spatialement non connectées (ne prend pas en compte l'intersection entre les croisements).\\

La Figure \ref{latency:MIP}, représente l'évolution de temps de latence de la méthode  \textrm{\emph{MIP}} par rapport à  \textrm{\emph{SPaCov+}} et  \textrm{\emph{MPC}} sous différentes densités de véhicules. Nous constatons que la latence des trois approches diminue à chaque fois que le nombre de véhicules augmente. En effet, l'augmentation du nombre de véhicules surcharge dramatiquement les \textit{RSU}s, ce qui entraîne une augmentation de la latence. De plus, la Figure \ref{latency:MIP} montre que les trois approches ont presque une même latence pour toutes les variations de densité de véhicules. Les valeurs de latence de  \textrm{\emph{MIP}} se situent entre les valeurs de deux autres méthodes sous différentes densités de véhicules.\\

La Figure \ref{overhead:MIP}, montre l'évolution de la surcharge de la méthode  \textrm{\emph{MIP}} par rapport à  \textrm{\emph{SPaCov+}} et  \textrm{\emph{MPC}} sous différentes densités de villes. Nous remarquons que l'overhead augmente au fur et à mesure que le nombre de véhicules augmente. En effet, plus le nombre de véhicules est élevé, plus les communications réseaux et l'overhead sont élevés. En outre, la Figure \ref{overhead:MIP} montre que toutes les approches ont presque un même overhead pour toutes les variations de densité de route. \\
Pour résumer, les résultats de la simulation mettent en évidence que notre méthode  \textrm{\emph{MIP}} offre un meilleur compromis entre le coût, le taux de couverture, l'overhead et le temps de latence par rapport aux autres méthodes à savoir  \textrm{\emph{SPaCov+}} et  \textrm{\emph{MPC}}. En effet, la méthode  \textrm{\emph{MIP}} permet de diminuer le coût global de déploiement des \textit{RSU}s et la surcharge par rapport aux autres méthodes, d'environ $13$\% et $2$\% respectivement, tout en conservant les mêmes résultats en termes de taux de la couverture et du temps de latence.

\section{Conclusion}
Dans ce chapitre, nous avons mené une étude expérimentale ayant comme objectif la quantification des unités de bord, ainsi que le taux de la couverture dans différents scénarios de test et dans plusieurs topologies routières de différente villes.\\
À la lumière de cette étude, nous avons constaté que le nombre des \textit{RSU}s déployés par nos méthodes, sont dans la plupart des cas, de taille inférieur à celle des méthodes proposées dans la littérature. De plus, le fait que le taux de la couverture, assuré par nos méthodes, soit respectable par rapport au nombre des \textit{RSU}s utilisés. Nous constatons que le nombre de \textit{RSU}s par rapport aux taux de la couverture est meilleur aux autres méthodes proposées dans la littérature.

\addcontentsline{toc}{chapter}{Conclusion générale}
%\markboth{Conclusion générale}{Conclusion générale}

\chapter*{Conclusion générale et Perspectives}
Dans ce travail, nous avons présenté les notions de bases utilisées par nos approches à savoir celles qui concernent les patterns séquentiels et la théorie des graphes ainsi que quelques fondements mathématiques concernant les lois probabilistes. Par la suite, nous avons fait une étude exhaustive des méthodes travaillant sur les systèmes de transport intelligent STI, nous avons également mené une étude critique des méthodes qui s'intéressent aux déploiements des \textit{RSU}s dans les réseaux VANET afin de déceler les limites et les avantages de chacune d'elles. Cependant, nous avons constaté qu'aucune de ces méthodes n'exploitent les modèles de mobilité des véhicules sous forme des patterns séquentiels. Tout de même, nous avons présenté les principales méthodes d'extraction des patterns séquentiels ainsi que les principales méthodes de couverture des hypergraphes.\newline

Afin de pallier ces lacunes, nous avons introduit trois approches de déploiement des \textit{RSU}s dans un réseau véhiculaire VANET la première, nommé  \textrm{\emph{SPaCov/SPaCov+}}, basée sur deux points important ($i$) les patterns séquentiels fréquents/rares et ($ii$) la couverture de ces patterns en utilisant les traverses minimales des hypergraphes. La deuxième approche, nommé  \textrm{\emph{HeSPiC}}, consiste à placer un nombre de \textit{RSU} fixé d'avance par l'utilisateur (méthode obéit à une contrainte budgétaire), l'approche utilise une fonction mathématique de classement afin d'ordonner les croisements en termes d'importance. La dernière approche, nommé  \textrm{\emph{MIP}}, consiste à réduire le nombre des transactions de départ et se restreindre à une base de données des mouvements des véhicules condensée qui ne contient que des transactions, évaluer par la méthode, comme très utiles. En effet, la fonction d'utilité des transactions est la partie originale de cette dernière approche puisqu’une preuve mathématique montre que les séquences dite utile par l'approche  \textrm{\emph{MIP}}, ne sont ni maximaux, ni minimaux et ni fermés en terme d'inclusion.\\

Afin de tester nos méthodes, nous avons introduit trois nouveaux algorithmes liés aux trois approches proposées, à savoir  \textrm{\emph{SPaCov}},  \textrm{\emph{HeSPic}} et  \textrm{\emph{MIP}}. Tout de même, nous avons prouvé la correction et la complétude de trois algorithmes. De plus, nous avons évalué la complexité au pire des cas, pour les trois algorithmes. Afin de s'assurer des résultats et la cohérence entre la partie théorique et la partie empirique, nous avons mené une étude expérimentale dans laquelle nous avons évalué nos méthodes avec plusieurs scénarios de test et sur différentes cartes routières. Enfin, Nous avons évalué l'aspect budgétaire en corrélation avec la bonne couverture de réseau.\\

À la lumière de cette étude, nous avons constaté que dans la plupart des scénarios et sur différentes cartes routières, la taille de \textit{RSU}s sont plus réduits que celles des méthodes de la littérature tout en gardant une bonne couverture réseau entre les véhicules. Nous avons prouvé expérimentalement que nos objectifs définis d'avance ont été atteints, à savoir pallier les problèmes liés à la relation entre la réduction du nombre des \textit{RSU}s sans trop perdre coté couverture. Ces problèmes peuvent se résumer dans les points suivants:
\begin{itemize}
  \item Un nombre très important des véhicules dans des cartes très complexes implique une taille importante de données de véhicules en mouvement.
  \item Le coût budgétaire de déploiement des \textit{RSU}s peut atteindre un montant assez élevé, ce qui rend le déploiement aveugle de \textit{RSU}s une solution obsolète.
\end{itemize}
Ces problèmes ont été résolus, puisque nous avons offert un taux de couverture plus élevé, que la plus part des méthodes de la littérature.
\\
La couverture dans les réseaux véhiculaires constitue une piste de recherche intéressante. Au meilleur de notre connaissance, nos méthodes sont les premières qui ont utilisé les patterns séquentiels combiné avec les hypergraphes et les traverses minimales dans les réseaux VANET. Cet axe de recherche est plein de promesse puisque plusieurs pistes ne sont pas explorées. Nous citons comme perspectives de recherche les point suivants:
\begin{enumerate}
    \item Élargir les types des \textit{RSU}s utilisés et tester d'autres scénarios de simulation à l'exemple des \textit{RSU}s mobiles (utilisation des UAV). Puisqu'ils peuvent donner plus de couverture avec un même nombre de \textit{RSU} par rapport à des approches classiques.
    \item Dans le cas des \textit{RSU}s mobiles, le problème de mouvements des \textit{RSU}s s'ajoute au problème initial, qui est le mouvement des véhicules. Afin de remédier à ce problème, il faut penser à mettre en place un système hybride, qui selon le modèle de mobilité des véhicules, adapte le système de déploiement des \textit{RSU}s.  
    \item Introduire et exploiter les informations multi dimensionnels, afin de mieux prévoir les mouvements futurs des véhicules. Par exemple, le final de la coupe de la Tunisie, qui se déroule au stade olympique de Rades entre 17h et 19h, peut provoquer un mouvement des véhicules anormale, aux alentours du stade de Rades avant et après quelques heures de déroulement du match, par rapport à un dimanche sans aucun évènement dans le même endroit et pendant la même plage horaire. Cependant, avec un système classique, qui ne considère pas les évènements, la prédictions des mouvements des véhicules n'arrive jamais à prévoir un embouteillage aux alentours du lieu d'un évènement, par contre avec un système multi-dimensionnels avec interprétation des événements la prédiction des mouvements futurs des véhicules sera plus adéquate avec plus de précision.   
\end{enumerate}

\nocite{ref1,ref2,ref3,ref4,ref5,ref6,ref7,ref8,ref9,ref10,ref11,ref12,ref13,ref14,ref15,ref16,ref17,ref18,ref19,ref20,ref21,ref22,ref23,ref24}

\bibliography{biblio}
\bibliographystyle{plain}
\end{document}